\theoremstyle{plain}
\newtheorem{theorem}{Theorem}[section]  %
\newtheorem{lemma}[theorem]{Lemma}
\newtheorem{corollary}[theorem]{Corollary}
\newtheorem{claim}[theorem]{Claim}
\newtheorem{conjecture}[theorem]{Conjecture}
\newtheorem{hypothesis}[theorem]{Hypothesis}
\theoremstyle{definition}  %
\newenvironment{proofof}[1]{\begin{proof}[Proof of #1]}{\end{proof}}
\crefname{algocf}{Algorithm}{Algorithms}
\Crefname{algocf}{Algorithm}{Algorithms}
\crefname{claim}{Claim}{Claims}
\Crefname{claim}{Claim}{Claims}
\crefname{Distribution}{Distribution}{Distributions}
\Crefname{Distribution}{Distribution}{Distributions}
\crefname{Protocol}{Protocol}{Protocols}
\Crefname{Protocol}{Protocol}{Protocols}
\DeclarePairedDelimiter{\bk}{(}{)}
\DeclarePairedDelimiter{\Bk}{[}{]}
\DeclarePairedDelimiter{\BK}{\{}{\}}
\DeclarePairedDelimiterX\mysetbase[2]{\lbrace}{\rbrace}{#1\,\delimsize\vert\,#2}
\NewDocumentCommand{\myset}{sO{}m m}{%
  \IfBooleanTF{#1}%
    {\mysetbase*{#3}{#4}}%
    {\mysetbase[#2]{#3}{#4}}%
}
\DeclareMathOperator*{\E}{\mathbb{E}}
\let\Pr\PrAux
\DeclareMathOperator{\poly}{poly}
\DeclareMathOperator*{\ind}{\mathbbm{1}}
\newcommand{\F}{\mathbb{F}}
\renewcommand{\tilde}{\widetilde}
\newcommand{\defeq}{\coloneqq}
\newcommand{\eps}{\varepsilon}
\renewcommand{\epsilon}{\eps}
\newcommand{\Patrascu}{\textup{P{\v{a}}tra{\c{s}}cu}\xspace}
\newcommand{\defn}[1]{\emph{\boldmath\textbf{#1}}}
\newcommand{\numberthis}{\addtocounter{equation}{1}\tag{\theequation}}
\newcommand{\matwrap}[1]{{\begin{matrix}#1\end{matrix}}}
\xpatchcmd\thmt@restatable{%
\csname #2\@xa\endcsname\ifx\@nx#1\@nx\else[{#1}]\fi
}{%
\ifthmt@thisistheone
\csname #2\@xa\endcsname\ifx\@nx#1\@nx\else[{#1}]\fi
\else
\csname #2\@xa\endcsname[{Restated}]
\fi}{}{}
\newcommand{\OPT}{\mathbf{OPT}}
\newcommand{\dmin}{M_{\min}}
\newcommand{\dmax}{M_{\max}}
\renewcommand{\vec}[1]{\vb{#1}}
\newcommand{\GF}{\textup{GF}}
\newcommand{\mfix}{m_{\textup{fix}}}
\newcommand{\mconv}[1][s]{m_{\text{conv}}^{(#1)}}
\newcommand{\mretr}[1][s]{m_{\text{retr}}^{(#1)}}
\title{Optimal Static Dictionary with Worst-Case Constant Query Time}
\author{Yang Hu\thanks{Institute for Interdisciplinary Information Sciences, Tsinghua University. \texttt{y-hu22@mails.tsinghua.edu.cn}.}
\and
Jingxun Liang\thanks{Carnegie Mellon University. \texttt{jingxunl@andrew.cmu.edu}.}
\and
Huacheng Yu\thanks{Princeton University. \texttt{yuhch123@gmail.com}. Supported by Simons Junior Faculty Award - AWD1007164.}
\and
Junkai Zhang\thanks{Institute for Interdisciplinary Information Sciences, Tsinghua University. \texttt{zhangjk22@mails.tsinghua.edu.cn}.}
\and
Renfei Zhou\thanks{Carnegie Mellon University. Partially supported by the MongoDB PhD Fellowship. \texttt{renfeiz@andrew.cmu.edu}.}
}
\date{}
\begin{document}

\maketitle

\begin{abstract}
    In this paper, we design a new succinct static dictionary with worst-case constant query time.
    A dictionary data structure stores a set of key-value pairs with distinct keys in $[U]$ and values in $[\sigma]$, such that given a query $x\in [U]$, it quickly returns if $x$ is one of the input keys, and if so, also returns its associated value.
    The textbook solution to dictionaries is hash tables.
    On the other hand, the (information-theoretical) optimal space to encode such a set of key-value pairs is only $\OPT:=\log\binom{U}{n}+n\log \sigma$.

    We construct a dictionary that uses $\OPT+n^{\epsilon}$ bits of space, and answers queries in constant time \emph{in worst case}.
    Previously, constant-time dictionaries are only known with $\OPT+n/\poly\log n$ space~\cite{patrascu2008succincter}, or with $\OPT+n^{\epsilon}$ space but \emph{expected} constant query time~\cite{yu2020nearly}.
    We emphasize that most of the extra $n^{\epsilon}$ bits are used to store 
    \begin{itemize}
        \item a lookup table that does not depend on the input, and
        \item random bits for hash functions.
    \end{itemize}
    The ``main'' data structure only occupies $\OPT+\poly\log n$ bits.
\end{abstract}

\section{Introduction}
Dictionaries are one of the most fundamental data structures, with numerous applications in computer science.
A \defn{static dictionary} stores a set of $n$ pairs $(x_i,v_i)$, where $x_i\in [U]$ are \emph{distinct} keys and $v_i\in[\sigma]$ are the associated values.
Given a query $x$, it returns whether there is a pair $(x_i,v_i)$ with $x_i=x$, and if so, it also returns $v_i$.
The \defn{membership data structures} are the special case of $\sigma = 1$, i.e., a membership data structure stores a set of keys, such that given any $x$, it returns if $x$ is in the set.
In most parts of this paper, we focus on static membership data structures, but we will show that our techniques also generalize to general static dictionaries.\footnote{There is also a dynamic setting for dictionaries, which further requires to support efficient insertions and deletions of a key-value pair $(x, v)$. However, we only focus on the static version in this paper.}

The textbook solution to dictionaries is the hash tables~\cite{clrsbook}.
Pairwise independent hash functions yield very few hash collisions in expectation.
By applying the standard chaining technique to the hash table, the queries can be answered in expected constant time.
Perfect hashing~\cite{fredman1984storing} further eliminates all collisions by utilizing a two-level hashing, and guarantees \emph{worst-case} constant query time.
In general, hash tables take $O(n\log U)$ bits to store the keys.
This is, in fact, suboptimal in space.
Information theoretically, $\log\binom{U}{n}$ bits are sufficient to encode the set, hence, the best possible space for storing the key-value pairs is
\[
    \OPT:=\log\binom{U}{n}+n\log \sigma.
\]
Note that when $U$ and $n$ are close, $n\log U$ can be asymptotically larger than $\log\binom{U}{n}$, and for $U=n^{1+\Theta(1)}$, the two quantities differ by a constant factor.

There has been a rich literature in designing space-efficient membership data structures and dictionaries~\cite{CW79,TY79,Yao81a,fredman1984storing,FNSS92,FN93,FM95,Mil96,MNSW98,BM99,pagh2001low,Pagh01b,BMRV02,patrascu2008succincter,yu2020nearly}.
In particular, it is known how to construct \emph{succinct} dictionaries, i.e., dictionaries using space $\OPT+o(\OPT)$.
The $o(\OPT)$ term in the space bound is usually called the \defn{redundancy}.
The state-of-the-art dictionaries are by P\v{a}tra\c{s}cu~\cite{patrascu2008succincter} achieving redundancy $n/\poly\log n$ and worst-case constant query time, and by Yu~\cite{yu2020nearly} achieving redundancy $n^{\epsilon}$ and expected constant query time.

In this paper, we achieve the best of the previous two works: For worst-case constant query dictionaries, we improve the redundancy from $n / \poly \log n$ to $n^{\epsilon}$.

\begin{restatable}{theorem}{MainTheoremIntro}
  \label{thm:intro}
  In the word RAM model with word size $w = \Theta(\log n)$, there is a static dictionary storing $n$ keys from a universe of size $U \in [2n,\, \poly n]$ and values from a universe of size $\sigma \in [1,\, \poly n]$, using $\OPT + \poly \log n$ bits of space, assuming access to a fixed lookup table of $n^\eps$ bits and a hash function that can be encoded using $n^\eps$ bits, such that:
  \begin{itemize}
  \item The construction algorithm succeeds with high probability in $n$.
  \item The query algorithm runs in worst-case constant time.
  \end{itemize}
\end{restatable}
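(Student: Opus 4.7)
The plan is to build on Yu's framework, which attains $n^\eps$ redundancy with \emph{expected} constant query time via a random hash function, and upgrade the query guarantee from expected to worst case while keeping the redundancy of the main structure at $\poly\log n$ bits. At a high level I would use a random hash function $h:[U]\to[B]$ with $B=n/\poly\log n$, stored inside the $n^\eps$-bit hash budget, to partition the keys into buckets of expected size $\poly\log n$. Each bucket would be stored using an essentially optimal $\log\binom{U/B}{k_i}+k_i\log\sigma$-bit encoding, and the pre-computed $n^\eps$-bit lookup table would decode the payload of a single bucket in constant time.

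The two central challenges are (i) encoding bucket contents and boundaries so that the \emph{global} overhead beyond $\OPT$ is only $\poly\log n$ bits, and (ii) guaranteeing constant query time in the worst case, including for atypically large buckets. For (i), a natural route is to adopt \Patrascu's spillover trick: rather than laying out the buckets as independent bit strings (which would cost at least one wasted bit per boundary, i.e.\ $\Omega(B)$ bits total), pack the bucket encodings together inside a shared arithmetic-coded array, letting neighbouring buckets share fractional bits, so the only inter-bucket overhead is a constant-depth navigation summary of $\poly\log n$ bits and an $O(B\log\log n)$-bit side structure that itself is recursively compressed. For (ii), I would (a) draw $h$ from a sufficiently high-independence family so that, with high probability in $n$, every bucket has size at most $t=O(\log n/\log\log n)$, and (b) still install a secondary ``overflow'' dictionary into which any key landing in a bucket that would exceed a fixed hard threshold is redirected; this overflow set is chosen to contain $o(n/\poly\log n)$ items w.h.p., so it can itself be stored by a recursive instance of the same construction on a much smaller input, contributing only $o(n^\eps)$ to the total space.

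Concretely, the steps in order are: first, formalise $h$ and prove the tail bound on maximum bucket size using the independence of the hash family; second, describe the per-bucket succinct encoding together with the lookup-table based decoder, verifying that both membership and value retrieval take $O(1)$ word operations on any $\poly\log n$-bit bucket; third, glue the buckets with a P\v{a}tra\c{s}cu-style spillover/arithmetic packing, establishing $O(1)$ bucket-locate time and $\poly\log n$ total glue redundancy; fourth, install the overflow dictionary and bound its size and its recursive space cost; fifth, select the parameters $B$ and the bucket threshold to balance glue redundancy, lookup-table size, and hash-function size so that the main structure comes in at $\OPT+\poly\log n$ bits while the auxiliary pieces fit in $n^\eps$ bits.

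The main obstacle will be the combination of the third and fourth steps: to keep all bucket-boundary information and the overflow pointers inside a $\poly\log n$-bit global budget, the standard ``store a size per bucket plus a rank/select index'' approach is too expensive by a factor of roughly $B$, so the argument must rely on a spillover-style shared-fractional-bit layout while simultaneously handling worst-case-large buckets. Showing that such a layout still admits \emph{worst-case} $O(1)$ navigation (and not merely the expected-case navigation sufficient for \cite{yu2020nearly}) is, I expect, the essential new technical ingredient, and is where most of the proof effort will have to go.
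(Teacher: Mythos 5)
Your high-level setup (random hash into $\poly\log n$-sized buckets, per-bucket optimal encoding with a lookup-table decoder, and a shared spillover-style budget) matches the paper's Sections 4.1 and 4.3. But the crucial step, gluing the variable-length bucket encodings with worst-case $O(1)$ navigation and only $\poly\log n$ global redundancy, is exactly where your plan runs into a wall, and you have not identified a way over it. The standard spillover/arithmetic packing you invoke still requires the query algorithm to \emph{locate} bucket $i$ inside a concatenated layout, and that locating task is a partial-sum query. Partial sums with constant worst-case query time provably require $\Omega\bigl((n/B)/\poly\log n\bigr)$ bits of redundancy \cite{patrascu2010cellprobe,viola2023new}; with $B=\poly\log n$ this is $n/\poly\log n$, not $\poly\log n$. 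This is precisely why \cite{patrascu2008succincter} tops out at $n/\poly\log n$ redundancy, and why \cite{yu2020nearly} settles for \emph{expected} query time. Calling the side structure ``recursively compressed'' does not escape the bound, and the overflow dictionary does not help either, since the bottleneck is the common-case navigation, not a few outlier buckets.

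The paper's actual new ingredient is to stop trying to locate the pieces at all. Instead of laying out the $D_i$ contiguously and navigating to them, it treats the content of ``the $j$-th word of $D_i$'' as a value indexed by the key $(i,j)$ and stores the variable-length tails and the spills in a \emph{retrieval data structure}: a sparse random matrix $A$ is fixed, the stored vector $\vec b$ solves $A_{T,*}\vec b = (\text{answers})$, and a query reads the nonzero entries of one row of $A$. There is no partial sum to compute. To get constant (rather than $O(\log n)$) query time, the paper introduces an \emph{augmented retrieval}: the large fixed-length prefix $\mfix$ of the encodings is itself enlisted as an ``augmented array'' whose rows are used, via elementary row and column operations, to sparsify each retrieval row down to $O(1)$ nonzeros while preserving full rank (\cref{lem:augmented_retrieval}, \cref{fig:sparsify}); the word-RAM version replaces the fully random matrix by a Cauchy-matrix hierarchy over a tree of blocks (\cref{lem:better_augmented_retrieval}). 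Your proposal does not contain this idea or any substitute for it, so as written it cannot reach $\poly\log n$ redundancy with worst-case $O(1)$ queries; it would plateau at \Patrascu's $n/\poly\log n$ bound.
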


We remark that the data structure uses randomness to sample a hash function, and this is the only probabilistic part that occurs in the construction algorithm: It must find a ``good'' hash function, and a random hash function is ``good'' with $1-1/\poly n$ probability.
Once a ``good'' hash function is used, the query time is constant in the worst case, independent of the choice of the hash function.
Hence, one could also repeatedly sample hash functions during the construction, until a good hash function is found.
In this way, the construction algorithm can be made Las Vegas.
By storing the lookup table and the hash function as part of the data structure, the total redundancy is still only $n^{\epsilon}$.
Moreover, since the lookup table does not depend on the data and a random hash function works with high probability, if we store at most $\poly n$ dictionaries, the lookup table and the hash function can be shared among all dictionaries, and only one copy needs to be stored.

We also remark that in the \emph{cell-probe} model, where computation is free and we only charge the number of memory accesses, the total redundancy is only $\poly\log n$ by hardwiring the lookup table, and by applying a variant of Newman's theorem~\cite{Newman91}, which allows us to hardwire $\poly n$ hash functions and remember which hash function we are using.

\bigskip

To prove the theorem, we propose a generic way to store variable-length data structures.
Assume we are given a list of data structures $D_i$, where $D_i$ has size $m_i$.
The sizes $m_i$ are not known in advance, and may vary based on the actual input.
We show that these data structures can be jointly stored with \emph{almost no} extra redundancy.
Note that naively concatenating the data structures would require us to know $m_1+\cdots+m_{i-1}$ in order to locate $D_i$, which results in extra redundancy.
To achieve this, we make use of a new \emph{retrieval} data structure.
We give more details in the next section.

\subsection{Related Work}
The FKS perfect hashing scheme~\cite{fredman1984storing} takes $O(n\sqrt{\log n})$ bits to store the hash function.
It was improved to $O(n)$ bits by subsequent works~\cite{SS90,HT01}.
It was also shown~\cite{FNSS92,FN93} that the hash table and the hash function can be jointly stored in space $n\lceil\log U\rceil + n\lceil\log \sigma\rceil$.
Brodnik and Munro~\cite{BM99} constructed the first succinct dictionary, with space $\OPT+O(\OPT/\log\log\log U)$.
It was improved by Pagh~\cite{pagh2001low} to $o(n)$ bits of redundancy, then by P\v{a}tra\c{s}cu~\cite{patrascu2008succincter} to $n/\log^c n$ bits for any constant $c$, and by Yu~\cite{yu2020nearly} to $n^{\epsilon}$ bits for any constant $\epsilon$ but with expected (constant) query time.

In the \emph{bit-probe} model, where the query algorithm can only access one bit of the data structure in constant time,  Buhrman, Miltersen, Radhakrishnan, and Venkatesh~\cite{BMRV02} showed that for $O(\OPT)$ space, the query time must be $\Omega(\log \frac{U}{n})$.
Viola~\cite{Viola12a} showed that for $U=3n$, any bit-probe dictionary with query time $q$ must use space $\OPT+n/2^{O(q)}-\log n$.

For dynamic dictionary, where we further allow insertions and deletions of key-value pairs, the state-of-the-art solution~\cite{bender2022optimal} has $O(n\underbrace{\log\cdots\log}_{k} n)$ bits of redundancy for query time $k$, which has been shown to be optimal~\cite{li2023tight}.

\section{Technical Overview}\label{sec:overview}
In this section, we give an overview of our new dictionary.
We will focus on constructing a membership data structure for simplicity.
To begin with, we further assume that the construction algorithm and the query algorithm have \emph{free} access to random bits.
Recall that the word size $w$ is $\Theta(\log U)=\Theta(\log n)$.
The starting point is a worst-case constant time near-optimal dictionary for $(\poly w)$-sized sets from~\cite{yu2020nearly}.
The data structure has at most ``$1/\poly n$ bits'' of redundancy, which we will clarify later in this section.

We begin with presenting an (over)simplified version of our dictionary that works under several assumptions.
We will remove these assumptions later.
The simplified dictionary first applies a random permutation to all keys, and partitions the universe $[U]$ into $n/B$ buckets of equal size $V:=U/(n/B)$, where $B=\poly\log n$.
Over the random choice of the permutation, each bucket $i$ has $s_i=B\pm B^{2/3}$ keys with high probability.
Assume this happens for every bucket.
In particular, every bucket has $\poly\log n=\poly w$ keys.
Then we can apply~\cite{yu2020nearly} to construct an optimal dictionary $D_i$ for each bucket $i$ with space $\approx \log \binom{V}{s_i}$ bits.
For now, assume $\log \binom{V}{s_i}$ happens to be $(1/\poly n)$-close to the next integer multiple of $w$, so that each $D_i$ occupies an integer number of words and we have wasted a total of $1/\poly n$ bits of space from all $D_i$.
Given a query $x$, the query algorithm applies the random permutation to $x$, gets the bucket that contains $x$, and then queries the corresponding $D_i$.
It might seem that we are done, but the actual challenge is in jointly storing the $n/B$ data structures $D_i$.

The naive approach is to simply concatenate all $D_i$.
But in order to locate $D_i$ within the concatenated data structure, the total size of the first $i-1$ data structures, $|D_1|+\cdots+|D_{i-1}|$, must be computed.
Unfortunately, this partial sum task has a lower bound: It requires $(n/B)/\poly\log n=n/\poly\log n$ bits of redundancy if given any $i$, we want to compute the sum in constant time~\cite{patrascu2010cellprobe, viola2023new}.
The expected-time dictionary of Yu~\cite{yu2020nearly} divides each $D_i$ into two data structures $D_{i,1}, D_{i,2}$ in a careful way, such that $D_{i,1}$ has \emph{fixed} length, and ``most'' queries can be answered by only accessing $D_{i,1}$.
By concatenating all $D_{i,1}$, each of them can be easily located due to their fixed lengths.
Hence, most queries take constant time by only accessing $D_{i,1}$.
All $D_{i,2}$ are then concatenated while equipped with a low redundancy but $(\log n)$-query-time partial sum data structure to locate each component.
The remaining ``hard'' queries, which rely on $D_{i,2}$, will take a longer query time.
Finally, by randomly shifting the queries, each query is ``hard'' only with small probability.
The expected time of any given query is still a constant.
This approach reduces the set of slow queries, but does not help if we want worst-case constant query time.

In this paper, we circumvent this partial-sum lower bound by using a completely different approach to store all $D_i$.
Our simple, yet crucial, observation is that each $D_i$ does not necessarily have to be stored consecutively in memory; moreover, each word of $D_i$ does not even have to be stored explicitly; we only want each word of $D_i$ to be recoverable in constant time, so that the query algorithm on each $D_i$ can be simulated with the same running time.

For now, assume for simplicity that we have a way to store all $s_i$, and each $s_i$ can be recovered in constant time.
Thus, the size of each $D_i$ is determined.
The concrete task is formally the following: Given data structures $D_1,\ldots,D_{n/B}$, we would like to store them with (almost) no redundancy, such that given a pair $(i,j)$, the $j$-th word in $D_i$ can be recovered in constant time.
This is the main question that we solve, and the solution uses a \emph{retrieval data structure}.

\subsection{Retrieval data structures}
A (static) retrieval data structure stores a set of $n$ key-value pairs, where the keys are from a key space $[N]$, and in our application, the value space is $[2^w]$.
The data structure supports \defn{retrieval} queries: Given a key $x\in[N]$, output the value associated with $x$.
The set of keys in the input are called the \defn{valid} keys.
Note that this is different from a dictionary, in that the data structure does not need to check if $x$ is \emph{valid}.
It suffices to return the associated value \emph{assuming} the key is in the input (otherwise, the data structure is allowed to return anything).
The information-theoretical optimal space is $nw$ bits, as we do not encode the set.

By viewing $(i,j)$ as a key and the content of the $j$-th word in $D_i$ as the associated value, the task of recovering the content is exactly what a retrieval data structure does.
However, the state-of-the-art retrieval data structure with almost no redundancy and constant query time only supports 1-bit values~\cite{dietzfelbinger2019constanttime}.
The same idea would cost $\log n$ query time to retrieve a $w$-bit word.
We use the special structure of the set of valid keys in our application to improve the query time to constant while using exactly $nw$ bits, which we elaborate below.

The previous retrieval data structure is based on a linear mapping of the values.
We view the final retrieval data structure as a vector (of dimension $n$) in the field $\F_{2^w}$.
We will use free randomness to sample a random \emph{sparse} matrix $A$ of size $N\times n$ in a certain way, where each row corresponds to one possible key.
When the query algorithm gets a key $x$, it simply computes the inner product of row $x$ with the retrieval data structure (as a vector).
The time it takes to compute this inner product equals the number of non-zero entries in row $x$, and since $A$ is sparse, this computation is fast.
The main task is to show that there exists a vector as the final data structure, such that for all $n$ valid keys, the output is correct.
This requires us to solve a system of $n$ linear equations with $n$ unknowns, and there is a solution when the $n$ rows corresponding to the valid keys are linearly independent.
Hence, one way to achieve $O(\log n)$ recovery time, is to sample a random $A$ with $O(\log n)$ random non-zero entries per row, filled with random field elements in $\F_{2^w}$.
Then one can prove that any given set of $n$ rows has full rank with high probability.

To reduce the query time, we apply a novel trick to further sparsify the matrix.
We first observe that many $(i,j)$ pairs are always valid.
This is because each bucket has $B\pm B^{2/3}$ keys, hence, the size of each $D_i$ is between $\dmin := \frac{1}{w} \log \binom{V}{B-B^{2/3}}$ and $\dmax := \frac{1}{w} \log \binom{V}{B+B^{2/3}}$ words.
Therefore, all $(i,j)$ where $j<\dmin$ are always valid keys in the retrieval problem.
The only varying part based on the input is for $j$ between $\dmin$ and $\dmax$.
Observe that $\dmax-\dmin \ll \dmin$, hence, most of the valid keys are known to us in advance.
We will take advantage of those keys to sparsify the matrix.

This observation allows us to first allocate the same number of columns as the number of keys in the varying part, and only sample a random matrix in the rows corresponding to $j\in[\dmin,\dmax]$.
For the fixed valid inputs, we can put any fixed sparse full rank matrix, e.g., an identity matrix.
Note that the random part is a small fraction of the rows; hence, the total number of non-zero entries is not large compared to the total number of valid rows.
We then revise the matrix by evenly ``redistributing'' the non-zero entries in the random part to all rows, such that the revised matrix preserves the linear independence of the rows.
See \cref{fig:sparsify} for an example of the redistribution.

In \cref{sec:word_ram}, we show a different approach to construct a sparse invertible matrix without assuming free randomness.
The construction is much more sophisticated than sampling random non-zero entries.
It is based on hashing the rows into blocks in a hierarchical way, and it makes use of the Cauchy matrices.\footnote{A Cauchy matrix is a dense matrix, but any square submatrix of it has full rank.}
Also note that the random permutation can be replaced by one with bounded-wise independence.
Since these are the only two places where we use free randomness, this allows us to remove the assumption.

\subsection{Fractional-length data structures}
\label{sec:fractional_len}

We have assumed that each $D_i$ simultaneously has $1/\poly n$ redundancy and occupies an integer number of words.
This may not be true in general, since $\log\binom{V}{s_i}$ may be far from any integer multiple of $w$.
We use the standard solution---the \defn{spillover representation} of \Patrascu~\cite{patrascu2008succincter}.
That is, each data structure $D_i$ is represented by a pair $(m, k)$ such that $m\in\{0,1\}^M$ is a bit string and $k\in[K]$ is an integer, where $K = 2^{\Theta(w)}$; $k$ is called the \defn{spill} of the representation.
The query algorithm is assumed to be able to access either $k$ or any word in $m$ in constant time.
Intuitively, since there are $2^M\cdot K$ different such $(m, k)$ pairs, the data structure is defined to have length $M+\log K$.
Such a presentation gives a much more fine-grained measure of space, because increasing $K$ to $K+1$ only increases the length by $\log \frac{K+1}{K} = O(1/K) = 1/\poly n$ bits.
Yu~\cite{yu2020nearly} showed that for each bucket with $\poly w$ keys, a data structure with $1/\poly n$ redundancy in this representation can be constructed.
Furthermore, we may assume without loss of generality that $M$ is a multiple of $w$, since the remaining $< w$ bits can always be merged into $k$.
Moreover, we show that it is possible to ``embed'' the value of $s_i$ into the first word of $m$ such that the encoding of $s_i$ also incurs small redundancy.
Hence, the data structure $D_i$ we construct for each bucket is encoded in the spillover representation, which has a variable fractional length depending on $s_i$ with $1/\poly n$ bits of redundancy, and the value of $s_i$ is always encoded in the first word.

The complete-word parts $m$ are hence encoded using the retrieval data structure discussed in the previous subsection.
The spills $k$ may have different ranges $[K]$, so it is unclear if they can be encoded in a single retrieval data structure.
Observe that $s_i$ has at most $O(B)=\poly\log n$ many different values, so $K$ can also take at most $\poly\log n$ values (as its value is determined by $s_i$).
The spills $k$ can be stored in $\poly\log n$ retrieval data structures, each for one possible value of $K$.\footnote{This also requires us to change the underlying field from $\F_{2^w}$ to $\F_p$ for a $p$ larger than but very close to $K$.}
Given a query $x$, we first find the $D_i$ that needs to be queried, retrieve the first word of $D_i$ to recover $s_i$.
This determines the range $K$, and we get to know which retrieval data structure to query to recover the spill; then the query algorithm on $D_i$ can be simulated.
The details can be found in the following sections.

\section{Retrieval Data Structures}
\label{section:retrieval}
As discussed, no known general static retrieval data structure achieves $\poly \log n$ redundancy and constant query time simultaneously, so we cannot directly use it in our algorithm. However, we can exploit the additional property of our problem to further improve the retrieval data structure, namely the fact that we are required to also store a fixed-size array of size much larger than the retrieval.

We introduce a variant of retrieval data structures, which we call the \defn{augmented retrieval}: Besides the original functionality of retrieval data structure, we also need to store an array of elements $a_1, \ldots, a_m$, and we need to answer queries to the $i$-th element in $O(1)$ time (we will call the queries to the augmented array the \defn{augmented queries}). One straightforward way to implement the augmented retrieval is to store the retrieval data structure and the array separately. But we show that, by encoding these two parts of information together, we can get better space bounds.

Intuitively, our improvement works as follows: Similar to many previous constructions of retrieval data structures, in order to answer the retrieval queries, we first sample a sparse matrix $A$ of $N$ rows and $n$ columns, such that any $n\times n$ submatrix is row independent with good probability. Then we store a vector $\vec{b}$ of length $n$ in the memory, and each retrieval query is answered by computing the inner product of $\vec{b}$ and some row in $A$. A problem with this approach is that, in order for an $n\times n$ matrix to have full rank with good probability, we need at least $\Omega(\log n)$ non-zero entries per row, which means that each query requires $\Omega(\log n)$ time. To deal with this, we notice that the augmented queries can also be seen as computing an inner product, so we can put them in the same matrix as the retrieval queries. We can then use $O(\log n)$ augmented queries to ``sparsify'' each retrieval query, and make each row have only a constant number of non-zero entries. This sparsifying technique is the key technical contribution of this section, and it may be of independent interest.

Throughout this section, we will assume that both the values stored in the augmented retrieval (the ones associated to each key as well as the array elements) and the memory words are from a finite field $\F$ of order $2n\le|\F|=\poly n$. Also, the algorithm can access any memory word in constant time. Later, we will apply techniques from \cite{dodis2010changing} to simulate this type of memory on a word RAM.

\begin{lemma}
    \label{lem:augmented_retrieval}
    There is an augmented retrieval data structure that stores $n$ key-value pairs, where the keys are from a universe $[N]$ and values are elements in $\F$, as well as an augmented array $a_1, \ldots, a_m$ of elements in $\F$ where $m = \Theta(N \log n)$, such that:
    \begin{itemize}
    \item The data structure uses $n + m$ memory words in $\F$ and can answer retrieval queries and augmented queries in $O(1)$ worst-case time. In other words, the data structure introduces \emph{no} redundancy.
    \item The data structure assumes free access to a fully independent hash function from outside. Given such a random hash function, the data structure has a constant success probability in its construction process.
    \end{itemize}
\end{lemma}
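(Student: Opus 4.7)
The plan is to realize the data structure as a linear encoding: memory holds a vector $\vec{b} \in \F^{n+m}$, and each query is computed as the inner product of $\vec{b}$ with a sparse row of a matrix $A \in \F^{(N+m) \times (n+m)}$, with one row per retrieval key and one row per augmented index. I would fix the augmented rows to be $\tilde{r}_i = e_{n+i}$, committing memory position $n+i$ to store $a_i$ directly; an augmented query is then a single memory read. For the retrieval rows, the goal is to make each row have constant total weight (so that retrieval queries also take $O(1)$ time) while guaranteeing that the induced linear system is solvable with constant probability over the random hash function. Because the augmented block of memory is fixed to $\vec{a}$, the $(n+m) \times (n+m)$ valid submatrix is block upper-triangular and solvability collapses to requiring invertibility of the $n \times n$ sub-block $R_{\text{valid}}$ obtained by restricting retrieval rows to the retrieval columns.

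Starting point: using the free randomness from the hash function, sample an $N \times n$ random sparse matrix $A_0$ with $\Theta(\log n)$ non-zero entries per row and values uniform in $\F$; a standard argument over the large field $\F$ shows that any $n$-row submatrix of $A_0$ has full rank with constant probability. To reduce query time from $\Theta(\log n)$ to $O(1)$, I would sparsify each retrieval row by moving most of its non-zero entries out of the retrieval block and into the augmented block. Writing the retrieval row as $r_x = [R_x \mid V_x]$ with $R_x \in \F^n$ and $V_x \in \F^m$, the retrieval equation $r_x \cdot \vec{b} = v_x$ becomes $R_x \vec{b}_{[n]} = v_x - V_x \vec{a}$; since $V_x \vec{a}$ is a known quantity determined by the input, a non-zero entry in $V_x$ only shifts the right-hand side and does not affect the rank of the system in the unknowns $\vec{b}_{[n]}$. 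Thus, a valid sparsification is any redistribution that preserves the linear-independence pattern of the rows of $A_0$ restricted to valid keys, while reducing per-row weight.

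The main obstacle is designing this redistribution so that each retrieval row ends up with $O(1)$ non-zero entries in the retrieval block \emph{and} $O(1)$ non-zero entries in the augmented block \emph{simultaneously}, while $R_{\text{valid}}$ remains invertible with constant probability. The key enabling fact is that $m = \Theta(N \log n)$ is large enough to assign each of the original $\Theta(N \log n)$ non-zero entries of $A_0$ its own augmented-column slot; the sparsification should be thought of as representing each row of $A_0$ as a constant-weight $R_x$ plus a constant-weight $V_x$ that ``points to'' pre-committed augmented locations. I would analyze the invertibility of the resulting $R_{\text{valid}}$ through a Schur-complement view of the full $(n+m) \times (n+m)$ system, arguing that, under the randomness of the hash function, the effective $n \times n$ matrix inherits essentially the same near-uniform rank distribution as $A_0$'s restriction to valid rows, yielding the required constant success probability. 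Once invertibility is verified, $\vec{b}_{[n]}$ is produced by Gaussian elimination in the construction phase; upon failure the construction retries with a freshly sampled hash function, incurring $O(1)$ expected attempts.
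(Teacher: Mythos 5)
Your plan hits a genuine gap at the sparsification step, and it stems from the decision to commit the augmented memory cells directly to $\vec{a}$ (making the augmented rows unit vectors $e_{n+i}$). Once you do that, the $(n+m)\times(n+m)$ valid submatrix is block upper-triangular with an identity lower-right block, so (as you note) solvability reduces to invertibility of $R_{\text{valid}}$. But then ``moving'' a non-zero entry of a retrieval row from the retrieval block $R_x$ into the augmented block $V_x$ is not a rank-preserving redistribution---it is simply deletion. The term $V_x\vec{a}$ is a constant offset to the right-hand side, and the coefficient matrix $R_{\text{valid}}$ you are left with is literally $A_0$ restricted to the valid rows with all but $O(1)$ entries per row removed. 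An $n\times n$ matrix with $O(1)$ uniformly random non-zeros per row over $n$ columns is \emph{not} full rank with constant probability (this fails even combinatorially, e.g.\ the 3-XORSAT/hypergraph peeling threshold requires more columns than rows). Your Schur-complement framing does not rescue this: the Schur complement of a block upper-triangular matrix with identity in the corner is just $R_{\text{valid}}$ itself, so it does not ``inherit'' $A_0$'s rank distribution; it inherits nothing.

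The paper's construction avoids this by \emph{not} pinning the augmented memory cells to $\vec{a}$. Instead it assigns to each retrieval row $i$ a group of $t=\Theta(\log n)$ augmented rows and its corresponding $t$ augmented columns, and performs elementary row and column operations that redistribute the $t$ random non-zeros of row $i$ into the $t$ augmented rows of the group (see \cref{fig:sparsify}). After the operations, every row (retrieval or augmented) has at most three non-zeros, and---crucially---all the row operations only ever subtract augmented rows, which are \emph{always} valid. This is what makes the rank of the valid submatrix of $A$ equal the rank of the valid submatrix of the denser starting matrix $B$, for every choice of valid key set $S$. The full system $A_{T,*}\vec{b}=\text{answers}$ is then solved jointly; the answer to an augmented query is an $O(1)$-weight inner product rather than a direct read. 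If you want to repair your proposal, you need some analogue of this: the augmented rows must carry non-trivial entries participating in the linear system (so that moving mass there preserves the kernel structure), rather than acting as frozen constants.
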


We can boost the success probability to $1-1/\poly N$ by paying an extra $O(\log N)$ bit of redundancy, by using $O(\log N)$ independent hash functions, and storing the index of the hash function (if any) that makes the construction successful.

\begin{corollary}
    \label{thm:augmented_retrieval}
    In \cref{lem:augmented_retrieval}, we can boost the success probability to $1-1/\poly N$ for any $\poly N$, by paying an additional redundancy of $O(\log N)$ bits.
\end{corollary}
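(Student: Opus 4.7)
The plan is the standard probability-amplification argument via independent trials, exactly as sketched in the paragraph preceding the corollary. Let $p = \Omega(1)$ be the constant success probability guaranteed by \cref{lem:augmented_retrieval} when given one externally-supplied random hash function. Since the lemma assumes free access to an independent hash function, I will assume free access to $t$ such hash functions $h_1, \ldots, h_t$ sampled independently, where $t = \Theta(\log N)$ will be chosen below.

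The construction algorithm runs the base construction of \cref{lem:augmented_retrieval} with $h_1$; if that attempt fails, it retries with $h_2$, and so on through $h_t$. By independence of the $h_i$'s, the probability that all $t$ attempts fail is at most $(1-p)^t$, so picking $t = c \log N$ for a sufficiently large constant $c$ (depending on $p$ and the target polynomial) makes the total failure probability at most $1/\poly N$. Conditioned on success, let $i^\ast \in [t]$ be the smallest index for which the base construction succeeded, and prepend $i^\ast$ to the data structure as auxiliary data. This adds $\lceil \log t \rceil = O(\log\log N)$ bits of redundancy, which is subsumed by the claimed $O(\log N)$.

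At query time, the algorithm first reads $i^\ast$, which fits in $O(1)$ machine words and is therefore recoverable in $O(1)$ time, and then invokes the base query algorithm of \cref{lem:augmented_retrieval} using $h_{i^\ast}$ as the hash function. Because $i^\ast$ is known before the rest of the query begins and the base algorithm's running time is $O(1)$ regardless of which specific (successful) hash function is plugged in, the worst-case query time remains $O(1)$, and the space bound of \cref{lem:augmented_retrieval} is preserved up to the $O(\log N)$ additive term for storing $i^\ast$.

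There is essentially no obstacle here beyond bookkeeping: the probabilistic step is a one-line union-bound/product-bound calculation, and the only subtle point is making sure the dispatch to $h_{i^\ast}$ does not inflate the query time. Since we treat the hash functions as external oracles of constant evaluation cost (as \cref{lem:augmented_retrieval} does), reading $i^\ast$ and then calling $h_{i^\ast}$ both cost $O(1)$, so the corollary follows.
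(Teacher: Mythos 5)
Your proposal matches the paper's own argument essentially verbatim: both sample $O(\log N)$ independent hash functions, try each until one succeeds, and store the index of the successful one, which incurs $O(\log\log N) \le O(\log N)$ bits of redundancy and keeps the query time constant. The probability and query-time bookkeeping you provide is exactly what the paper's one-sentence sketch intends.
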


We first set up some notations used in the proof of \cref{lem:augmented_retrieval}. Let \textsc{Query}($i$) ($i\in [N+m]$) denote a query to the augmented retrieval, where we answer the value corresponding to key $i$ if $i\le N$ (or answer arbitrarily if $i$ is not present), and answer the $(i-N)$-th value in the array $a_{i-N}$ if $i>N$. We use $S\subset N$ to denote the set of keys that are present, and use $T \defeq S \cup (N, \, N + m]$ to denote the set of queries to the augmented retrieval that we need to answer correctly.

As mentioned before, our goal is to sample a $(N+m)\times (n+m)$ matrix $A$, with the property that $A_{T,*}$ is row independent with constant probability.\footnote{We use the notation $A_{I,J}$ to represent the submatrix of $A$ consisting of a set $I$ of rows and a set $J$ of columns. When $I$ or $J$ is replaced with ``$*$'', it represents the set of all rows or all columns.} Then in the preprocessing phase, we compute a vector $\vec{b}$ such that $A_{i,*}\cdot \vec{b}$ is the correct answer of \textsc{Query}($i$) for every $i\in T$, then store $\vec{b}$ in the memory. Each query can be answered by going over all the non-zero entries in $A_{i,*}$, and accessing the corresponding values of $\vec{b}$. Therefore, the time cost for answer \textsc{Query}(i) is the number of non-zero entries in $A_{i,*}$. So the question becomes constructing such a matrix $A$ such that it only has a constant number of non-zero entries per row.

Instead of directly constructing $A$, we first construct an intermediate matrix $B$ that is less sparse. $B$ is also a $(N+m)\times (n+m)$ matrix, and every row of $B$ will contain at most $t \defeq 10 \log n$ non-zero entries. $B$ is constructed as follows:

\begin{itemize}
    \item For each of the first $N$ rows, sample $t$ entries from the first $n$ columns with uniformly random positions and values, and let everywhere else be zero. Denote those entries by $\{(p_{i,j},v_{i,j})\}_{1\le j\le t}\in [n]\times \F$. It might be the case that the $t$ positions that we sampled are not distinct (say $p_{i,j}=p_{i,j'}$), in which case we just let $B_{i,p_{i,j}}$ be the sum of their corresponding values (formally speaking, $B_{i,*}=\sum_{j=1}^{k} v_{i,j} \, \vec{e}_{p_{i,j}}$).\footnote{We use $\vec{e}_k$ to denote the $k$-th unit vector where only the $k$-th component equals to $1$, and all other components equal to $0$.} Also note that the non-zero entries in the first $N$ rows only appear in the first $n$ columns.
    \item For the last $m$ rows, simply let $B_{i+N,i+n}=1$ for $i\in [m]$, and $B_{i,j}=0$ elsewhere. This corresponds to explicitly storing $a_i$ in the $(i+n)$-th memory word.
\end{itemize}

The process of generating $B_{S, [n]}$ can be thought of as randomly sampling $t$ entries from each row (with repetition), then letting their values be uniformly random. 
Thus, the linear independence of the rows of $B_{T,*}$ can be shown by the following lemma, whose proof is deferred to \cref{app:random_matrix_rank}. 

\begin{restatable}{lemma}{matrixRank}
    \label{lem:random_matrix_rank}
    Let $\F$ be a finite field of order $\ge 2n$. Let $M\in \F^{n\times n}$ be generated by randomly sampling $t\defeq 10\log n$ entries for every row (with repetition), and let their values be uniformly random. Then $M$ is row independent with constant probability.
\end{restatable}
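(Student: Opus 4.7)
The plan is to decouple the randomness in the positions $(p_{i,k})$ from the randomness in the values $(v_{i,k})$, reducing the question to a perfect matching problem for a random bipartite graph.

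Conditional on the positions, I would view $\det(M)$ as a polynomial in the values $(v_{i,k})$ and show that it is nonzero as a polynomial if and only if the bipartite graph $G$ on $[n] \sqcup [n]$ with edges $\{(i,j) : j \in P_i\}$ has a perfect matching, where $P_i = \{p_{i,1}, \ldots, p_{i,t}\}$. The ``only if'' direction is clear: if there is no perfect matching, then every Leibniz term $\prod_i M_{i, \sigma(i)}$ vanishes because some $M_{i, \sigma(i)} = 0$. For the ``if'' direction, given a perfect matching $\sigma$, picking for each $i$ some $k_i^*$ with $p_{i, k_i^*} = \sigma(i)$ and substituting $v_{i, k_i^*} = 1$ and $v_{i, k} = 0$ otherwise makes $M$ the permutation matrix of $\sigma$, so $\det(M) = \pm 1 \ne 0$. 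Assuming $G$ has a perfect matching, $\det(M)$ is a nonzero degree-$n$ polynomial in the $nt$ values, so Schwartz--Zippel gives $\Pr[\det(M) = 0 \mid \text{positions}, G \text{ has PM}] \le n/|\F| \le 1/2$.

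It remains to show that $G$ has a perfect matching with constant probability. I would use the dual form of Hall's theorem: $G$ lacks a perfect matching iff some $U \subseteq [n]$ satisfies $|\{i : P_i \subseteq U\}| > |U|$. For a fixed $U$ of size $u$, the events $\{P_i \subseteq U\}$ are i.i.d.\ Bernoulli with parameter $(u/n)^t$, so $\Pr[\text{Hall fails at }U] \le \binom{n}{u+1}(u/n)^{t(u+1)}$ by a union bound over $(u{+}1)$-subsets of rows. A further union bound over $U$ yields
\[
  \Pr[G \text{ has no PM}] \le \sum_{u=1}^{n-1} \binom{n}{u}\binom{n}{u+1}\left(\frac{u}{n}\right)^{t(u+1)}.
\]
I would bound this sum by splitting at $u = n/2$. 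For $u \le n/2$, the factor $(u/n)^{t(u+1)} \le 2^{-t(u+1)} = n^{-10(u+1)}$ easily dominates $\binom{n}{u}\binom{n}{u+1} \le n^{2u+1}$, giving a geometric series. For $u > n/2$, I would reparametrize $k = n - u$ and use $(1 - k/n)^{t(n-k+1)} \le e^{-kt/2}$ (valid since $n - k + 1 \ge n/2$) together with $\binom{n}{k}\binom{n}{k-1} \le n^{2k-1}$, which likewise yields a geometric series in $k$. Both sums are $o(1)$, so $\Pr[G \text{ has PM}] \ge 1 - o(1)$, and combined with the Schwartz--Zippel bound this gives $\Pr[\det(M) \ne 0] \ge \tfrac{1}{2}(1 - o(1)) = \Omega(1)$.

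The main technical obstacle is the last union bound near the boundary $u \approx n$, where $\binom{n}{u}$ can be as large as $2^n$ and $(u/n)^t$ is close to $1$. The reparametrization $k = n - u$ is the essential trick: it simultaneously swaps the large $\binom{n}{u}$ for the small $\binom{n}{k}$ and converts $(u/n)^t = (1 - k/n)^t$ into a factor decaying like $e^{-\Omega(kt/n)}$, which (combined with $n-k+1 \ge n/2$) gives enough decay to make the sum geometric in $k$.
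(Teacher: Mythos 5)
Your proof is correct and takes essentially the same route as the paper: decouple positions from values, reduce nonsingularity of the ``pattern'' matrix to a perfect matching in the associated bipartite graph via Hall's theorem and the same union bound $\sum_u \binom{n}{u}\binom{n}{u+1}(u/n)^{t(u+1)}$ split at $u \approx n/2$, then finish with Schwartz--Zippel. The only cosmetic difference is in the $u > n/2$ tail, where you reparametrize $k = n-u$ and bound $(1-k/n)^{t(n-k+1)} \le e^{-kt/2}$, while the paper instead factors out $((i-1)/i)^{it} < e^{-t}$; both yield the same geometric decay.
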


Based on $B$, we can construct an augmented retrieval that uses $n+m$ words, and answers queries in $O(\log n)$ time: When answering \textsc{Query}($i$) for $i\in [N]$, the algorithm first finds all non-zero entries in row $i$ by computing the list $\{(p_{i,j},v_{i,j})\}_{1\le j\le t}$, then accesses the memory to learn the entry of $\vec{b}$ at every coordinate $p_{i,j}$.

So far, we have not utilized the augmented queries in any interesting way. To further achieve $O(1)$ time per query, we leverage the abundance of augmented queries to construct the \emph{sparse} matrix $A$ with the desired row-independent property.

The matrix $A$ is constructed by performing a sequence of elementary operations over the matrix $B$ to sparsify its rows. Specifically, let $t \defeq m/N$. We uniformly partition all $m$ augmented rows (i.e., the rows corresponding to augmented queries) into $N$ groups, where the $i$-th group $G_i$ consists of the $t$ augmented queries in $(N + (i-1)t,\, N + it]$. 
Moreover, we will put the $i$-th row, which corresponds to a retrieval query $\textsc{Query}(i)$, into $G_i$, and use other rows in $G_i$ as a resource to sparsify the $i$-th row.

We first focus on the construction of $A$ within each group $G_i$. Let $A^{(i)}$ denote the $(t+1) \times (n+t)$ submatrix of $A$, where the rows are from $G_i$ and the columns are from $[n] \cup (n+(i-1)t, \, n+it]$. Initially, we set $A = B$, so the matrix $A^{(i)}$ looks like the left matrix of \cref{fig:sparsify}.

We perform two sets of operations on $A^{(i)}$:

\begin{itemize}
    \item Subtract row $j$ from row $j-1$, for every $j\in [2,\,t+1]$.
    \item Then for each $j\in [t]$, add $v_{i,j}$ times columns $n+1,n+2,\dots,n+j$ to column $p_{i,j}$ (recall that $(p_{i,j},v_{i,j})$ is the $j$-th non-zero entry on row $i$), which amounts to moving $v_{i,j}$ from the first row to the $(j+1)$-th row.
\end{itemize}

\begin{figure}[ht]
    \centering
    \begin{align*}
        \begin{pmatrix}
            v_{i,1}  & v_{i,2} & v_{i,3} & & & \\
            & & & 1 & & \\
            & & & & 1 & \\
            & & & & & 1 \\
        \end{pmatrix}\Rightarrow
        \begin{pmatrix}
            v_{i,1}  & v_{i,2} & v_{i,3} & -1 & & \\
            & & & 1 & -1 & \\
            & & & & 1 & -1 \\
            & & & & & 1 \\
        \end{pmatrix}\Rightarrow
        \begin{pmatrix}
            & & & -1 & & \\
            v_{i,1} & & & 1 & -1 & \\
            & v_{i,2} & & & 1 & -1 \\
            & & v_{i,3} & & & 1 \\
        \end{pmatrix}
    \end{align*}
    \caption{Sparsifying a row}
    \label{fig:sparsify}
\end{figure}

By combining the construction of each $A^{(i)}$ (and setting other undefined entries to zero), we obtain the matrix $A$, where each row in $A$ contains at most $3$ non-zero entries. Furthermore, for any $i\in [N+m]$, we can easily compute the non-zero entries in $A_{i,*}$ in constant time, allowing the algorithm to answer each query in constant time. Finally, since $A$ is obtained by performing some elementary operations on $B$, where we only subtract rows in $(N, \, N + m]$ (which are guaranteed to be in $T$), we can conclude that $A_{T, *}$ is row independent with constant probability, just as $B_{T,*}$ is, thereby justifying our construction.

\section{Cell-Probe Dictionary}
\label{sec:cell_probe}
In this section, we present a simplified version of our membership data structure with $\poly \log n$ bits of redundancy that works in the cell-probe model.
This data structure demonstrates our main ideas, and it uses the retrieval data structure from Section~\ref{section:retrieval}.
In Section~\ref{sec:word_ram}, we will show how to replace its components so that it works in the word RAM model.

\begin{theorem}
    \label{thm:cell_probe_main}
    In the cell-probe model, there is a static dictionary storing $n$ keys from a universe $U \in [2n, \poly n]$, using $\log \binom{U}{n} + \poly \log n$ bits of space, and answers membership queries in worst-case constant time.
\end{theorem}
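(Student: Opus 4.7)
The plan is to follow the roadmap laid out in \cref{sec:overview}. First, I would apply a (truly) random permutation $\pi$ to the universe $[U]$ and partition the permuted universe into $n/B$ buckets of equal size $V = U/(n/B)$, where $B = \poly\log n$. By a Chernoff bound, each bucket contains $s_i = B \pm B^{2/3}$ keys with probability $1 - 1/\poly n$; I would condition on this good event. For each bucket $i$, I would invoke the $\poly w$-size dictionary of~\cite{yu2020nearly} to build an optimal dictionary $D_i$ for the $s_i$ keys that fall in bucket $i$, presented in P\v{a}tra\c{s}cu's spillover representation $(m_i, k_i)$ with $m_i \in \{0,1\}^{M_i}$ and $k_i \in [K_i]$, with redundancy $1/\poly n$ per bucket. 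I would also arrange (as stated in \cref{sec:fractional_len}) that $M_i$ is a multiple of $w$ and that the value $s_i$ itself is embedded into the first word of $m_i$, so that recovering the first word of $m_i$ determines both the length of $D_i$ and the range $K_i$ of the spill.

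Next, I would use the augmented retrieval of \cref{thm:augmented_retrieval} as the concrete vehicle to store all the ``main'' parts $m_i$ jointly. The keys of the retrieval are pairs $(i,j)$ with $i \in [n/B]$ and $j \in [\dmax]$, where $\dmin \le |m_i|/w \le \dmax$; the associated value is the $j$-th word of $m_i$. The crucial point, explained in \cref{sec:overview}, is that every $(i,j)$ with $j < \dmin$ is always a valid key, while only the strip $\dmin \le j \le \dmax$ depends on the input. This is exactly the setup of \cref{thm:augmented_retrieval}, with the always-valid coordinates playing the role of the augmented array of size $m = \Theta(N \log n)$ and the input-dependent coordinates playing the role of the retrieval keys. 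I would apply the retrieval data structure to obtain constant-time access to any word of any $m_i$ using $\sum_i M_i$ bits plus $O(\log N)$ bits for the hash-function index.

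To store the spills $k_i$, I would observe that $K_i$ is a function of $s_i$ and $s_i$ takes at most $O(B^{2/3}) = \poly\log n$ distinct values, so only $\poly\log n$ distinct moduli $K$ arise. For each such value, I would instantiate a separate retrieval data structure over the field $\F_p$ for a prime $p$ slightly larger than $K$ (so each spill fits as a field element), whose keys are the indices of buckets whose spill has range $K$; \cref{thm:augmented_retrieval} again gives constant query time with no per-bucket redundancy beyond rounding, and the sum of all these retrieval spaces equals $\sum_i \log K_i$ up to an additive $\poly \log n$ term. A query on $x$ proceeds as follows: compute $\pi(x)$ to find the bucket $i$; retrieve the first word of $m_i$ from the augmented retrieval to recover $s_i$, hence $M_i$ and $K_i$; use this to dispatch the spill lookup to the correct $\F_p$-retrieval and the remaining word lookups for $m_i$ to the main retrieval; finally simulate the~\cite{yu2020nearly} query algorithm on $(m_i, k_i)$.

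For the space accounting, the per-bucket representation costs $M_i + \log K_i$ bits with $1/\poly n$ redundancy, so summing over all $n/B$ buckets yields $\sum_i \log\binom{V}{s_i} + O(n/B) \cdot 1/\poly n$ bits in the sense of spillover length; standard concentration gives $\sum_i \log\binom{V}{s_i} \le \log \binom{U}{n} + \poly\log n$ with high probability over $\pi$, and the overheads from the retrieval hash index, the $\poly\log n$-many spill retrievals, and the encoding of $\pi$ (we can assume free random bits in the cell-probe model) each contribute only $\poly\log n$. I expect the main technical obstacle to be the bookkeeping around fractional-length data structures: I must verify that embedding $s_i$ into the first word of $m_i$ genuinely costs only $1/\poly n$ extra redundancy, that the main augmented retrieval can accommodate a variable valid key set $\{(i,j) : j < M_i/w\}$ that strictly includes the always-valid rows and is contained in the full key space, and that summing the rounding losses across all $\poly\log n$ spill retrievals remains within the $\poly\log n$ redundancy budget.
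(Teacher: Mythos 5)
Your high-level architecture matches the paper's: hash into $B=\poly\log n$-sized buckets, represent each bucket via the spillover representation of~\cite{yu2020nearly}, treat the always-valid word positions as the augmented array and the variable strip as retrieval keys, and dispatch the spills to $\poly\log n$ many per-size retrievals over a prime field. However, there is one genuine gap in your space accounting, concentrated in the sentence ``embedding $s_i$ into the first word of $m_i$ genuinely costs only $1/\poly n$ extra redundancy.'' It does not: encoding the bucket size $s_i$ costs $\log\frac{1}{p(s_i)}$ bits per bucket (for whatever distribution $p$ you encode against), and $\sum_i \log\frac{1}{p(s_i)} \approx \frac{n}{B}\cdot H(p) = \Theta\!\left(\frac{n\log B}{B}\right)$ bits in aggregate, which is $n/\poly\log n$, not $\poly\log n$. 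Equivalently, your bound $\sum_i \log\binom{V}{s_i} \le \log\binom{U}{n}$ is correct but loses exactly this slack: $\prod_i \binom{V}{s_i}$ is only one summand in the multinomial decomposition of $\binom{U}{n}$, and the gap (the entropy of the bucket-size vector) is precisely what must be spent to tell the decoder which $s_i$ each bucket has.

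The paper closes this gap with a genuine entropy argument (its \cref{lem:bucket_rep} and \cref{lem:bucket_entropy}): the per-bucket representation uses $\log\frac{1}{p(s_i)} + \log\binom{V}{s_i} + O(1/n^2)$ bits, and a comparison of three distributions (product-of-buckets, i.i.d.\ Bernoulli, uniform $n$-subset) shows $\E_h\left[\sum_i \left(\log\frac{1}{p(s_i)} + \log\binom{V}{s_i}\right)\right] \le \log\binom{U}{n} + O(\log n)$. You would need this (or an equivalent argument) and then a Markov/index-of-good-hash step to turn the expectation into a worst-case bound. Without it, your claim that the total is $\log\binom{U}{n}+\poly\log n$ is unjustified: the $\log\binom{V}{s_i}$ terms alone undershoot and the $s_i$-encoding terms overshoot unless they are analyzed together.

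Two smaller points. First, for the per-type spill retrievals over $\F_p$, the augmented array donated from the fixed words lives in base $2^w$, so you need a base-change gadget (the paper invokes \cref{thm:changing_base}) to read $\mfix$ words as $\F_p$ elements and to re-pack the resulting $\F_p$-vector back into $w$-bit words; this is not just a rounding issue. Second, ``free random bits in the cell-probe model'' is not by itself a valid way to dispense with the hash function, since the data structure must be deterministic given the input. The paper's argument is that a fixed set of $\poly n$ hash functions exists such that for every input, at least one succeeds; only an $O(\log n)$-bit index is stored. Your $\poly\log n$ redundancy claim implicitly relies on such an argument, so you should make it explicit.
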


We first assume the algorithm has access to a random tape, which means that it can evaluate fully random hash functions in constant time. We will remove this assumption in \cref{subsection:altogether}.

We partition all $n$ keys into a number of \defn{buckets}, where each bucket contains $B = \poly \log n$ keys in expectation, and we use $s_i$ to denote the actual number of keys hashed to the $i$-th bucket. With high probability in $n$, every bucket will contain $[B - B^{2/3}, \, B + B^{2/3}]$ keys.

\subsection{Encoding of a bucket}\label{subsection:intra_bucket}

We first show how to encode the information within each bucket. Let $V\defeq U/(n/B)$ denote the universe size of a bucket. Each bucket is responsible for answering queries to a small dictionary, where the keys are from $[V]$, and the number of keys is between $B - B^{2/3}$ and $B + B^{2/3}$ with high probability. Since $B = \poly \log n$, we can use techniques from \cite{patrascu2008succincter,yu2020nearly} to construct a data structure that has $1 / \poly n$ bits of redundancy, and answers queries to the small dictionary in constant time. We can obtain an encoding of such a data structure that uses $\log\frac{1}{p(s_i)} + \log \binom{V}{s_i}$ bits for bucket $i$, where $p$ as an arbitrary distribution over $[B - B^{2/3}, \, B + B^{2/3}]$. For now, readers should think of $p(s_i)$ as the probability that the $i$-th bucket gets $s_i$ keys. Later in \cref{subsection:altogether}, we will specify our hashing method and formally define $p$.

As mentioned in \cref{sec:fractional_len}, since we only incur $1 / \poly n$ bits of redundancy, the length of our data structure (for each bucket) may not be an integer number of bits. 
Formally, our data structure is encoded using a spillover representation $(m, k) \in \{0, 1\}^M \times \{0, \dots, K-1\}$, such that $K=\poly n$. When stored optimally (as we will do in \cref{subsection:concat}), this representation uses space $M+\log K$ bits, which can be fractional.

In the remainder of this subsection, we fix a specific bucket, and use $S\subset [V]$ to denote the set of keys hashed to this bucket, where $s\defeq |S|\in [B - B^{2/3}, \, B + B^{2/3}]$. 

\begin{lemma}
  \label{lem:bucket_rep}
  We can construct a spillover representation for $S$, which is a tuple $(m,k)\in \{0,1\}^M\times \{0, 1, \dots, K \!-\! 1\}$, where $M, K$ are positive integers, $K = \poly n$, and 
  \begin{align*}
    M+\log K\le \min\bk*{\log\frac{1}{p(s)}, \, O(\log n)}+\log \binom{V}{s}+O(1/n^2).
  \end{align*}
  The representation $(m,k)$ has the following properties that help answer queries:
  \begin{itemize}
    \item The first $O(1)$ words of $m$ determine the set size $s$.
    \item The size parameters $M,K$ are uniquely determined by $s$.
    \item After learning the set size and the value of the spill $k$, answering whether a key is in $S$ can be done with $O(1)$ word probes to $m$ (and $O(1)$ time) assuming access to a lookup table of size $O(n^{\epsilon})$, where $\epsilon$ can be chosen to be any positive number. This lookup table does not depend on $S$.
  \end{itemize}
\end{lemma}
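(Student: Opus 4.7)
The plan is to obtain the desired encoding by combining two ingredients. The first is the small-dictionary construction of~\cite{patrascu2008succincter,yu2020nearly}: for a size-$s$ subset of $[V]$ with $s \in [B - B^{2/3},\, B + B^{2/3}]$ and $B = \poly w$, these works yield a spillover representation $(m^D, k^D) \in \{0,1\}^{M^D_s} \times [K^D_s]$ with $M^D_s + \log K^D_s \le \log\binom{V}{s} + 1/n^3$, where WLOG $M^D_s$ is a multiple of $w$ and $K^D_s \le \poly n$; the membership query runs in $O(1)$ time given $s$, $k^D$, and an $O(n^\eps)$-bit lookup table, via $O(1)$ word probes to $m^D$. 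The second ingredient is an arithmetic-coding-style spillover encoding of $s$ using the distribution $p$.

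For the size encoding, I would first truncate the given distribution to $\tilde p(s) \defeq \max(p(s), n^{-3}) / Z$, where $Z = \sum_{s'} \max(p(s'), n^{-3}) = 1 + o(1/n^2)$ (valid because the support of $p$ has only $O(B^{2/3}) = \poly\log n$ values). This yields $\log(1/\tilde p(s)) \le \min(\log(1/p(s)),\, O(\log n)) + O(1/n^2)$. I would then produce a joint spillover $(m, k) \in \{0,1\}^{M_s} \times [K_s]$ for the pair $(s, D)$ by arithmetic-coding $s$ with $\tilde p$ and composing with the subset spillover in the standard way of~\cite{patrascu2008succincter}, obtaining $K_s \le \poly n$ and
\[
  M_s + \log K_s \le \log(1/\tilde p(s)) + \log\binom{V}{s} + O(1/n^3) \le \min(\log(1/p(s)),\, O(\log n)) + \log\binom{V}{s} + O(1/n^2).
\]
Critically, I would arrange the arithmetic code so that its high-order bits---those that determine $s$---are placed at the start of $m$. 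This is feasible because $\log(1/\tilde p(s)) = O(\log n) = O(w)$ by the truncation, so the $s$-identifying prefix fits in $O(1)$ words.

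The query algorithm then reads the first $O(1)$ words of $m$ to decode $s$, derives $(M^D_s, K^D_s)$ from $s$, extracts $k^D = k \bmod K^D_s$ and $m^D$ as the suffix of $m$, and invokes the subset-query algorithm from~\cite{yu2020nearly}, all in $O(1)$ time. The main technical obstacle is achieving $O(1/n^3)$-bit redundancy in arithmetic-coding $s$: a direct prefix code for a single symbol has $O(1)$-bit overhead, so the construction must leverage the large spillover $K^D_s$ of size $\poly n$ to absorb the fractional bits of $\log(1/\tilde p(s))$. This is accomplished by the joint spillover-composition technique of~\cite{patrascu2008succincter}, which treats $(s, D)$ as a single symbol over a product alphabet and avoids the per-symbol prefix-code overhead.
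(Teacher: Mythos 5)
Your proposal takes essentially the same approach as the paper: condition on $s$ and apply \cite[Lemma~28]{yu2020nearly} to get a near-optimal spillover representation, truncate $p(s)$ at $n^{-3}$ to bound $\log(1/p(s))$ by $O(\log n)$ at cost $O(1/n^2)$, and compose the $s$-encoding with the subset spillover via the fractional-bit technique of \cite{patrascu2008succincter} (realized in the paper by partitioning $[2^t]$, $t=10\log n$, into intervals $T_s$ with $|T_s| \approx p(s)\,2^t$ and solving the constant-time decoding of $s$ by ordering the $T_s$ and running a predecessor search). One imprecision in your query description: recovering ``$m^D$ as the suffix of $m$'' with $k^D = k \bmod K^D_s$ is not how the composition unpacks---in the paper, the first $t$ bits of the final $m$ jointly encode $s$ (via which $T_s$ they lie in) \emph{and} part of $m^D$'s prefix (via the index inside $T_s$), with the quotient of that prefix pushed into the spill; reading $m^D$ as a pure suffix of $m$ would waste $t - \log(1/\tilde p(s)) = \Omega(\log n)$ bits per bucket, which is far above the claimed $O(1/n^2)$ redundancy.
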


We remark that the membership queries in \cref{lem:bucket_rep} are efficient not only in the cell-probe model but also in the word RAM model, assuming access to the lookup table of $O(n^{\eps})$ space. 
Although this section focuses on the cell-probe model, the time efficiency and lookup tables will be important for later sections, when we switch to the word RAM model.
For now, we will not count the lookup table toward our space usage, because in the cell-probe model, any computation that does not depend on the memory is free.

The proof of \cref{lem:bucket_rep} is a combination of existing techniques. First, we apply \cite[Lemma 28]{yu2020nearly} to obtain a spillover representation $(m', k') \in \{0, 1\}^{M'} \times \{0, \dots, K' \!-\! 1\}$ for $S$ conditioned on the set size $s$, which has the following properties:

\begin{itemize}
  \item $K' = \poly n$,
  \item $M' + \log K' \le \log \binom{V}{s} + O(1/n^2)$,
  \item there is a lookup table of size $O(n^\epsilon)$,
  \item each query can be answered with $O(1)$ word probes to $m'$, $k'$ and the lookup table, assuming that we know $s$ in advance.
\end{itemize}

Given $(m',k')$, it remains to encode $s$ into the first $O(\log n)$ bits of $m'$. Here we can assume WLOG that $p(s) \ge 1/n^3$ for every $s$ by tweaking the distribution as in \cite{patrascu2008succincter}, which introduces $O(1/n^2)$ bits of redundancy. This is the reason for the $\min\bk[\big]{\log\frac{1}{p(s)}, \, O(\log n)}$ term in the lemma.

In order to encode $s$ into the first $O(\log n)$ bits of $m'$, we perform the following steps.
Let $t \defeq 10 \log n$. We partition $T = \{0, \dots, 2^{t} \!-\! 1\}$ into sets $\{T_s\}$, such that $|T_s|/|T|$ and $p(s)$ differ by at most $1/n^6$. Then we take the first $2t$ bits of $m'$, and let $m_0 \in [0, 2^{2t})$ denote its value.
Finally, we re-encode $m_0$: We replace the first $2t$ bits of $m'$ with the $(m_0 \bmod |T_s|)$-th element in $T_s$ (which is a $t$-bit string), shortening $m'$ to form $m$; we also concatenate $k'$ with $\lfloor m_0/|T_s|\rfloor$ to form the new spill $k$.

After modification, the length of $m'$ shrinks by $t$ bits, and the spill universe becomes $K' \cdot \lceil 2^{2t}/|T_s| \rceil$, so its logarithm increases by 
\[\log \, \lceil 2^{2t}/|T_s| \rceil \le \log\bk*{\frac{2^t}{p(s)} \cdot \bk[\big]{1+O(1/n^3)} \cdot \bk[\Big]{1 + \frac{1}{2^t}}} = \log \frac{1}{p(s)} + t + O(1/n^2),\]
where the first inequality holds because $|T_s| \ge |T| \cdot (p(s) - 1/n^{6}) \ge |T| \cdot p(s) \cdot (1 - 1/n^{3})$.
Overall, the new representation only incurs $O(1/n^2)$ redundancy, as claimed.

This way, we can guarantee that the first $t$ bits in $m'$ determine $s$ (to obtain $s$, we only need to check which $T_s$ it is in), and that we can recover $m_0$ by reading the first $t$ bits of $m'$ and the spill.

Finally, we discuss how to recover $s$ and $k'$ time-efficiently. For this, we need to read the first $t$ bits of $m'$, then decide which set $T_s$ it is in, as well as how many elements in $T_s$ are smaller than it. To do this, we can let $T_s$ be continuous intervals, and order the intervals by their lengths (e.g., the smallest $T_s$ occupies the smallest $|T_s|$ elements in $T$, and so on). Then we can use the predecessor search technique in \cite{patrascu2008succincter} to recover $s$ in constant time, using a data structure of size $O(B)$ words.

\subsection{Concatenate the representations of each bucket}
\label{subsection:concat}

The main challenge of the problem is to concatenate all the spillover representations $(m_i,k_i)$ of each bucket, while supporting constant query time in the worst case. Our algorithm depends on the following observations.

\begin{itemize} 
\item The number of elements $s_i$ in each bucket is well concentrated near $B$, and the length of $m_i$ is therefore also concentrated near some length, therefore we can view $m_i$ as the combination of a long fixed-length part and a short variable-length part.
\item The length of $m_i$ and the size of the spill universe for $k_i$ depend only on $s_i$, so there are few types of different sizes. We will use $(M^{(s)}, K^{(s)})$ to denote the size parameters $(M_i, K_i)$ of the spillover representation of any bucket $i$ with size $s$.
\end{itemize}

Furthermore, we assume that $m_i$ is a sequence of complete words, i.e., its length is a multiple of $w$. If this is not the case, we will truncate $m_i$ to complete words, and merge the leftover $O(w)$ bits into the spill.

The problem of concatenating all spillover representations $(m_i, k_i)$ with their sizes $s_i$ is addressed by the following lemma.

\begin{lemma}\label{lem:storing_representations}
Given $L$ spillover representations $(m_i, k_i)$ and size parameters $s_i$ that satisfy the following:
\begin{itemize}
\item There are $S$ different values of $s_i$.
\item $m_i$ is a variable-length sequence of words. The length only depends on $s_i$ and is in the range $[M_{\min},\, M_{\max}]$, that is, $|m_i|=M^{(s_i)}\in [M_{\min},\,M_{\max}]$.
\item $s_i$ is fully determined by the first $O(1)$ words of $m_i$.
\item $k_i$ is an integer in $[K_i]$, and the size of the universe also only depends on $s_i$, denoted as $K_i = K^{(s_i)}$. We also require that $L^3<K_i =\poly L$.
\item The parameters satisfy
\[
M_{\min} \ge \Omega\bk*{(M_{\max} - M_{\min}) \log (L M_{\max})+S\log L}. \numberthis\label{inequality_store_representations}
\]

\end{itemize}

In the cell-probe model with word size $w = \Theta(\log n)$, there is a data structure that stores all tuples $(m_i, k_i)$, supporting queries to any word in $m_i$ or the entire spill $k_i$ in constant time in the worst case.

The data structure incurs $O(S\log L)$ words of redundancy, assuming access to a fully independent hash function, and its construction succeeds with probability $1-1/\poly L$ over the randomness of the hash function.
\end{lemma}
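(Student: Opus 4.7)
The plan is to reduce the concatenation task to two applications of the augmented retrieval of Corollary~\ref{thm:augmented_retrieval}: one large retrieval over $\F_{2^w}$ carrying the main payloads $m_i$, and a family of $S$ smaller retrievals (one per distinct bucket size $s$) carrying the spills $k_i$. The crucial observation is that the first $M_{\min}$ words of every $m_i$ are always accessible by index, so the concatenation of all $L$ of these length-$M_{\min}$ prefixes (totaling $L M_{\min}$ words) can serve as the augmented array of the main retrieval, while the short variable-length tails and the spills appear only as ordinary retrieval queries. The two terms of inequality~(\ref{inequality_store_representations}) are tailored exactly to the corresponding augmented-array requirements.

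For the main retrieval I would invoke Corollary~\ref{thm:augmented_retrieval} over $\F_{2^w}$, taking the augmented array to be the concatenation of all prefixes and the retrieval key universe to be $[L]\times[M_{\max}-M_{\min}]$, where the key $(i,j)$ is valid iff $j<M^{(s_i)}-M_{\min}$ and carries the value of word $M_{\min}+j$ of $m_i$. The hypothesis of the corollary demands an augmented array of size $\Omega(N\log N)$ for universe $N=L(M_{\max}-M_{\min})$, i.e., $\Omega\bigl(L(M_{\max}-M_{\min})\log(LM_{\max})\bigr)$ in total, which is exactly the first term of (\ref{inequality_store_representations}) multiplied through by $L$. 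Boosting success to $1-1/\poly L$ costs $O(\log L)$ bits.

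For the spills I would partition the buckets by $s_i$ into $S$ groups, and for each value $s$ pick a prime $p^{(s)}\in(K^{(s)},2K^{(s)}]$ and build a separate retrieval over $\F_{p^{(s)}}$ storing the $L_s$ pairs (bucket index, spill). Corollary~\ref{thm:augmented_retrieval} applied per group contributes $O(\log L)$ bits of success-boost overhead, for $O(S\log L)$ bits in total. The sparsification inside each $\F_{p^{(s)}}$-retrieval still needs augmented rows, and these I would not store separately---instead, their memory cells are overlaid on cells of the main augmented array, with each $\F_{2^w}$-word canonically reinterpreted in $\F_{p^{(s)}}$. The $\Omega(S\log L)$ extra per-bucket slack in $M_{\min}$, i.e.\ the second term of (\ref{inequality_store_representations}), provides enough distinct cells to host the aggregate $O(L\log L)$ augmented rows demanded across all $S$ spill retrievals. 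Solving the joint linear system then amounts to fixing $\vec{b}_{\main}$ from the main constraints, which pins down the shared augmented values, and then solving each $L_s\times L_s$ retrieval sub-system in $\F_{p^{(s)}}$; by \cref{lem:random_matrix_rank} each such sub-system is invertible with constant probability, which union-bounds over $S$ groups after boosting.

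To answer a query we first issue an augmented query on the initial $O(1)$ words of bucket $i$ and decode $s_i$ using \cref{lem:bucket_rep}'s lookup table; this reveals $M^{(s_i)}$ and $K^{(s_i)}$. A request for word $j<M_{\min}$ of $m_i$ is then answered by a direct augmented query; a request for $j\ge M_{\min}$ by a retrieval query at key $(i,\,j-M_{\min})$; and a spill request by a retrieval query on the $s_i$-th spill structure at key $i$. Each step is $O(1)$. The step I expect to require the most care is the cross-field accounting just sketched: showing that the shared reinterpretation of the main augmented array is consistent across $S$ different characteristic-$p^{(s)}$ retrievals while keeping the total redundancy within $O(S\log L)$ words, and that the $S$ resulting random sub-systems are simultaneously invertible with probability $1-1/\poly L$, is where the bulk of the technical work will sit.
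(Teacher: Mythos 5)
Your decomposition — $\mfix$ of length $L M_{\min}$ words stored flat, a main augmented retrieval over $\F_{2^w}$ with keys $(i,j)\in[L]\times[M_{\max}-M_{\min}]$, and $S$ per-type spill retrievals over $\F_{p^{(s)}}$, with portions of $\mfix$ recruited as the augmented arrays — is exactly the paper's high-level plan, and your reading of inequality~\eqref{inequality_store_representations} as the budget check for both kinds of augmented arrays is also correct. The gap lies in the step you yourself flag as delicate: you propose to ``overlay'' the spill retrievals' augmented cells on words of $\mfix$ by ``canonically reinterpreting'' each $\F_{2^w}$-word as an element of $\F_{p^{(s)}}$. There is no such canonical, information-preserving, field-compatible reinterpretation: $p^{(s)}$ and $2^w$ are unrelated, so a word $\in[2^w]$ cannot be treated as an $\F_{p^{(s)}}$-element without losing or gaining information, and the linear algebra (inner products, row/column elimination) that both the retrieval encoding and the sparsification rely on must be carried out in a single field. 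Without a fix, the spill-retrieval linear systems are simply not over $\F_{p^{(s)}}$, and neither the full-rank claim nor the query-time argument goes through.

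The paper's resolution is the \emph{base-changing} tool (\cref{thm:changing_base}, from \cite{dodis2010changing}): for each $s$, a fresh disjoint chunk of $\Theta(L\log L)$ $\F_{2^w}$-words of $\mfix$ is re-encoded losslessly as a sequence $\mconv$ over base $P^{(s)}$ at a cost of $O(\log L)$ words of redundancy per conversion, $\mconv$ is used as the augmented array in the $\F_{P^{(s)}}$ retrieval, and the resulting $\mretr$ is converted back to $\F_{2^w}$-words for storage. The per-conversion $O(\log L)$ overhead, summed over $S$ types, is exactly where the $O(S\log L)$-word redundancy bound comes from. One consequence you should also account for is that, once a chunk of $\mfix$ is absorbed this way, reading a word from that chunk is no longer a ``direct augmented query'': it requires a base-conversion lookup into $\mconv$, each probe of which is a query to the $\F_{P^{(s)}}$ retrieval, each probe of which is a base-conversion lookup into $\mretr$ — a constant-depth chain, but one you must argue is $O(1)$. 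Finally, note that the paper uses disjoint chunks of $\mfix$ per retrieval, so the $S+1$ linear systems are genuinely independent; your ``shared augmented values pin down a joint system'' picture adds coupling that is neither needed nor obviously sound, since the main retrieval's solution already fixes those cells before the spill systems can use them as free variables.
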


For the ease of discussion, we assume that the data structure knows $s_i$ when handling a query of the $i$-th spillover representation. We can ensure this by storing the first $O(1)$ words of each $m_i$ separately, and removing them from $m_i$. This only causes $M_{\min}, M_{\max}$ to shrink by $O(1)$ words, which does not make any difference to the algorithm.

Next, we present our data structure for \cref{lem:storing_representations}. Our data structure breaks the spillover representations into three parts.

\begin{itemize}
    \item The fixed-length part, which consists of the first $M_{\min}$ words of each $m_i$.
    \item The variable-length part, which consists of the remaining part of each $m_i$, where the length of each remaining part does not exceed $M_{\max}-M_{\min}$ words.
    \item The spills $k_i$.
\end{itemize}

For simplicity, our data structure explicitly stores the parameters $S, M_{\min}, M_{\max}, M^{(s)}, K^{(s)}$ and each type of $s_i$, causing only $O(S)$ words of redundancy.
Storing the fixed-length part is simple: Since we take a fixed number of words from each $m_i$, these $L M_{\min}$ words can be stored sequentially. We denote this sequence by $\mfix$. 

In the following paragraphs, we address the variable-length parts and the spills.

\paragraph{Storing variable-length parts.}

We first recall the augmented retrieval problem defined in \cref{section:retrieval}. In the retrieval part, the data structure needs to support some key-value queries from a key universe $[N]$ and a value field $\F$. To support these queries, the data structure also stores an augmented array in $\F$ of length at least $\Theta(N \log N)$. It can achieve constant query time and $O(1)$ words of redundancy with the help of the augmented array.

The task of storing variable-length parts can be modeled as a retrieval problem:
The $j$-th word in the $i$-th variable-length part is viewed as a key $(i,j) \in [L] \times [M_{\max} \!-\! M_{\min}]$ in the retrieval problem, with the associated value being the content of that word.
When we need to access some word in the variable-length part (assuming that we already know $s_i$ and thus $M_i$), we can just query the retrieval data structure with the corresponding key and get the result in constant time.

We now use \cref{thm:augmented_retrieval} to solve this retrieval problem, using part of $\mfix$ as the augmented array. The length of the augmented array required by the lemma is $\Theta(N \log N)$ words, so we will take the first
\[
\Theta\bk*{L(M_{\max}-M_{\min})\log\bk*{L(M_{\max}-M_{\min})}}
\]
words of $\mfix$ to serve as the augmented array, and combine them with retrieval queries in $[L] \times [M_{\max}-M_{\min}]$ to form an augmented retrieval data structure. The field $\F$ in the augmented retrieval is set to be $\GF(2^w)$, the finite field with size $2^w$. (In the cell-probe model, arithmetic operations over the field $\GF(2^w)$ can be completed in constant time.)

\paragraph{Storing spills.}

Recall that in the retrieval problem, all values are from the same field $\F$. The task of storing spills is different from this setting since each spill may come from a different range $[K_i]$.
Nevertheless, there are only $S$ types of different universes $[K^{(s)}]$, so we can build data structures for each type of universe separately. Formally, we say that the spill $k_i$ has type $s_i$, and we will split the problem into $S$ retrieval problems, each storing all spills of a certain type $s$ (which are in the same range). Our algorithm for storing spills is as follows: For each different $s$, we consider all spills of type $s$, and construct an augmented retrieval data structure with keys being the indices of these spills and the value being the content of the spill (which is in $[K^{(s)}]$). Similar to before, the augmented elements required by the augmented retrieval are taken from $\mfix$.

However, the above solution has two issues. First, \cref{thm:augmented_retrieval} requires that the values are from a field, but $K^{(s)}$ is not necessarily a prime power. We apply the following fix: For each possible $s$, we find the smallest prime larger than $K^{(s)}$, denoted by $P^{(s)}$. For the augmented retrieval for type $s$, the data structure uses $\F_{P^{(s)}}$ as the field. The following lemma bounds the gap from $K^{(s)}$ to the next prime $P^{(s)}$, and thus the redundancy introduced by this fix is small.

\begin{lemma}[\cite{NextPrime}]\label{lem:nextprime}
    For sufficiently large $n$, there exists a prime in $[n, \, n+n^{7/11}]$.
\end{lemma}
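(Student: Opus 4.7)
The statement is a claim about gaps between consecutive primes, equivalent to the assertion $p_{k+1} - p_k \le p_k^{7/11}$ for all sufficiently large $k$. This is a purely analytic number theory fact, and my plan is not to aim for the current record (Baker--Harman--Pintz give the vastly stronger bound $p_k^{0.525+o(1)}$), but merely to derive the claimed exponent $7/11 = 0.636\ldots$ from classical tools. Since the paper cites the statement from an external source, I would be content with a self-contained derivation that black-boxes one deep input, namely a zero-density theorem for $\zeta(s)$.

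The standard route is to prove $\psi(x+h) - \psi(x) = h + o(h)$ for $h = x^{7/11}$, where $\psi(x) = \sum_{p^k \le x}\log p$ is the Chebyshev function; the existence of a prime in $(x, x+h]$ follows immediately because the prime-power contribution is negligible. The starting point is the truncated explicit formula
\[
\psi(x+h) - \psi(x) \;=\; h \;-\; \sum_{|\gamma| \le T} \frac{(x+h)^\rho - x^\rho}{\rho} \;+\; O\!\left(\frac{x \log^2 x}{T}\right),
\]
where $\rho = \beta + i\gamma$ ranges over non-trivial zeros of $\zeta(s)$ with $|\gamma| \le T$. I would choose $T$ to be a suitable power of $x$ so that the tail error is $o(h)$, and then control the zero sum by bounding each term via $|(x+h)^\rho - x^\rho|/|\rho| \le \min(h\, x^{\beta-1}, 2x^\beta)/|\gamma|$, stratifying the sum according to the real part $\beta$ of the zeros.

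The estimation of the stratified sum requires a zero-density bound of the form $N(\sigma, T) \ll T^{A(1-\sigma)}(\log T)^c$. Using Huxley's classical density theorem with $A = 12/5$ valid throughout $\sigma \ge 1/2$, a routine calculation (Ingham's lemma applied in modern form) shows the zero-sum is $o(h)$ whenever $h = x^\theta$ with $\theta > 1 - 5/12 = 7/12$. Since $7/11 > 7/12$, the desired conclusion follows with room to spare. The main obstacle is the zero-density input itself, whose proof is a delicate piece of analytic number theory (involving mean-value estimates for Dirichlet polynomials); I would cite it as a black box from Huxley or Ivi\'c rather than reprove it. Everything downstream---truncating the explicit formula, combining the stratified zero bounds, and discarding the prime-power contribution $\psi(x) - \vartheta(x) = O(\sqrt{x})$---is a standard bookkeeping exercise, made easier by the fact that we only need qualitative existence and therefore can absorb all constants and $\varepsilon$-slack loosely.
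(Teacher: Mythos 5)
The paper does not actually prove this lemma; it is cited verbatim from an external reference \cite{NextPrime} and used as a black box. There is therefore no internal proof to compare your sketch against.

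That said, your outline is a correct summary of the standard analytic-number-theory route: bound $\psi(x+h)-\psi(x)$ via the truncated explicit formula, stratify the zero sum by the real part $\beta$, and control each stratum with a zero-density bound $N(\sigma,T) \ll T^{A(1-\sigma)}\log^c T$. Your arithmetic is also right: Huxley's $A=12/5$ gives primes in intervals of length $x^{\theta}$ for any $\theta > 1 - 1/A = 7/12$, and $7/11 > 7/12$, so the claimed exponent follows with slack. In fact even Ingham's 1937 theorem (exponent $5/8+\varepsilon$, obtainable from the Hardy--Littlewood subconvexity bound $\zeta(1/2+it) \ll t^{1/6+\varepsilon}$ via Ingham's $\theta=(1+4c)/(2+4c)$) already implies $7/11$, so the paper is asking for considerably less than what the modern density machinery delivers. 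The one caveat worth flagging is that your proposal, like the paper, ultimately black-boxes the genuinely hard input --- the zero-density estimate --- so it is a verification that the citation is plausible rather than a self-contained proof. For the data-structure application, any exponent strictly below $1$ with some room would do (the paper only needs $P^{(s)}/K^{(s)} - 1 = O((K^{(s)})^{-c})$ for a constant $c>1/3$), so there is no tension between the lemma as stated and what your argument produces.
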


Therefore, the redundancy introduced for a single spill is
\[\log P^{(s)}-\log K^{(s)}=\log\frac{P^{(s)}}{K^{(s)}}\le \log \bk*{1+\bk[\big]{K^{(s)}}^{-4/11}}\le O\bk*{\bk[\big]{K^{(s)}}^{-4/11}}\]
bits,
so this fix introduces less than $O(1/L)$ bits of redundancy overall since $K^{(s)}>L^3$.

The second issue is more critical: The augmented retrieval data structure under $\F_{P^{(s)}}$ requires an augmented array in field $\F_{P^{(s)}}$, but $\mfix$ is stored in machine words of base $2^w$. To build the augmented array and use \cref{thm:augmented_retrieval}, we need to convert an array from base $2^w$ to base $P^{(s)}$. Moreover, the data structure constructed by \cref{thm:augmented_retrieval} is stored as a sequence of elements in $\F_{P^{(s)}}$. To store this data structure in memory words, we need to convert its representation from base $P^{(s)}$ back to base $2^w$ again. To perform base conversions, we use the following result from \cite{dodis2010changing}.%
\footnote{The original theorem only states how to convert from arbitrary base $b$ to binary words of base $2^w$, but their proof can be naturally generalized to arbitrary target base $q$.}

\begin{theorem}[Implicit in {\cite{dodis2010changing}}]\label{thm:changing_base}
In the word RAM model, given $n,p,q$ where $p=\poly q$, $n\le \poly q$, we can represent a sequence $A[1\ldots n]$ of elements in $[p]$ using a sequence $B$ of elements in $[q]$, such that
\begin{itemize}
    \item $B$ is of length $n \log_p q + O(\log q)$. This implies that conversion only incurs $O(\log^2 q)$ bits of redundancy.
    
    \item Each element in $A$ can be restored in constant time by accessing elements in $B$ under the word RAM model with word size $\Omega(\log q)$.
\end{itemize}
\end{theorem}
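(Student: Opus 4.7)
The plan is to adapt the arithmetic-coding construction implicit in Dodis, P{\v{a}}tra{\c{s}}cu, and Thorup~\cite{dodis2010changing}, generalizing it from a binary target alphabet to an arbitrary base $q$. The length bound $n\log_q p + O(\log q)$ base-$q$ digits is essentially information-theoretic: viewing $A$ as the single integer $X = \sum_{i=1}^n A[i]\cdot p^{i-1}\in[p^n]$ and letting $B$ be its base-$q$ expansion already gives length $\lceil n\log_q p\rceil$, losing only one base-$q$ digit (i.e.\ $O(\log q)$ bits) to rounding. The whole difficulty is supporting $O(1)$-time random access to each $A[i]$.

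First I would partition $A$ into blocks of size $B_0 = \Theta(\log q/\log p)$, chosen so that each block value $x_j\in[p^{B_0}]$ lies in a range polynomially bounded by $q$ and therefore fits in $O(1)$ machine words. Under $w=\Omega(\log q)$, a precomputed lookup table of size $\poly q$---depending only on $p$, $q$, $B_0$ and hence shareable across all invocations---then lets every within-block operation (conversion between the two bases, spillover decomposition, extraction of a specific base-$p$ digit) run in constant time. For each block I would apply the P{\v{a}}tra{\c{s}}cu-style spillover representation, writing $x_j = m_j\cdot S_j + s_j$ where $m_j$ occupies a fixed integer number $r$ of base-$q$ digits and the residual $s_j$ lies in a smaller alphabet $[S_j]\subseteq[q]$. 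The mains $m_j$ are concatenated at known offsets in $B$, so any $m_j$ is $O(1)$-time accessible. The remaining spills $(s_1,\ldots,s_{n/B_0})$ form a sequence over a smaller alphabet, and are stored by recursively invoking the same construction; once the residual information fits in $O(1)$ words, it is bit-packed verbatim as the trailing portion of $B$. A query for $A[i]$ is then answered by locating block $j = \lceil i/B_0\rceil$, reading $m_j$ from its fixed offset, recovering $s_j$ through the recursive layers, reconstructing $x_j = m_j\cdot S_j + s_j$, and extracting $A[i]$ from $x_j$ via a final lookup.

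The main obstacle is ensuring that the query walks only $O(1)$ recursion levels so the total time is truly constant. The resolution, following~\cite{dodis2010changing}, is to lay out all the level-wise data structures statically as disjoint contiguous regions of $B$, so that for any query index $i$ the relevant block indices at every level are computable directly by $O(1)$ word-RAM arithmetic rather than by walking a conceptual recursion tree; each level's access then uses only the shared $\poly q$-size lookup table and a constant number of memory probes. The per-level rounding cost is one base-$q$ digit ($O(\log q)$ bits), and with $O(1)$ effective levels the total redundancy is $O(\log^2 q)$ bits as claimed.
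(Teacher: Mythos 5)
The paper offers no proof of this theorem—it is a citation to \cite{dodis2010changing} together with a footnote asserting that the proof generalizes from target base $2^w$ to arbitrary $q$—so the task is to reconstruct that argument. Your high-level scaffolding (block, apply a spillover decomposition, store mains at fixed offsets, handle the spill sequence separately) is in the right family of ideas, but there is a genuine gap in the central claim of "$O(1)$ effective levels," and I do not believe the construction as written achieves $O(1)$ worst-case access.

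With $p = \poly q$ your block size is $B_0 = \Theta(\log q/\log p) = \Theta(1)$, forced by requiring block values to lie in a $\poly q$ range so lookup tables apply. The spill sequence then has $\Theta(n)$ elements, and the spill alphabet is only guaranteed to be $\le q$; neither the number of elements nor the alphabet size is guaranteed to shrink more than geometrically. A recursion of this shape reaches an $O(1)$-word base case only after $\Omega(\log n) = \Omega(\log q)$ levels. Your final paragraph observes that all per-level offsets can be computed arithmetically in $O(1)$ time, but this solves an easier problem: even with precomputed offsets, reconstructing $A[i]$ still requires the spill at every level along the path (each level's element depends on its own spill, which lives one level deeper), so the number of memory probes is at least the number of levels. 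There is also a small arithmetic mismatch: one base-$q$ digit ($O(\log q)$ bits) per level times "$O(1)$ levels" is $O(\log q)$ bits, not the claimed $O(\log^2 q)$; consistency with the stated bound actually points toward $\Theta(\log q)$ levels, which further conflicts with $O(1)$ time if each level must be visited.

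What is missing is the reason the recursion has constant depth. In the aB-tree/spillover constructions of \cite{patrascu2008succincter,dodis2010changing}, the branching factor is polynomial (on the order of $q^{\Omega(1)}$), so the tree has depth $O(1/\eps) = O(1)$; the spills of a node's many children are gathered into that node and re-encoded with its own spillover representation, so a query can walk the root-to-leaf path in $O(1)$ probes, reading at each node only the $O(1)$ words that contain the next child's spill. This necessarily involves block "values" far larger than $\poly q$, so the argument cannot rest on a $\poly q$-size lookup table applied to a whole block; one instead exploits that a single query touches only an $O(1)$-word window of the node's data. Your proposal, constrained to $O(1)$-element blocks for lookup-table reasons, cannot reach constant depth, and the $O(1)$-time claim does not follow.
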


Using this theorem, we can construct the retrieval data structure for each type $s$ as follows: First, we convert part of $\mfix$ into elements in $[P^{(s)}]$, use them to build the augmented retrieval under $\F_{P^{(s)}}$, and finally convert the retrieval data structure back to word representation. In particular, our algorithm will perform the following steps for each $s$. (Also see \cref{fig:storing_spills}.)

\begin{enumerate}
    \item\label{item:step1} Take the next $\Theta(L \log L)$ words in $\mfix$ that are not used in the previous steps of the construction, and convert this sequence of words to base $P^{(s)}$ using \cref{thm:changing_base}. Let the resulting sequence be $\mconv$, which has $\Theta\bk[\big]{L \log L \cdot \frac{w}{\log P^{(s)}}} = \Theta(L \log L)$ elements since $K^{(s)} = \poly n$ and $P^{(s)}$ is the next prime after $K^{(s)}$ (and thus $P^{(s)} \le 2K^{(s)}$).
    \item Construct an augmented retrieval data structure as discussed earlier: The keys of the retrieval are indices $i$ of spills of type $s$, which are from the key universe $[L]$, and the associated values are the spills $k_i$. The augmented retrieval (\cref{thm:augmented_retrieval}) requires an augmented array of length $\Theta(L \log L)$, for which we use the sequence $\mconv$ from the previous step. The resulting augmented retrieval data structure can be represented using a sequence of elements in $\F_{P^{(s)}}$, denoted by $\mretr$.
    \item Convert $\mretr$ back to a sequence of words (in base $2^w$) using \cref{thm:changing_base} again, then store these words in the memory sequentially.
\end{enumerate}

\begin{figure}[ht]
    \centering
    \begin{tikzpicture}
    \def \leftspace {1};
    \def \stepheight {-2.5};
    \def \mytick {0.05};
    \def \drawtick[#1,#2]{\draw ({#1}, {#2-\mytick}) -- ++ (0,2*\mytick);} %
    \def \drawlargetick[#1,#2]{\draw ({#1}, {#2-2*\mytick}) -- ++ (0,4*\mytick);}
    \def \drawsegment[#1,#2,#3,#4,#5,#6,#7,#8]{ %
        \draw ({(#1)}, {(#2)}) -- ({(#1) + (#3)}, {(#2)})
         node[midway, above = 0.1] (#7) {#5}
         node[midway, below = 0.1] (#8) {#6};
        \drawlargetick[(#1),(#2)]
        \drawlargetick[(#1) + (#3),(#2)]
        \foreach \i in {1,...,#4}
        {
            \drawtick[(#1) + \i / (#4 + 1) * (#3),(#2)]
        }
    }
    \drawsegment[\leftspace,0,5,9,$\mfix$,$[2^w]$,,lower_mfix]
    \begin{scope}[local bounding box=subqueries]
        \drawsegment[\leftspace + 12/2,0,0.75,0,,,,]
        \drawsegment[\leftspace + 14/2,0,0.75,0,,,,]
        \drawsegment[\leftspace + 17/2,0,0.75,0,,,,]
        \node[] at (\leftspace + 14/2 + 9/8, 0) () {$\cdots$};
    \end{scope}
    \node[below=-0.25 of subqueries] (lower_spill) {$[K^{(s)}]$};
    \node[above=-0.25 of subqueries] {$k_i$};
    \drawsegment[\leftspace,\stepheight,5,4,$\mconv$,$[P^{(s)}]$,upper_mconv,lower_mconv]
    \begin{scope}[local bounding box=subqueries2]
        \drawsegment[\leftspace + 12/2,\stepheight,0.75,0,,,,]
        \drawsegment[\leftspace + 14/2,\stepheight,0.75,0,,,,]
        \drawsegment[\leftspace + 17/2,\stepheight,0.75,0,,,,]
        \node[] at (\leftspace + 14/2 + 9/8, \stepheight) () {$\cdots$};
    \end{scope}
    \node[below=-0.25 of subqueries2] (lower_spill2) {$[P^{(s)}]$};
    \node[above=-0.25 of subqueries2] (upper_spill2) {};
    \node[] at (0, \stepheight * 0.5) () {Step 1:};
    \draw [->, color=red] (lower_mfix.south) -- (upper_mconv.north) node[midway,left] {Changing Base};
    \draw [->, color=red] (lower_spill.south) -- (upper_spill2) node[midway,left] {};
    \drawsegment[\leftspace,2*\stepheight,9,8,$\mretr$,$[P^{(s)}]$,upper_mretr,lower_mretr];
    \node[] at (0, \stepheight * 1.5) (tmp) {Step 2:};
    \draw [->, color=red] (lower_mconv.south) -- (upper_mretr.north west);
    \draw [->, color=red] (lower_spill2.south) -- (upper_mretr.north east);
    \node[color = red] at (\leftspace + 9.5/2, \stepheight * 1.4) (tmp) {Augmented Retrieval};
    \drawsegment[\leftspace,3*\stepheight,9,17,,$[2^w]$,upper_final,lower_final];
    \node[] at (0, \stepheight * 2.5) (tmp) {Step 3:};
    \draw [->, color=red] (lower_mretr.south) -- (upper_final) node[midway,left] {Changing Base};
\end{tikzpicture}
    \caption{Storing all spills of type $s$. (1) First, we convert part of $\mfix$ from base $2^w$ to base $P^{(s)}$ using \cref{thm:changing_base}, getting $\mconv$; we also ``round up'' the universe of spills from $[K^{(s)}]$ to $[P^{(s)}]$. (2) Next, we build the augmented retrieval by \cref{thm:augmented_retrieval} using $\mconv$ as augmented elements, getting the representation of this data structure $\mretr$. (3) Finally, we convert it back to binary words using \cref{thm:changing_base}.}
    \label{fig:storing_spills}
\end{figure}

Next, we illustrate the process of answering queries for spillover representations. There are two types of queries involved in this construction.

\begin{itemize}
\item Querying a spill $k_i$ of type $s$. The algorithm queries the augmented retrieval data structure above to get $k_i$, which leads to a constant number of probes on $\mretr$. By \cref{thm:changing_base}, each probe on $\mretr$ takes a constant time accessing the binary words stored in the memory.
\item Querying a word in the part of $\mfix$ used in the above construction. By the base-conversion process in Step~\ref{item:step1}, this word can be found by probing a constant number of elements in $\mconv$. Each probe of $\mconv$ is a query to the augmented retrieval data structure, which takes $O(1)$ probes to $\mretr$, each of which takes $O(1)$ accesses to the memory words. Thus, the entire query process still takes a constant time to complete.
\end{itemize}
In summary, both types of queries take constant time in the worst case.

\paragraph{Combining two parts.}

We have discussed how to concatenate the variable-length parts of $m_i$ and the spills $k_i$. The final construction for \cref{lem:storing_representations} is a simple combination of these two parts: We build an augmented retrieval that stores the variable-length parts of $m_i$; we also build an augmented retrieval for each type $s$ that stores all spills of type $s$. Within each of the augmented retrievals, several words in $\mfix$ are also encoded as augmented elements. We store these augmented retrievals sequentially in memory one by one, together with a pointer to the beginning of each augmented retrieval data structure.

To allow queries to the words in $\mfix$, we additionally store two numbers: the number of words in $\mfix$ that are used in the augmented retrieval for the variable-length parts, and that for spills---for different types of spills $s$, we use the same number of words from $\mfix$ in the augmented recovery. Given these two numbers, when the user queries a word in $\mfix$, the data structure knows which augmented retrieval contains the queried word (if there is one), and the index of the queried word within that augmented retrieval. The query can then be completed in constant time.

\paragraph{Correctness analysis.} In the above paragraphs, we already show that the queries are correct and take constant time. Now we show the remaining parts of the correctness. We first show that when constructing the augmented retrieval data structures, we have enough words of $\mfix$ to build the augmented array. When storing the variable-length part, it uses $\Theta\bk*{L(M_{\max}-M_{\min})\log\bk*{L(M_{\max}-M_{\min})}}$ words as the augmented array. When storing the spills, for each possible $s$, it converts $\Theta(L\log L)$ words to get the augmented array under base $P^{(s)}$. The total number of used words is thus bounded by
\begin{align*}
    &\Theta\bk[\big]{L(M_{\max}-M_{\min})\log\bk*{L(M_{\max}-M_{\min})}+SL\log L}\\
    \leq{} & L \cdot \Theta\bk[\big]{(M_{\max}-M_{\min})\log(LM_{\max}) + S\log L}\\
    <{} & L \cdot M_{\min},
\end{align*}
where the last inequality holds according to the requirement \eqref{inequality_store_representations} of the lemma. So, the number of words in $\mfix$ is more than we need.

As for the success probability, recall that we only construct $S+1$ augmented retrievals. For each of them, there is a $1/\poly L$ probability to fail, where the $\poly L$ factor can be any polynomial in $L$, as stated in \cref{thm:augmented_retrieval}. Therefore the overall success probability is $1-1/\poly L$.

Finally, we consider the redundancy of this construction. The redundancy consists of the following parts:

\begin{itemize}
    \item The data structure first uses $O(S)$ words to store auxiliary information, including the pointers to each augmented retrievals, $O(1)$ extra parameters, and the number of words from $\mfix$ that are used in each augmented retrievals.
    \item To make the construction of the augmented retrieval succeed with high probability, we need to sample $O(\log L)$ hash functions and store one index with $O(\log \log L)$ bits, and we store it in a machine word. The data structure constructs $S + 1$ augmented retrievals, so these indices incur $O(S)$ words of redundancy.
    \item By \cref{thm:changing_base}, when the data structure stores each type of spill, each base-conversion step incurs $O(\log L)$ words of redundancy. These costs sum up to $O(S \log L)$ words of redundancy.
    \item Finally, rouding up the spill size from $K^{(s)}$ to $P^{(s)}$ results in $\log \frac{P^{(s)}}{K^{(s)}}$ bits of redundancy for each spill. By \cref{lem:nextprime}, this part is less than
    \[
    \log\bk*{1+\bk[\big]{K^{(s)}}^{-4/11}}<O\bk*{\bk[\big]{K^{(s)}}^{-4/11}}
    \]
    bits. Since $K^{(s)}\geq L^3$, this is less than $O(L^{-1})$ bits for each spill, which sums up to $O(1)$ bits for all spills.
\end{itemize}
Therefore, the total redundancy of our data structure is $O(S \log L)$ words.

\subsection{Putting everything together}

In this subsection, we combine the subroutines from previous subsections and prove \cref{thm:cell_probe_main}.

\begin{proof}[Proof of \cref{thm:cell_probe_main}]

We first partition all $n$ keys into $n/B$ buckets using a hash function, where $B=\poly \log n$. Our hash function is given by the following lemma, whose proof is deferred to \cref{app:random_permutation}.

\begin{restatable}{lemma}{randomPermutation}
    \label{lem:random_permutation}
    Let $B\ge \log ^4 n$, and let $\epsilon = \Omega(\sqrt{\log\log U/\log B})$. There is a family $\mathcal{H}$ of random permutations $h : [U] \to [U]$, with the following guarantees:
    \begin{itemize}
        \item Partition $U$ into $n/B$ buckets, each of size $U/(n/B)$, and fix a set $S$ of $n$ keys in $U$. Then with probability $\ge 1-1/(4n^2)$, the number of keys hashed to each bucket is in $[B-B^{2/3}, \, B+B^{2/3}]$.
        \item A member in $\mathcal{H}$ can be described using $O(n^\epsilon)$ bits, and can be evaluated in constant time.
    \end{itemize}
\end{restatable}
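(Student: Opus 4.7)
The plan is to construct $\mathcal{H}$ as a family of pseudorandom permutations derived from $k$-wise almost-independent hash functions, where $k=\Theta(\log n)$ is chosen large enough that bucket sizes concentrate tightly around $B$.

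\textbf{Construction.} First I would build a $k$-wise independent hash family $g:[U]\to[U]$ with $O(n^\epsilon)$-bit description and constant-time evaluation via Siegel's tabulation-based construction. Tuning Siegel with $c=\Theta(1/\epsilon)$ levels gives much more than $\Theta(\log n)$-wise independence while using $O(c\cdot U^{1/c}\log U)=O(n^\epsilon)$ bits and $O(c)=O(1)$ evaluation time; the allowed range $\epsilon=\Omega(\sqrt{\log\log U/\log B})$ in the lemma is exactly what is needed for Siegel's parameters to support $k\ge\Theta(\log n)$ with this seed length. To upgrade $g$ to a permutation, I would apply a constant-round Feistel network with independent copies of $g$ as round functions. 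This yields a family $\mathcal{H}$ of honest permutations that is $1/\poly n$-close in statistical distance to a genuinely $k$-wise independent family of permutations on any $k$ inputs.

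\textbf{Concentration.} Fix a set $S\subseteq[U]$ with $|S|=n$ and a bucket $i$. The count $s_i=\sum_{x\in S}\ind[h(x)\in\text{bucket }i]$ has mean $B$, and its terms are $k$-wise almost-independent Bernoullis with parameter $B/n$. Applying the $k$-wise Chernoff bound of Schmidt--Siegel--Srinivasan yields
\[
\Pr\bigl[|s_i-B|>B^{2/3}\bigr]\le O\bigl(k/B^{1/3}\bigr)^{k/2}+1/\poly n,
\]
which is at most $1/(4n^3)$ once $k=\Theta(\log n)$ and $B\ge\log^4 n$ (the additive $1/\poly n$ slack from the Feistel reduction is dominated). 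A union bound over the $n/B$ buckets gives the claimed $1/(4n^2)$ overall failure probability.

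\textbf{Main obstacle.} The hard part is simultaneously achieving (i) true permutations, (ii) constant-time evaluation, (iii) enough independence for a $k$-wise Chernoff bound, and (iv) $O(n^\epsilon)$-bit descriptions. Pure $k$-wise independent hash families are not permutations, while classical constructions of $k$-wise independent permutations such as Kaplan--Naor--Reingold do not admit constant-time evaluation at this seed length. The Feistel detour bridges these requirements by turning constant-time hash functions into constant-time permutations at the cost of relaxing to almost-$k$-wise independence, and the main technical check is to verify that the $1/\poly n$ statistical slack introduced does not meaningfully degrade the concentration estimate above.
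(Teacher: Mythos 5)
Your route differs from the paper's: the paper simply cites Lemma~5.1 of \cite{li2024dynamic} (a family of constant-time bijections on $\{0,1\}^L$ built internally from $O(2^{(L-s)\eps^2/4})$-wise independent hash functions) and observes that the only place the original proof uses $s \le (1-\Omega(1))L$ is in ensuring the hash function's independence degree exceeds $k = \Theta(L)$; dropping that assumption and setting $\eps = \Omega(\sqrt{\log L/(L-s)})$, $L = \log U$, $s = \log(n/B)$ recovers the stated range of $\eps$. Your proposal instead rebuilds the family from scratch via Siegel/tabulation hashing plus a constant-round Feistel network.

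There is a concrete gap in your concentration step. You treat the Feistel's ``$1/\poly n$-close in statistical distance on any $k$ inputs'' as contributing only an additive $1/\poly n$ slack to the final probability bound. But the Schmidt--Siegel--Srinivasan argument proceeds by bounding the $k$-th central moment
\[
\E\bigl[(s_i - B)^k\bigr] \;=\; \sum_{(x_1,\dots,x_k) \in S^k} \E\Bigl[\textstyle\prod_{j}\bigl(Y_{x_j} - B/n\bigr)\Bigr],
\]
a sum over $n^k$ tuples, each of which is a $[-1,1]$-valued function of at most $k$ outputs of $h$. With exact $k$-wise independence each term matches the fully-independent value and the argument closes. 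With only a $\delta$-bound in total variation per $k$-tuple, the accumulated error is $n^k\delta$, which needs $\delta \ll (kB/n^2)^{k/2} = n^{-\Theta(\log n)}$ for $k = \Theta(\log n)$ — super-polynomially small, far beyond the $O(k^2/\sqrt{U}) = 1/\poly n$ closeness that a constant-round Luby--Rackoff analysis provides. So the Feistel detour, as stated, does not give you enough independence for the moment method. To repair this you would need either a multiplicative (pointwise) bound on the Feistel's output distribution (not merely TV), or a construction whose \emph{bucket assignment} is exactly $k$-wise independent, or to fall back on whatever stronger structure \cite{li2024dynamic} exploits. You do flag the Feistel step as ``the main technical check'' — this is precisely where the check fails, not because of a missing citation but because the error accounting is qualitatively wrong.

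As a secondary note, your statement that the $\eps$-range in the lemma ``is exactly what is needed for Siegel's parameters'' is off: the Siegel/Thorup seed-length constraint for $\Theta(\log n)$-wise independence on $[U]$ is $\eps \ge \Omega(\sqrt{\log\log U/\log U})$, which is weaker than the lemma's $\eps \ge \Omega(\sqrt{\log\log U/\log B})$. The binding constraint in the paper's proof comes from the internal independence requirement of the \cite{li2024dynamic} permutation construction, not from the hash function seed length.
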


We remark that later in the word RAM model, we will require that the overall redundancy is $O(n^\eps)$ for any constant $\eps > 0$. In that case, we can set $B$ to be a sufficiently large power of $\log n$ in order to reduce the cost of storing the hash function.

Fixing the set of keys that the dictionary needs to store and taking a random hash function, we define $p(s)$ as the probability that a randomly chosen bucket has size $s$. Since $p(s)$ may depend on the input key set, we need to explicitly store $p(s)$ for $s \in [B-B^{2/3}, \, B+B^{2/3}]$ when constructing the data structure.

Within each bucket, the data structure is responsible for storing a set of $s_i$ elements in $[V]$, so we use \cref{lem:bucket_rep} to obtain a spillover representation of length $\log \frac{1}{p(s_i)} + \log \binom{V}{s_i} + O(1/n^2)$. The following lemma shows that the total length of all spillover representations is at most $\log \binom{U}{n} + O(\log n)$.

\begin{lemma}
    \label{lem:bucket_entropy}
    Let $p(s)$ be defined as above. Then, we have
    \begin{align*}
        \E_{h \in \mathcal{H}}\Bk*{\sum_{i=1}^{n/B}\bk*{\log\frac{1}{p(s_i)}+\log \binom{V}{s_i}}}\le \log \binom{U}{n} + O(\log n).
        \numberthis\label{eq:bucket_entropy}
    \end{align*}
    where the expectation is taken over the choice of the hash function $h \in \mathcal{H}$ described in \cref{lem:random_permutation}.
\end{lemma}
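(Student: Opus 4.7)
The plan is to reduce the expectation to a functional of the marginal bucket-size distribution $p$, and then upper bound it using Gibbs' inequality against a carefully chosen binomial reference distribution. First, by swapping the order of summation and using the definition $p(s) = \frac{1}{n/B}\sum_i \Pr_h[s_i = s]$, for any function $g$ we have
\[
\E_h\left[\sum_{i=1}^{n/B} g(s_i)\right] = \sum_s g(s) \sum_i \Pr_h[s_i = s] = (n/B)\,\E_{s\sim p}[g(s)].
\]
Applying this with $g(s) = \log(1/p(s)) + \log\binom{V}{s}$ rewrites the left-hand side of~\eqref{eq:bucket_entropy} as $(n/B)\bigl(H(p) + \E_p[\log\binom{V}{s}]\bigr)$. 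Moreover, since $\sum_i s_i = n$ for every $h$, linearity of expectation gives $\E_p[s] = B$.

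Next, take the reference distribution $p^*(s) \defeq \binom{V}{s}(n/U)^s(1-n/U)^{V-s}$, the binomial $\mathrm{Bin}(V, n/U)$, which also has mean $V \cdot (n/U) = B$. Gibbs' inequality yields $H(p) \le -\E_p[\log p^*(s)]$. Expanding $-\log p^*(s) = -\log\binom{V}{s} - s\log(n/U) - (V - s)\log(1 - n/U)$ and using $\E_p[s] = B$, the $\log\binom{V}{s}$ terms cancel:
\[
H(p) + \E_p[\log\binom{V}{s}] \;\le\; -B\log(n/U) - (V - B)\log(1 - n/U) \;=\; V \cdot H_2(n/U),
\]
where $H_2(x) \defeq -x\log x - (1-x)\log(1-x)$ is the binary entropy function. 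Multiplying by $n/B = U/V$ bounds the expectation in~\eqref{eq:bucket_entropy} by $U \cdot H_2(n/U)$.

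To finish, the classical inequality $\binom{U}{n} \ge 2^{U H_2(n/U)}/(U + 1)$, which follows from the fact that the mode of $\mathrm{Bin}(U, n/U)$ has probability at least $1/(U+1)$, gives $U H_2(n/U) \le \log\binom{U}{n} + \log(U + 1) \le \log\binom{U}{n} + O(\log n)$ since $U \le \poly n$, completing the proof. The key observation is the cancellation of $\E_p[\log\binom{V}{s}]$ in the Gibbs step via the binomial $p^*$ with matching mean $B$; this sidesteps any delicate Taylor-expansion or total-correlation argument. The only property of the hash family used is the linearity identity $\E_p[s] = B$, so the argument works uniformly for any family $\mathcal{H}$ without any appeal to higher-order independence.
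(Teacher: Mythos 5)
Your proof is correct, and it realizes the same underlying comparison (to the product-Bernoulli distribution $\mathrm{Bin}(V,n/U)$ per bucket) via a somewhat different and more self-contained technical route. The paper introduces three explicit distributions over key sets ($\mathcal{D}_{\textup{hash}}$, $\mathcal{D}_{\textup{ind}}$, $\mathcal{D}_{\textup{unif}}$), identifies the LHS as $H(\mathcal{D}_{\textup{hash}})$, invokes the abstract principle that independence maximizes entropy subject to given marginals to get $H(\mathcal{D}_{\textup{hash}})\le H(\mathcal{D}_{\textup{ind}})$, and then uses Stirling to compare $H(\mathcal{D}_{\textup{ind}})$ with $\log\binom{U}{n}$. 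You instead collapse the LHS to $(n/B)\bigl(H(p)+\E_p[\log\binom{V}{s}]\bigr)$ and apply Gibbs' inequality directly with the reference $p^*=\mathrm{Bin}(V,n/U)$; the nice observation that the $\E_p[\log\binom{V}{s}]$ term cancels exactly against the $\binom{V}{s}$ in $p^*$ (using only the first-moment fact $\E_p[s]=B$, forced by $\sum_i s_i=n$) makes this step completely elementary. Your second step, via the mode-probability bound $\binom{U}{n}\ge 2^{UH_2(n/U)}/(U+1)$, replaces the paper's Stirling computation with an equally standard but arguably cleaner inequality. What the paper's route buys is a conceptual framing in terms of the entropy of the key-set distribution, which reads naturally in the succinct-data-structure context; what your route buys is a shorter, more computational argument that makes the $O(\log n)$ slack and the reliance on the single identity $\E_p[s]=B$ transparent. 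Both are valid, and yours would serve equally well as the paper's proof.
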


\newcommand{\Dhash}{\mathcal{D}_{\textup{hash}}}
\newcommand{\Dunif}{\mathcal{D}_{\textup{unif}}}
\newcommand{\Dind}{\mathcal{D}_{\textup{ind}}}

\begin{proof}
    We prove by comparing three distributions of key sets $S$ over the universe $[U] = [n/B] \times [V]$:
    \begin{enumerate}
    \item The first distribution $\Dhash$ is defined by the following random process. For each bucket $i$, we \emph{independently} sample a number $s_i$ following the probability distribution $p(s)$, then uniformly sample an $s_i$-element subset of the range $[V]$ of that bucket. The entropy of this distribution is
    \[
    H(\Dhash) = \frac{n}{B} \cdot \E_{s \sim p}\Bk*{\log \frac{1}{p(s)} + \log \binom{V}{s}} = \E_{i \in [n/B], \, h \in \mathcal{H}}\Bk*{\log \frac{1}{p(s_i)} + \log \binom{V}{s_i}} = \textup{LHS of \eqref{eq:bucket_entropy}},
    \]
    where we slightly abuse the notation and use $p$ to indicate the distribution defined by $p(s)$; the last equality holds by the linearity of expectation.
    \item The second distribution $\Dind$ is defined by including every $x \in U$ in the key set $S$ with probability $n/U$ independently.
    \item The third distribution $\Dunif$ is defined by choosing an $n$-element subset of $U$ uniformly at random among all $\binom{U}{n}$ choices.
    \end{enumerate}
    
    We will show $H(\Dhash) \le H(\Dind) \le H(\Dunif) + O(\log n)$ to prove the lemma.

    To see $H(\Dhash) \le H(\Dind)$, we focus on the indicator random variables $\ind[x \in S]$ for each element $x \in U$. When $S \sim \Dind$, these indicators are independent variables each equals 1 with probability $n/U$; when $S \sim \Dhash$, each indecator follows the same marginal distribution, but different indicators might be correlated. The statement $H(\Dhash) \le H(\Dind)$ follows because the independent distribution maximizes entropy given the marginals.

    The second inequality $H(\Dind) \le H(\Dunif) + O(\log n)$ follows by a direct calculation:
    \begin{align*}
        H(\Dind) ={} &-U\bk*{\frac{n}{U}\log \frac{n}{U}+\bk*{1-\frac{n}{U}}\log (1-\frac{n}{U})} \\
        ={}& n \log \frac{U}{n} + \bk[\big]{U - n} \log \frac{U}{U - n} \\
        ={}& U\log U-n\log n-(U-n)\log (U-n);
    \end{align*}
    on the other hand, using Stirling's approximation $\log (n!)=n\log n-n\log e\pm O(\log n)$, we can show that
    \begin{align*}
        H(\Dunif) = \log \binom{U}{n} = U\log U-n\log n-(U-n)\log (U-n) \pm O(\log n),
    \end{align*}
    thus $H(\Dind) \le H(\Dunif) + O(\log n)$, which concludes the proof.
\end{proof}

\cref{lem:bucket_entropy} only showed that the expected space used by our data structure is bounded, but we prefer a worst-case space bound. Since the LHS of \eqref{eq:bucket_entropy} is nonnegative, we can use Markov's inequality to show that 
\[
\Pr_{h \in \mathcal{H}}\Bk*{\sum_{i=1}^{n/B}\bk*{\log\frac{1}{p(s_i)}+\log \binom{V}{s_i}}\ge \log \binom{U}{n}+\log ^2n}\le \frac{\log \binom{U}{n}+O(\log n)}{\log \binom{U}{n}+\log ^2n}=1-\Omega\bk*{\frac 1n}.
\]
Therefore, with probability $\Omega(1/n)$, the data structure uses at most $\log \binom{U}{n} + \poly \log n$ bits of space. Then, we can sample $\poly n$ hash functions from $\mathcal{H}$, and record the index of a hash function that makes the data structure space-efficient. This way, the space bound can be made to be high-probability worst-case.

\begin{corollary}
    With high probability in $n$ over the choice of hash functions, the construction of the data structure succeeds, and the data structure uses at most $\log \binom{U}{n} + \poly \log n$ bits of space.
\end{corollary}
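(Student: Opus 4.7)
My plan is to establish the corollary by combining the expected-space bound of \cref{lem:bucket_entropy} with a sampling-based amplification trick, then union-bounding the remaining failure events.

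First I would account for the total space as the sum of three contributions. The dominant random term is
\[
X := \sum_{i=1}^{n/B}\bk*{\log \frac{1}{p(s_i)} + \log \binom{V}{s_i}},
\]
which by \cref{lem:bucket_rep} is, up to a deterministic additive $(n/B)\cdot O(1/n^2) = o(1)$ slack, the concatenated length of the per-bucket spillover representations. On top of that, \cref{lem:storing_representations} contributes $O(S\log L) = \poly \log n$ words of concatenation redundancy, where $S = \poly \log n$ is the number of distinct bucket sizes and $L = n/B$. Lastly, the explicitly stored distribution $\{p(s)\}$ (over $\poly \log n$ values, each to $O(\log n)$ bits of precision), pointers, header parameters, and the index of the winning hash function collectively add another $\poly \log n$ bits. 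Hence the total space used is $X + \poly \log n$ bits, so it suffices to control the random quantity $X$.

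Next, \cref{lem:bucket_entropy} gives $\E_{h\in\mathcal{H}}[X] \le \log \binom{U}{n} + O(\log n)$, and $X \ge 0$ always. Markov's inequality then yields
\[
\Pr\Bk*{X > \log \binom{U}{n} + \poly \log n} \le \frac{\log \binom{U}{n} + O(\log n)}{\log \binom{U}{n} + \poly \log n} = 1 - \Omega\bk*{\frac{\poly \log n}{\log \binom{U}{n}}} = 1 - \Omega(1/n),
\]
using $\log \binom{U}{n} = O(n \log n)$ since $U \le \poly n$. So a single random $h$ yields a space-efficient build with probability at least $\Omega(1/n)$. To amplify, I would independently sample $T = n^{O(1)}$ candidate hash functions $h_1,\dots,h_T \in \mathcal{H}$ during construction, attempt the full build with each, and keep the first index $j$ whose run fits within $\log \binom{U}{n} + \poly \log n$ bits. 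The probability that none succeeds is at most $(1 - \Omega(1/n))^T \le 1/n^{\omega(1)}$ for a suitably chosen polynomial $T$, and storing the winning index $j$ costs only $O(\log n)$ bits. In the cell-probe model, the $T$ hash functions are hardwired for free, so only $j$ is charged in the actual space budget.

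Finally I would union-bound the remaining failure events, each of which has probability $1/\poly n$: for the chosen $h_j$, the ``balanced buckets'' event of \cref{lem:random_permutation} fails with probability at most $1/(4n^2)$; and each of the $O(S) = \poly \log n$ augmented-retrieval constructions inside \cref{lem:storing_representations} fails with probability at most $1/\poly n$ by \cref{thm:augmented_retrieval}. A single union bound absorbs all of them, yielding overall success and the claimed space bound with probability $1 - 1/\poly n$. The one genuinely delicate step is the Markov calculation, since it gives only $\Omega(1/n)$ rather than high probability on the first shot: this rests on the target slack $\poly \log n$ strictly dominating the $O(\log n)$ excess in $\E[X]$, together with $\log \binom{U}{n} = O(n \log n)$ being only polynomially large, so that the $\Omega(1/n)$ floor survives and amplification via $\poly n$ independent trials is affordable. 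Once that is in place, the rest is routine bookkeeping.
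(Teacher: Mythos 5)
Your proof is correct and takes essentially the same route as the paper: bound $\E[X]$ via \cref{lem:bucket_entropy}, apply Markov's inequality (using nonnegativity and $\log\binom{U}{n} = O(n\log n)$) to get $\Omega(1/n)$ success probability per hash function, then amplify by sampling $\poly n$ independent hash functions and recording the index of a successful one. Your additional bookkeeping --- splitting the total space into $X$ plus $\poly\log n$ deterministic overhead, and explicitly union-bounding the bucket-balance and augmented-retrieval failure events --- is a slightly more careful rendering of what the paper leaves implicit, but it is not a different method.
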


\paragraph{Concatenating spillover representations.}
At this point, the only remaining task is to concatenate the representations for each bucket. 
In \cref{subsection:concat}, we have proved \cref{lem:storing_representations}, which concatenates $L$ spillover representations $(m_i, k_i)$, where the length of $m_i$ and the spill universe $K_i$ of $k_i$ are determined by the size parameter $s_i$. For our use, we set $L = n / B$ to be the number of buckets, $(m_i, k_i)$ to be the spillover representation of bucket $i$, and $s_i$ to be the number of elements in bucket $i$.

Recall that in \cref{subsection:concat}, we have assumed that $m_i$ is a sequence of complete $w$-bit words, i.e., its length is a multiple of $w$ bits, because otherwise, we may merge the leftover bits into the spill $k_i$. Here, we further assume that the spill universe $K_i > L^3$, since otherwise, we may merge the last $O(1)$ words of $m_i$ into the spill again.

\cref{lem:storing_representations} also needs three extra parameters: $S$ denotes the number of possible values for $s_i$; $M_{\min}$ and $M_{\max}$ denote the lower and upper bounds of the length of $m_i$ (in words).
Since $s_i \in [B - B^{2/3},\, B + B^{2/3}]$, we set $S=2B^{2/3}+1$. By \cref{lem:bucket_rep}, the length of $m_i$ is $\log \binom V{s_i} \pm O(\log n)$ bits. Therefore, we have the following bounds:
\[
\begin{aligned}
    M_{\min} &= \frac1w\min_{s\in [B-B^{2/3},\,B+B^{2/3}]} \log \binom{V}{s} - O(1), \\
    M_{\max} &= \frac1w\max_{s\in [B-B^{2/3},\,B+B^{2/3}]} \log \binom{V}{s} + O(1).
\end{aligned}
\numberthis\label{definition_M_minmax}
\]

\cref{lem:storing_representations} additionally requires the parameters to satisfy \eqref{inequality_store_representations}, which we recall and prove below.

\begin{claim}
    \cref{inequality_store_representations} holds, i.e., $M_{\min} > \Theta\bk*{(M_{\max} - M_{\min}) \log (L M_{\max})+S\log L}.$
\end{claim}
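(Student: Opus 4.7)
The plan is to estimate each of the quantities $M_{\min}$, $M_{\max} - M_{\min}$, $S$, $L$, and $\log(LM_{\max})$ in terms of $n$, $B$, $w$, and $V = UB/n$, and then verify that the inequality holds provided $B$ is chosen as a sufficiently large power of $\log n$ (as the paper already permits).

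First I would lower-bound $M_{\min}$. Since $U \ge 2n$ we have $V \ge 2B$, and for every $s\in[B-B^{2/3},\,B+B^{2/3}]$ the standard bound $\binom{V}{s}\ge (V/s)^s$ gives $\log\binom{V}{s}\ge s\log(V/s)\ge (B-B^{2/3})\log 2 \ge B/2$ for large enough $B$. Dividing by $w=\Theta(\log n)$ and absorbing the additive $O(1)$ from \eqref{definition_M_minmax}, this yields $M_{\min} = \Omega(B/\log n)$.

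Next I would upper-bound $M_{\max} - M_{\min}$. The finite difference $\log\binom{V}{s+1}-\log\binom{V}{s} = \log\frac{V-s}{s+1}$ has absolute value at most $\log V = O(\log n)$ for all relevant $s$, so $\max_{s}\log\binom{V}{s}-\min_{s}\log\binom{V}{s}\le 2B^{2/3}\cdot O(\log n)$. Dividing by $w$ and adding the $O(1)$ gives $M_{\max} - M_{\min} = O(B^{2/3})$. The remaining quantities are easy: $L=n/B \le n$ so $\log L=O(\log n)$; $S = 2B^{2/3}+1 = O(B^{2/3})$; and since $M_{\max}\le M_{\min}+O(B^{2/3}) = O(B)$ we also have $LM_{\max} = O(n)\cdot\poly\log n = \poly n$, hence $\log(LM_{\max}) = O(\log n)$.

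Putting these together, the right-hand side of \eqref{inequality_store_representations} is
\[
O(B^{2/3})\cdot O(\log n) + O(B^{2/3})\cdot O(\log n) = O(B^{2/3}\log n),
\]
while the left-hand side is $\Omega(B/\log n)$. The inequality reduces to $B/\log n \ge C\cdot B^{2/3}\log n$ for a suitable constant $C$, i.e.\ $B^{1/3} \ge C\log^2 n$, which holds as soon as $B \ge C^3\log^6 n$. Choosing $B$ to be any sufficiently large power of $\log n$ (say $B = \log^7 n$) establishes the claim. The only part of this argument that requires any care is the derivative-based bound on $M_{\max}-M_{\min}$, since when $V$ is close to $2B$ the function $\log\binom{V}{s}$ is almost flat near $s=B$; but the crude bound $|\log\tfrac{V-s}{s+1}|\le\log V$ is already sufficient here and no finer analysis is needed.
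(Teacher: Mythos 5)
Your proof is correct and follows essentially the same route as the paper: lower-bound $M_{\min}$ by $\Omega(B/\log n)$, upper-bound $M_{\max}-M_{\min}$ by $O(B^{2/3})$ words (the paper telescopes a product, you sum the per-step finite differences $\log\frac{V-s}{s+1}$, which is the same estimate), observe $\log L,\log(LM_{\max})=O(\log n)$ and $S=O(B^{2/3})$, and pick $B$ a large enough power of $\log n$ so that $B/\log n \gg B^{2/3}\log n$. The only nitpick is that the chain $s\log(V/s)\ge(B-B^{2/3})\log 2$ does not literally hold for every $s$ in the range (when $V$ is near $2B$ and $s>V/2$), but it does hold at the minimizing $s=B-B^{2/3}$, which is all you need.
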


\begin{proof}
Recall that we have assumed $U \ge 2n$, and thus $V \ge 2B$. We begin the proof by obtaining closed-form bounds for $M_{\min}$ and $M_{\max}$. We have
\[
M_{\min}=\frac1w \log\binom{V}{B-B^{2/3}}-O(1)
\]
by the monotonicity of binomial coefficients.
For $M_{\max}$, we first focus on a single $s\in [B-B^{2/3},\, B+B^{2/3}]$, and obtain
\[
\binom{V}{s}
= \binom{V}{B - B^{2/3}} \cdot \prod_{j = B - B^{2/3} + 1}^{s} \frac{V - j + 1}{j}
\leq \binom V{B-B^{2/3}}\cdot\bk*{\frac{V}{B-B^{2/3}}}^{2B^{2/3}}.
\]
Taking the logarithm, we get
\[
M_{\max} \le M_{\min} + \frac{1}{w}\log\frac{V}{B-B^{2/3}} \cdot 2B^{2/3}+O(1).
\]

We are now ready to prove \eqref{inequality_store_representations}. The left-hand side of \eqref{inequality_store_representations} can be bounded as
\begin{align*}
\textup{LHS of \eqref{inequality_store_representations}} = M_{\min} &= \frac 1w\log \binom{V}{B-B^{2/3}}-O(1)\\
&\geq\frac 1w\log\binom{2(B-B^{2/3})}{B-B^{2/3}}-O(1)\\
&\ge \frac 1w\bk*{2(B-B^{2/3})-O(\log B)}-O(1)\\
&=\Omega(B/\log n)-O(1),
\end{align*}
where the second line is because $V \ge 2B$; the third line is because $\binom{2n}{n} \ge 2^{2n} / n$.
For the right-hand side, clearly $L, M_{\min}, M_{\max} \le O(n)$, so
\begin{align*}
\textup{RHS of \eqref{inequality_store_representations}} ={} & \Theta\bk[\Big]{(M_{\max}-M_{\min})\log (L M_{\max})+S\log L}\\
\le {} & O\bk*{\frac1w B^{2/3}\log\frac{V}{B-B^{2/3}}\log (L M_{\max})+B^{2/3}\log L}\\
\le{} & O\bk*{\frac1w B^{2/3}\log^2 n+B^{2/3}\log n}\\
={} & O\bk*{B^{2/3}\log n} \\
\ll{} & \textup{LHS of \eqref{inequality_store_representations}},
\end{align*}
where the last inequality holds because we can choose $B$ to be a sufficiently large polylogarithmic factor of $n$. This concludes the proof.
\end{proof}

\paragraph{Removing the assumption of fully random hash functions.}
Up until this point, we have assumed that the data structure (more specifically, the augmented retrievals) has access to a fully random hash function. Now we show how to remove this assumption in the cell-probe model. One natural idea to do so is to compute a hash function that makes the construction algorithm successful, and then store it in the memory, but this method will consume too much space. We use the following approach instead. %

Recall that the total number of possible inputs is $\binom{U}{n}$. For each input, \cref{lem:storing_representations} states that there is a $1 - 1/\poly n$ probability (over the choice of the hash function) that the construction algorithm succeeds.
Therefore, if we sample a set $H$ of $\log \binom{U}{n} = \poly n$ hash functions, then with at least constant probability, for every possible input, there exists a hash function $h \in H$ such that the construction succeeds over this hash function. Hence, there exists a set $H$ satisfying this condition, and we fix $H$ to be the set satisfying this condition with the lexicographically smallest binary representation.

Since the value of $H$ is deterministic and does not depend on the key set we need to store, we can access $H$ for free in the cell-probe model. We store the index of a hash function in $H$ that makes the construction algorithm succeed on the current input, which takes $O(\log n)$ bits. The query algorithm already knows the set $H$, so after reading the index of the hash function, it can evaluate the hash function $h$ on any input by itself.
\end{proof}

\label{subsection:altogether}

\section{Generalizing to Word RAM}
\label{sec:word_ram}
From now on, we switch our focus to the word RAM model. The main result of this section is the following theorem, which is the version of \cref{thm:intro} on membership data structures.

\begin{theorem}
\label{thm:word_RAM_main}
In the word RAM model, there is a static membership data structure storing $n$ keys from a universe of size $U \in [2n,\, \poly n]$, using $\log \binom{U}{n} + \poly \log n$ bits of space, assuming access to a fixed lookup table of $n^\eps$ bits and a hash function that can be encoded using $n^\eps$ bits, such that:
\begin{itemize}
\item The construction algorithm succeeds with high probability in $n$.
\item The query algorithm runs in worst-case constant time.
\end{itemize}
\end{theorem}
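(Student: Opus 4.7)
The plan is to port the cell-probe construction from \cref{thm:cell_probe_main} to the word RAM model by replacing the two non-constructive features it relied on, namely (i) free access to a fully independent hash function, and (ii) free arithmetic/table lookups not charged against the redundancy. Both will be handled using the $n^\eps$ bits of lookup table and the $n^\eps$-bit encoded hash function allowed by the theorem statement.

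First, I would replace the fully independent hash function in \cref{lem:random_permutation} with a member of the family $\mathcal{H}$ stated there; its description already fits in $n^\eps$ bits and is evaluable in constant time, so partitioning the keys into $n/B$ buckets with $[B-B^{2/3},\,B+B^{2/3}]$ elements each (w.h.p.) is immediate. The bucket-level encoding from \cref{lem:bucket_rep} was already designed to be word-RAM-efficient with a lookup table of $O(n^\eps)$ bits, so I can reuse it verbatim; its lookup table is input-independent, fits in the allotted $n^\eps$ bits of shared lookup table, and supports constant-time queries once $(m_i,k_i)$ can be accessed word-by-word.

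The main obstacle is \cref{lem:augmented_retrieval}: the cell-probe construction sampled a sparse matrix $A$ using a fully independent hash function and argued row-independence of $A_{T,*}$ via \cref{lem:random_matrix_rank}. To derandomize this in word RAM, I follow the sketch in \cref{sec:overview}: build the sparse matrix deterministically by hashing rows into blocks in a hierarchical way and using Cauchy matrices inside the blocks, so that any square submatrix of the fixed block is nonsingular. The hash functions used to route keys into the hierarchy need only bounded independence and can be encoded in $n^\eps$ bits. The sparsification trick of \cref{fig:sparsify}, which turns one dense row into three sparse rows by spending $t$ augmented rows, is purely deterministic linear algebra and carries through unchanged once the underlying block structure is in place. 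All arithmetic is over $\GF(2^w)$ or over $\F_{P^{(s)}}$ with $P^{(s)} = \poly n$, both of which support constant-time field operations in word RAM via the lookup table (multiplication tables for small field chunks suffice, and additions are XOR or modular).

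The remaining ingredients of \cref{lem:storing_representations}, namely the base-conversion \cref{thm:changing_base} between base $2^w$ and base $P^{(s)}$, are already stated as word-RAM results in \cite{dodis2010changing}, so the concatenation step requires no change. Finally, to get a worst-case space bound rather than an expectation bound, I sample $\poly n$ candidate hash functions from $\mathcal{H}$ (each encodable in $n^\eps$ bits) and, using the $1/\poly n$-failure-probability boost from \cref{thm:augmented_retrieval} together with a union bound over all $\binom{U}{n}$ possible inputs, argue as in the proof of \cref{thm:cell_probe_main} that a single shared hash function works for every input with positive probability; fixing one such hash function and storing a constant-length index into the family of $n^\eps$-bit hash functions contributes only to the allowed $n^\eps$-bit redundancy. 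Combining these pieces yields a worst-case constant-time membership data structure of total size $\log\binom{U}{n}+\poly\log n$ plus the shared $n^\eps$-bit lookup table and hash function, which is exactly \cref{thm:word_RAM_main}.
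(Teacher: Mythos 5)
Your high-level plan is on the right track: you correctly identify the two places randomness was used (the bucket-hashing function, replaceable by the $n^\eps$-bit family of \cref{lem:random_permutation}, and the random matrix in the augmented retrieval), and you correctly observe that \cref{lem:bucket_rep}, the row-sparsification of \cref{fig:sparsify}, and the base-conversion machinery of \cref{thm:changing_base} already work on a word RAM and carry over untouched. However, you treat the derandomization of the augmented retrieval as a one-sentence substitution, and that substitution is in fact the entire technical content of \cref{sec:word_ram}. The obstacle is precisely that \cref{lem:random_matrix_rank} is a Schwartz--Zippel argument which requires \emph{fully independent} random entries; the paper's replacement, \cref{lem:better_augmented_retrieval}, needs a genuinely new construction: a pool of $n_f = O(n)$ \emph{filler queries} (augmented rows that are always valid) padding the matrix, a binary \emph{tree of blocks} over levels $1,\dots,h$ with deviation parameters $\Delta_i = \Theta(2^{i/2}\log n)$ and supplementary columns at each level, Cauchy-matrix blocks scaled column-wise by powers $x_i^j$ of per-level random field elements, and an inductive rank argument (\cref{lem:nonsingularity}) that bounds the failure probability by $1/n$ using only $\poly\log n$-wise independence and $h$ additional random field elements. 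Saying ``hash rows into blocks hierarchically and use Cauchy matrices'' points at this design but leaves both the block structure and the rank analysis open; you also omit the word-RAM-specific fixes the paper needs, namely the power-lookup table for evaluating $x_i^j$ (with $j$ up to $n + n_f$) in constant time and the trick of multiplying through by $v+j$ to avoid computing multiplicative inverses online.

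Your closing paragraph also misapplies the cell-probe argument. The union bound over all $\binom{U}{n}$ inputs was used in \cref{thm:cell_probe_main} to hardwire a fixed set of $\log\binom{U}{n}$ fully random hash functions into the (free) cell-probe computation; that trick does not transfer, since on a word RAM you would have to actually store those $\poly n$ hash functions. It is also unnecessary: \cref{thm:word_RAM_main} only claims the construction succeeds with high probability over the hash function, which is exactly what the new augmented retrieval delivers per input (and which can be made Las Vegas by re-sampling, as remarked after \cref{thm:intro}). The only place a polynomial-size family and an index into it is required is the Markov argument following \cref{lem:bucket_entropy}, which selects a \emph{bucket-hashing} function achieving a worst-case space bound; this is a separate and much smaller family than what your final paragraph posits, and it has nothing to do with the augmented retrieval's hash function.
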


Directly applying the algorithm in the cell-probe model gives us a dictionary in the word RAM model that has $O(\poly \log n)$ redundancy and answers queries in constant time, assuming access to a random hash function and a lookup table of size $O(n^{\epsilon})$. In this section, we will focus on removing the random hash function for the word RAM model.

First, we recall how randomness was used in our cell-probe data structure. It was used twice: once when we hash keys into buckets, and once when we construct the augmented retrieval. For the former one, we were already using a $\poly \log n$-wise independent hash function, so we can simply store the encoding of such a hash function (which takes $O(n^\eps)$ bits by \cref{lem:random_permutation}). The main difficulty lies in constructing the augmented retrieval. In this section, we construct an augmented retrieval that only accesses a hash function that can be encoded using $O(n^\eps)$ bits. When substituting this for the original augmented retrieval, we obtain the algorithm described in \cref{thm:word_RAM_main}.

Similar to \cref{section:retrieval}, here we also assume that values and the memory words are elements in a finite field $\F$. The following lemma is similar to \cref{lem:augmented_retrieval} with several modifications: The number $m$ of augmented elements has changed, and we assume that we only have limited randomness instead of a fully independent hash function. Also, we need a lookup table of size $\tilde{O}(n^\epsilon)$ to solve this augmented retrieval in the word RAM model.%
\footnote{It suffices to upper bound the size of the lookup table and the hash functions by $\tilde O(n^\eps)$, because $\eps$ is a small constant that we can choose arbitrarily. By choosing a smaller parameter $\eps' < \eps$, the space bound becomes $\tilde O(n^{\eps'}) \ll n^{\eps}$ as desired.}

\begin{lemma}
  \label{lem:better_augmented_retrieval}
  Let $\F$ be a finite field of size $\max(n^6,\, N+n^2)\le |\F|=\poly n$. In the word RAM model where the memory consists of words in $\F$, there is an augmented retrieval data structure that stores $n$ key-value pairs, where the keys are from a universe $[N]$ and values are in $\F$, as well as an augmented array $a_1,\dots,a_m$ of elements in $\F$ where $m= \Omega(N\log n+n\log^2n)$, such that:
  \begin{itemize}
  \item The data structure uses $n+m+\poly\log n$ memory words in $\F$ and can answer retrieval queries and augmented queries in constant time.
  \item The data structure assumes access to a hash function that can be encoded using $n^\eps$ bits, and succeeds with probability $\ge 1 - 1/n$ over this hash function.
  \item The data structure also assumes access to a fixed lookup table of size $\tilde{O}(n^\epsilon)$.
  \end{itemize}
\end{lemma}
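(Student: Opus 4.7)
The plan is to follow the same high-level strategy as the proof of \cref{lem:augmented_retrieval}: build an $(N+m)\times(n+m)$ matrix $A$, each of whose rows has $O(1)$ non-zero entries, so that the rows indexed by the $n$ valid keys together with all $m$ augmented rows are linearly independent with probability $\ge 1-1/n$, then solve the resulting linear system for a vector $\vec{b}\in\F^{n+m}$ and answer each query by a single constant-sized inner product. As in that proof, $A$ will be obtained from a denser intermediate matrix $B$ via the column-moving/row-subtraction trick of \cref{fig:sparsify}; the substantive change compared to \cref{lem:augmented_retrieval} is that $B$'s non-zero pattern can no longer be generated by truly random sampling of $10\log n$ positions per row.

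My first step would be to build $B$ using only limited independence. I would use the $n^\eps$-bit hash-function seed to specify a bounded-wise independent hash that places each of the $N$ potential retrieval rows into one of $\Theta(N/\log n)$ blocks of size $\Theta(\log n)$, and give each block its own dedicated slab of $\Theta(\log n)$ columns in $B$. Inside each block, the non-zero entries are placed according to a Cauchy matrix over $\F$ evaluated at points determined by row and column identities; this requires $|\F|\ge N+n^2$ distinct field elements, which the lemma assumes, and guarantees that every square submatrix restricted to a single block is invertible. By standard balls-in-bins concentration under bounded independence, with probability $\ge 1-1/n$ no block receives more valid rows than it has columns, so every block contributes full local rank. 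A small number of ``global'' Cauchy columns, chosen from the same parameter set, are then used to stitch the blocks together through a second hashing level so that the $n\times n$ submatrix of $B$ on valid rows is invertible globally; the $|\F|\ge n^6$ hypothesis provides the slack needed for the corresponding union bound.

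Next, I would apply the sparsification of \cref{fig:sparsify} to convert $B$ into $A$ with at most $3$ non-zero entries per row. A Cauchy-based retrieval row of $B$ has $\Theta(\log n)$ non-zero entries, and each such entry is absorbed by a dedicated chain of $\Theta(\log n)$ augmented rows (the column-move propagates the value across consecutive augmented rows until it reaches its target column). This accounts for the $\Omega(n\log^2 n)$ term in the requirement on $m$, while the $\Omega(N\log n)$ term pays for the initial block/column allocation over all $N$ potential keys. Because sparsification only subtracts rows from the augmented range (which are guaranteed to be valid), linear independence on valid rows carries over from $B$ to $A$.

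Finally, I would implement the construction in word RAM. Field arithmetic in $\F$, evaluation of the bounded-wise independent hash, and computation of the Cauchy parameters at any row index can each be performed in constant time using the $\tilde O(n^\eps)$-bit lookup table (which stores precomputed multiplication and inversion tables together with parameter lookups); the $\poly\log n$ additive space absorbs a few global parameters, including the index of a hash seed discovered to succeed during construction. The main obstacle I expect is the hierarchical Cauchy-plus-bounded-independence invertibility analysis giving success probability $\ge 1-1/n$ for the full $n\times n$ submatrix; once that is in hand, the sparsification and the word-RAM simulation are largely mechanical adaptations of the cell-probe argument.
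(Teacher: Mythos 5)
Your high-level plan is aligned with the paper's strategy: replace the truly random sparse matrix by one built from bounded-wise independent hashing into $\Theta(\log n)$-sized blocks, use Cauchy matrices to guarantee local full rank without fresh per-entry randomness, and then apply the column-moving sparsification of \cref{fig:sparsify}. But there is a genuine gap in the core of the argument — the part you yourself flag as ``the main obstacle'' — and I do not think your sketch, as written, closes it.

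The paper's construction is not a two-level hashing scheme; it builds a \emph{tree of blocks} of depth $h = \log(n/B) + 1$, and the depth is essential. The underlying difficulty is that the submatrix $\mathcal{A}_{S\cup(N,N+n_f],*}$ on valid rows must be an exactly square $(n + n_f)\times(n + n_f)$ matrix: the retrieval-plus-filler part must account for exactly $n + n_f$ of the $n+m$ memory words with no slack, or the slack becomes redundancy. With blocks of expected size $B = O(\log n)$, a single level leaves each block with a data-dependent deficit of up to $2\Delta_1 = \Theta(\sqrt{B\log n}) = \Theta(B)$ columns. Filler rows are augmented queries whose positions are fixed ahead of time, so their non-zero columns cannot be chosen as a function of each block's actual deficit. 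The paper resolves this by assigning a \emph{fixed} number $2\Delta_{i-1} + \Delta_i$ of filler rows and $2\Delta_i$ supplementary columns to each level-$i$ block, and the geometric decay of $\Delta_i = \Delta_1 2^{(i-1)/2}$ together with $\Delta_h = 0$ at the root makes the totals telescope to exact equality. Correctness is then an $h$-level induction: at each level, the children's identity blocks are used to eliminate entries, and the residual square block spanned by the parent's supplementary columns and the children's leftover columns $F_{t_1}\cup F_{t_2}$ is shown nonsingular by a Schwartz--Zippel argument over a fresh random scalar $x_i$ per level (the entries are $x_i^j/(v+j)$, giving a Cauchy matrix as the leading coefficient). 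Your ``small number of global Cauchy columns ... through a second hashing level'' does not address how squareness is achieved with data-independent filler placement, nor does it introduce the per-level randomness needed for the Schwartz--Zippel step; a constant-depth structure cannot absorb the $\Theta(B)$-sized per-block deviation without either data-dependent allocation or leftover columns.

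A smaller but still real issue: you hash the $N$ potential retrieval rows into $\Theta(N/\log n)$ blocks of $\Theta(\log n)$ \emph{potential} rows each. The lemma allows $N$ up to $\poly n$, in which case this gives $\Theta(N) \gg n$ columns for the retrieval part (impossible, only $n + n_f = O(n)$ are available), and $\ll 1$ valid rows per block. The paper instead hashes into $n/B$ blocks so that each block has $\Theta(\log n)$ \emph{valid} rows in expectation, and all $N$ potential rows are distributed among them. You also omit the device of multiplying each sparsified row through by $v+j$ to avoid field inversions, and the lookup table of powers $g^{i\cdot(n^\eps)^j}$ used to evaluate $x_i^j$ in $O(1)$ time; these are needed to realize the $\tilde O(n^\eps)$ lookup-table bound in word RAM and are not ``largely mechanical'' as your last paragraph suggests.
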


\newcommand{\A}{\mathcal{A}}
\newcommand{\B}{\mathcal{B}}
\newcommand{\M}{\mathcal{M}}

\paragraph{Set-up and terminologies.}
Before we prove this lemma, let us first recall the high-level framework for the augmental retrieval in \cref{section:retrieval}, and set up terminologies for our discussion. There are $N$ potential retrieval queries and $m$ augmented queries; the data structure should correctly answer $n + m$ of them ($n$ specific retrieval queries plus all $m$ augmented queries), which we call the \defn{valid queries}. The augmented retrieval is solved by sampling a sparse $(N + m) \times (n + m)$ matrix $\M$, where each row represents a query, and the columns represent $n + m$ memory cells. For conciseness, we use \defn{retrieval rows}, \defn{augmented rows}, and \defn{valid rows} to refer to the rows in $\M$ that represent retrieval queries, augmented queries, and valid queries, respectively. Each query is answered by computing the inner product of the memory contents and the row in $\M$ representing that query.

The correctness of this approach relies on the fact that the submatrix consisting of all valid rows has full rank, which was proven by the Schwartz-Zippel lemma: A non-zero multivariate polynomial of degree $O(n)$ over a finite field of order $\poly n$ is non-zero with good probability when evaluated on a random input.
Unfortunately, this lemma does not work when the input only has limited independence (for instance, if the values are sampled from a $\poly \log n$-wise independent hash function). So we need a new approach to generate the sparse matrix.

On the technical level, the construction in \cref{section:retrieval} first generates a smaller, less sparse matrix $\A$ of size $N \times n$, where the $N$ rows only represent the retrieval queries. Each row of $\A$ has $O(\log n)$ non-zero entries at independently random locations. It is shown that the submatrix of $\A$ on the valid rows has full rank with constant probability. Next, a sparsification technique is used to generate the larger matrix $\M$, which takes $\A$ as the input. It ``moves'' the non-zero entries from $\A$ (retrieval rows) to the augmented rows, so that every row in $\M$ only has a constant number of non-zero entries.

Our data structure for \cref{lem:better_augmented_retrieval} almost follows the same framework with the following distinction. We also use the augmented queries in a new way in addition to sparsifying the matrix $\A$: We choose a subset of augmented queries and use them in the construction of matrix $\A$ ``as if'' they were retrieval queries. We call these queries \defn{filler queries} and let $n_f = O(n)$ be the number of them. As a result, the matrix $\A$ in our new construction will have size $(N + n_f) \times (n + n_f)$, where the rows represent the retrieval queries and the filler queries. The filler queries will benefit the construction of $\A$ because they are always \emph{valid}. Similar to before, after we construct matrix $\A$, we perform the sparsification step using the rest of the augmented queries, obtaining matrix $\M$.

In the remaining part of this section, we will focus on constructing such an $(N + n_f) \times (n + n_f)$ matrix $\A$ using only $\poly \log n$ random bits, such that the total number of non-zero entries in $\A$ is $O(N \log n + n \log^2 n)$, and for each subset $S \subset [N]$ of size $n$, the submatrix $\A_{S \cup (N, N + n_f], *}$ (i.e., the submatrix on valid queries) has full rank with probability $\ge 1 - 1/n$. As we discussed above, this will imply \cref{lem:better_augmented_retrieval}.

\paragraph{Tree of blocks.}

Our construction of matrix $\A$ is based on a tree of blocks. Each block on the tree will be assigned a set of rows and a set of columns, so that it represents a submatrix (a block) in the matrix $\A$. Additionally, each row in the matrix $\A$ (i.e., each retrieval query or filler query) will be assigned to a unique block.

Let $B \defeq O(\log n)$. The first step of our construction is to hash the \emph{retrieval queries} into $n/B$ \emph{level-1 blocks} using a $\poly \log n$-wise independent hash function, such that each level-1 block contains $B$ valid retrieval queries in expectation. We also define $\Delta_1 = c \cdot \sqrt{B \log n} = O(\log n)$ for a large constant $c$, so that with high probability in $n$, the number of valid retrieval queries in each level-1 block is within $B \pm \Delta_1$ by a Chernoff bound for $\poly \log n$-wise independent random variables \cite{schmidt1995chernoffhoeffding}. Each level-1 block is assigned $B + \Delta_1$ columns so that the number of columns in each level-1 block is not smaller than the number of (valid) rows.

Next, we group every two level-1 blocks together and create a common parent for them. The parent is called a \emph{level-2 block}. The number of valid retrieval queries hashed to a level-2 subtree (i.e., a subtree rooted at a level-2 block) is $2B \pm \Delta_2$ with high probability, where $\Delta_2 = \sqrt{2} \cdot \Delta_1$. Note that all retrieval rows are already assigned to level-1 blocks, so starting from level 2, we will only assign filler rows (queries) to the blocks. Each level-2 block is assigned $2\Delta_1 + \Delta_2$ filler rows. The columns assigned to each level-2 block $u$ include the columns of $u$'s two children, as well as $2\Delta_2$ additionally assigned columns which we call the \defn{supplementary columns} for block $u$. Again, when we focus on a level-2 subtree, the number of columns assigned to the subtree ($2B + 2\Delta_1 + 2\Delta_2$) is not smaller than the number of (valid) rows ($2B \pm \Delta_2 + 2\Delta_1 + \Delta_2$).

We repeat this process: On each level $i$, we group every two level-$(i - 1)$ blocks together and create a level-$i$ block as their common parent. Each level-$i$ block is assigned $2\Delta_{i-1} + \Delta_i$ filler rows and $2\Delta_i$ supplementary columns in addition to the columns already assigned to its children, where $\Delta_i \defeq \Delta_1 \cdot 2^{(i-1)/2} = c \cdot \sqrt{2^{i-1} \cdot B \log n} = O(2^{i/2} \log n)$ is a parameter ensuring that the number of valid retrieval queries hashed to a level-$i$ subtree is within $2^{i-1} B \pm \Delta_i$ with high probability. The same process is repeated until level $h = \log(n/B) + 1$, where there is only one level-$h$ block remaining, and we call it the \defn{root}. For the root level, we specially define $\Delta_h \defeq 0$, because the number of valid rows in the entire tree is fixed. The root block is then assigned $2 \Delta_{h-1}$ filler rows and no supplementary columns. It is easy to verify that the number of filler queries in the entire tree is $n_f = O(n)$, because the number of filler queries in each level decreases geometrically.

For the consistency of notations, we specially define the supplementary columns for a level-1 block to be all columns in the block.
Hence, every column belongs to a unique block on the tree as a supplementary column.
For the sake of discussion, we arrange all columns in matrix $\A$ in the postorder of the block containing them as a supplementary column. As a result, the columns assigned to each block is a consecutive interval of all columns $[1,\, n + n_f]$, and the supplementary columns occupy a suffix of the interval.
See \cref{fig:matrix_partition} for an example.

\begin{figure}[ht]
    \centering
    \begin{tikzpicture}
        \tikzmath{
            \layers = 4;
            \layergap = 0.8;
            \len = 8;
            \exlen = 1;
            function drawmatrix(\level, \id, \lx, \rx, \lastly, \lastry, \layerstarty, \layerheight, \layerextrawidth) {
                \curly = \layerstarty + \id * \layerheight;
                \curry = \curly + \layerheight;
                \lastlevel = int(\level - 1);
                \ymid = \layerstarty + \layerheight * pow(2, \layers - \level - 1);
                \yend = \layerstarty + \layerheight * pow(2, \layers - \level);
                if \id < 1 then {
                    {\node[] at (\len + 1, \ymid) (11) {Level $\level$};};
                    if \level >= \layers then {
                        {\path (\lx, \curly) --  (\lx, \curry) node[midway,left] {$2\Delta_\lastlevel$};};   
                    };
                    if \level <= 1 then {
                        {\path (\lx, \curly) --  (\lx, \curry) node[midway,left] {$B \pm \Delta_\level$};};
                        {\path (\lx, \curly) --  (\rx, \curly) node[midway,above] {$B + \Delta_\level$};};
                        {\draw [decorate,decoration={brace,amplitude=5pt,mirror,raise=4ex}] 
                        (-0.75, \layerstarty) -- (-0.75, \yend) node[midway,xshift=-3em]{$n$};};
                    };
                    if (\level > 1) * (\level < \layers) then {
                        {\path (\lx, \curly) --  (\lx, \curry) node[midway,left] {$2\Delta_\lastlevel + \Delta_\level$};};  
                        {\path (\rx - \layerextrawidth, \curly) --  (\rx, \curly) node[midway,above] {$2\Delta_\level$};};
                    };
                };
                if \level < \layers then {
                    {\draw [loosely dashed, -] (\lx, \curly) -- (\lx, \lastry);};
                    if mod(\id, 2) == 1 then {
                        {\draw [loosely dashed, -] (\rx, \curly) -- (\rx, \lastry);};
                    };
                };
                if \level <= 1 then {
                    {\fill [gray!50!white] (\lx, \curly) rectangle (\rx, \curry);};
                    {\fill [pattern=north east lines] (\lx, \curly) rectangle (\rx, \curry);};
                    {\fill [gray!50!white] (\lx, \lastly) rectangle (\rx, \lastry);};
                };
                if (\level > 1) * (\level < \layers) then {
                    {\fill [gray!50!white] (\rx - \layerextrawidth, \curly) rectangle (\rx, \curry);};
                    {\fill [pattern=north east lines] (\rx - \layerextrawidth, \curly) rectangle (\rx, \curry);};
                    {\fill [gray!50!white] (\rx - \layerextrawidth, \lastly) rectangle (\rx, \lastry);};
                };
                {\draw (\lx, \curly) rectangle (\rx, \curry);};
                \nextrx = ifthenelse(\level == \layers, \rx, \rx - \layerextrawidth);
                \nextmidx = (\lx + \nextrx) / 2;
                \nextstarty = \layerstarty + \layerheight * pow(2, \layers - \level) - \layergap * pow(0.5, \layers - \level);
                \nextheight = \layerheight / sqrt(2);
                \nextextra = \layerextrawidth / sqrt(2);
                if \level > 1 then {
                    drawmatrix(int(\level - 1), \id * 2, \lx, \nextmidx, \curly, \curry, \nextstarty, \nextheight, \nextextra);
                    drawmatrix(int(\level - 1), \id * 2 + 1, \nextmidx, \nextrx, \curly, \curry, \nextstarty, \nextheight, \nextextra);
                };
            };
            drawmatrix(\layers, 0, 0, \len, 0, 0, 0, -1, \exlen);
        }
    \end{tikzpicture}
    \caption{Tree of blocks with $h = 4$ levels. Every rectangle represents a block on the tree, in which the hatched area represents the supplementary columns of a block, and the gray area represents the locations of non-zero entries in the matrix.}
  \label{fig:matrix_partition}
\end{figure}
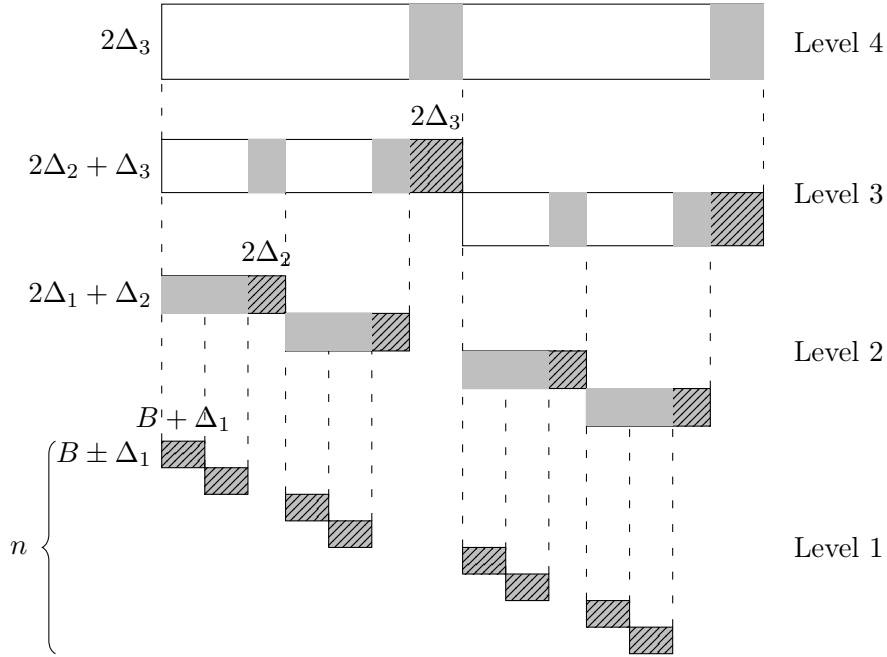

For each level-$i$ block $u$, let $\delta_u$ denote the difference between the number of columns and the number of valid rows in $u$'s subtree. We point out that $\delta_u \in [0,\, 2\Delta_i]$ with high probability: The lower bound $\delta_u \ge 0$ is already mentioned when we define the parameters for the tree; the upper bound follows as the number of columns in $u$'s subtree is fixed, while the number of valid rows only varies in a range of $2\Delta_i$ (i.e., $B \cdot 2^{i-1} \pm \Delta_i$) with high probability. In the remainder of this section, we will assume $\delta_u \in [0, 2\Delta_i]$ always holds, and define $F_u$ as the set of the last $\delta_u$ columns in block $u$.

\paragraph{Setting non-zero entries.}

Next, we specify the positions of non-zero entries based on the tree of blocks. We define the set of \defn{non-zero columns} of a level-$i$ block $u$ as the union of supplementary columns of $u$ itself and both of $u$'s \emph{direct} children (not all descendants). In particular, the non-zero columns of a level-1 block are simply all columns within that block. An entry in matrix $\A$ is non-zero if and only if it is on a non-zero column of the block containing that entry.

The matrix $\A$ is then constructed as follows:
\begin{itemize}
\item Randomly sample $h$ non-zero elements in $\F$, denoted by $x_1,\dots,x_h$.
\item For each query (row) $v\in [N+n_f]$ (either retrieval or filler), suppose the unique block $u$ containing it is on level $i$ (recall that even the invalid retrieval queries are assigned to some block, so this is well defined). For every non-zero column $j$ of block $u$, we set $A_{v,j} = (x_i)^j / (v + j)$.
\end{itemize}
With this construction, we can verify that the number of non-zero entries is small.

\begin{claim}
    \label{clm:nonzero_entry}
    The number of non-zero entries in $\A$ is $O(N\log n+n\log ^2n)$.
\end{claim}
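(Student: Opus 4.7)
The plan is to count non-zero entries by grouping them according to the level of the unique block to which each row belongs, then sum the contributions across all $h = O(\log n)$ levels of the tree.

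First I would handle the retrieval rows, which are precisely the rows assigned to level-$1$ blocks. For any row $v$ sitting in a level-$1$ block $u$, the non-zero columns are defined to be all supplementary columns of $u$ and its (nonexistent) children, which for level-$1$ reduces to the $B + \Delta_1$ columns of $u$ itself. Hence row $v$ contributes $B + \Delta_1 = O(\log n)$ non-zero entries, and since there are $N$ retrieval rows in total, the combined contribution is $N \cdot O(\log n) = O(N \log n)$.

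Next I would bound the contribution of filler rows at each level $i \ge 2$. There are $n/(B \cdot 2^{i-1})$ level-$i$ blocks; each is assigned $2\Delta_{i-1} + \Delta_i = O(\Delta_i)$ filler rows; and the number of non-zero columns of a level-$i$ block is at most $2\Delta_i$ (its own supplementary columns) plus the supplementary columns of its two children, which sum to at most $4\Delta_{i-1}$ for $i \ge 3$ and to $2(B+\Delta_1) = O(B)$ when $i = 2$. In either regime this total is $O(\Delta_i + B)$, so the contribution of level $i \ge 2$ is at most
\[
    \frac{n}{B \cdot 2^{i-1}} \cdot O(\Delta_i) \cdot O(\Delta_i + B)
    \;=\; O\!\left(\frac{n}{B \cdot 2^{i-1}} \cdot \Delta_i^2\right) + O\!\left(\frac{n}{2^{i-1}} \cdot \Delta_i\right).
\]
Substituting $\Delta_i^2 = c^2 \cdot 2^{i-1} B \log n$ into the first term collapses it to $O(n \log n)$; the second term is even smaller, on the order of $O(n \cdot 2^{-i/2} \sqrt{B \log n})$.

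Finally I would sum over levels. The per-level bound is $O(n \log n)$, and there are $h = \log(n/B) + 1 = O(\log n)$ levels, giving a total filler-row contribution of $O(n \log^2 n)$. Adding the retrieval-row contribution yields the claimed bound $O(N \log n + n \log^2 n)$. There is no real obstacle here: the crucial (but already built-in) point is that the per-row degree $\Delta_i$ grows like $2^{i/2}$ while the number of blocks decreases like $2^{-i}$, so each level spends an equal total budget, and the factor $\log n$ in the overall bound is exactly the number of levels.
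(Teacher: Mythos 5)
Your proof is correct and follows essentially the same approach as the paper's: count the $O(N\log n)$ contribution from retrieval rows at level $1$, then show that each level $i>1$ contributes $O(n\log n)$ non-zero entries (number of filler rows $O(n/2^{i/2})$ times non-zero columns per row $O(2^{i/2}\log n)$), and multiply by the $O(\log n)$ levels. You are slightly more careful than the paper in writing out the non-zero-column count (distinguishing $i=2$ from $i\ge 3$ and noting the children of a level-$2$ block each have $B+\Delta_1$, not $2\Delta_1$, supplementary columns), but this makes no asymptotic difference.
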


\begin{proof}
    Each retrieval query has $B + \Delta_1 = O(\log n)$ non-zero entries on its row, so they contribute $O(N\log n)$ non-zero entries in total.
    
    On level $i$ ($1<i<h$), there are 
    \[
    (2\Delta_{i-1} + \Delta_i) \cdot \frac{n}{2^{i-1} B} = 
    O(2^{i/2} \log n) \cdot \frac{n}{2^{i-1} \log n} =
    O(n/2^{i/2})
    \]
    filler queries. In particular, on level-$h$, there are $2\cdot \Delta_{h-1}=O(n/2^{h/2})$ filler queries as well.
    
    For each level-$i$ block ($i > 1$), the number of non-zero columns is $2\Delta_{i-1} + 2\Delta_i = O(\Delta_{i-1}) = O(2^{i/2} \log n)$. So for every level $i > 1$, the total number of non-zero entries contributed by filler queries of level $i$ is $O(n \log n)$. Multiplied by the number $\log n$ of levels, we know the number of non-zero entries in $\A$ is $O(N \log n + n \log^2 n)$.
\end{proof}

Now consider the values of the entries. Our construction has the property that the submatrix formed by the rows and non-zero columns of a block is a Cauchy matrix with an extra $(x_i)^j$ term multiplied on each column. A useful property of Cauchy matrices is that any square submatrix has full rank:

\begin{lemma}[{\cite[Theorem 5.1]{Blmer1995AnXE}}]
    \label{lem:cauchy}
    Let $M \in \F^{m\times m}$ be a matrix with $M_{i,j} = 1/(a_i-b_j)$, where the values $a_i, b_j$ are pairwise distinct. Then $M$ has full rank.
\end{lemma}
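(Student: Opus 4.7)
The plan is to prove the stronger quantitative claim---the classical Cauchy determinant identity
\[
\det(M) \;=\; \frac{\prod_{1 \le i < j \le m} (a_j - a_i)(b_i - b_j)}{\prod_{i,j=1}^m (a_i - b_j)}.
\]
Under the hypothesis that the $2m$ values $a_1,\dots,a_m,b_1,\dots,b_m$ are pairwise distinct, every factor in the numerator and denominator is non-zero, so the right-hand side is non-zero. Hence $\det(M)\neq 0$ and $M$ has full rank, which is precisely the conclusion of the lemma.

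To prove the identity, I would induct on $m$. The base case $m=1$ gives $\det(M)=1/(a_1 - b_1)$ and matches the formula. For the inductive step, I would first subtract the last column from every other column. Using the identity
\[
\frac{1}{a_i - b_j} - \frac{1}{a_i - b_m} \;=\; \frac{b_j - b_m}{(a_i - b_j)(a_i - b_m)},
\]
each column $j<m$ acquires a common factor $(b_j - b_m)$ and each row $i$ a common factor $1/(a_i - b_m)$, both of which can be pulled out of the determinant. Next I would subtract the last row from every other row, applying the dual identity to factor out $(a_m - a_i)$ from each row $i<m$ and $1/(a_m - b_j)$ from each column $j<m$. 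The last column now has a single non-zero entry in the bottom-right corner, so expanding the determinant along it reduces to an $(m-1)\times(m-1)$ Cauchy matrix on $(a_1,\dots,a_{m-1})$ and $(b_1,\dots,b_{m-1})$, to which the inductive hypothesis applies.

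The main obstacle is purely bookkeeping: verifying that the accumulated row and column scalars, together with the cofactor sign and the inductive formula, multiply exactly to the product structure in the claimed identity without introducing stray sign or index errors. No deeper machinery is needed---this is a classical computation---so an alternative, slicker route would be to observe that both sides of the identity are rational functions in the $a_i$ and $b_j$ with matching poles, residues, and degree, and must therefore coincide as formal expressions. Either route suffices to give the lemma.
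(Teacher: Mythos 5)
The paper does not prove this lemma; it cites it as \cite[Theorem 5.1]{Blmer1995AnXE}, so there is no paper proof to compare against. Your proof is a correct, self-contained derivation of the classical Cauchy determinant identity, and the reduction from the identity to the lemma (all factors in numerator and denominator are nonzero under the pairwise-distinctness hypothesis, hence $\det(M)\neq 0$) is exactly right. The inductive column-then-row elimination is the standard argument, and the bookkeeping you outline does close up correctly: the first pass factors out $\prod_{j<m}(b_j-b_m)\cdot\prod_{i}(a_i-b_m)^{-1}$ and leaves a column of ones; the second pass factors out $\prod_{i<m}(a_m-a_i)\cdot\prod_{j<m}(a_m-b_j)^{-1}$ and reduces the last column to a single $1$ in position $(m,m)$, with cofactor sign $+1$, yielding the $(m-1)\times(m-1)$ case.

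One caveat worth noting on your ``slicker route'': the pole/residue/degree argument, read literally, is an argument about rational functions over an infinite ground field (or over $\mathbb{Q}(a_1,\dots,a_m,b_1,\dots,b_m)$ as formal indeterminates), whereas the lemma is invoked in this paper over a finite field $\F$ of size $\poly n$. To make that route rigorous you would prove the identity in $\mathbb{Z}[a_i,b_j]$ localized at the denominators and then specialize, which is standard but an extra step. Your primary inductive route avoids this entirely, since it is manifestly valid over any field where the $a_i-b_j$ are invertible, and is therefore the cleaner choice for the setting of the paper.
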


The Cauchy matrix we constructed has parameters $a_i \in [1,\, N+n_f],\, b_j \in [-n-n_f,\,-1]$. Since we require that $|\F|>N+n^2$, those values are distinct in $\F$, so the lemma applies.

\paragraph{Analysis of correctness.} We will show that $\A_{S \cup (N, N + n_f], *}$ (i.e., the submatrix of $\A$ consisting of all valid rows) has full rank.

\begin{lemma}
    \label{lem:nonsingularity}
    For each set $S \subset [N]$ of size $n$, the submatrix $\A_{S\cup (N, N + n_f],*}$ has full rank with probability $\ge 1 - 1/n$.
\end{lemma}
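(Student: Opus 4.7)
My plan is to view $\det\bigl(\A_{S\cup(N,N+n_f],*}\bigr)$ as a polynomial $P(x_1,\ldots,x_h)$ over $\F$ in the random elements $x_1,\ldots,x_h$, show $P\not\equiv 0$, and then invoke Schwartz--Zippel. I first condition on the high-probability event that every level-$i$ block's subtree contains between $2^{i-1}B-\Delta_i$ and $2^{i-1}B+\Delta_i$ valid retrieval queries; by the Chernoff bound for $\poly\log n$-wise independent variables and a union bound over the $O(n/B)$ blocks, this event fails with probability $n^{-\omega(1)}$. Under this conditioning, $\delta_u\in[0,2\Delta_i]$ for every block $u$, and a direct count yields the identity $\delta_u=\delta_{u_L}+\delta_{u_R}+\Delta_i-2\Delta_{i-1}$ for every level-$i$ block $u$ with $i>1$ and children $u_L,u_R$. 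Throughout, write $\Sigma_u$ for $u$'s supplementary columns, $R_u$ for the valid rows in $u$'s subtree, and $C_u$ for the columns in $u$'s subtree.

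The core is a bottom-up induction with the following invariant: \emph{for every block $u$ at level $i$ and every subset $T\subseteq\Sigma_u$ with $|T|\le\delta_u$, there is an $|R_u|$-element subset $J\subseteq C_u\setminus T$ such that $\det(\A_{R_u,J})$ is a nonzero polynomial in $x_1,\ldots,x_i$.} The base case at level $1$ is immediate from \cref{lem:cauchy}: after pulling the nonzero monomial $(x_1)^j$ out of column $j$, $\A_{R_u,J}$ is a square submatrix of a Cauchy matrix, hence nonsingular.

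For the inductive step at a level-$i$ block $u$ with children $u_L,u_R$, I take $T_{u_L}'$ and $T_{u_R}'$ to be the \emph{top} (highest-indexed) $\delta_{u_L}$ and $\delta_{u_R}$ columns of $\Sigma_{u_L}$ and $\Sigma_{u_R}$, and set $J_{u_L}:=C_{u_L}\setminus T_{u_L}'$, $J_{u_R}:=C_{u_R}\setminus T_{u_R}'$. For the $(2\Delta_{i-1}+\Delta_i)$ level-$i$ filler rows of $u$, I pick any subset $J_F\subseteq T_{u_L}'\cup T_{u_R}'\cup(\Sigma_u\setminus T)$ of the required cardinality; such $J_F$ exists because the identity above combined with $|T|\le\delta_u$ gives $|T_{u_L}'|+|T_{u_R}'|+|\Sigma_u\setminus T|\ge 2\Delta_{i-1}+\Delta_i$. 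Set $J:=J_{u_L}\cup J_{u_R}\cup J_F$ and expand $\det(\A_{R_u,J})$ as a polynomial in $x_i$ via Leibniz, grouping terms by the set $K\subseteq J$ of columns assigned to the level-$i$ filler rows. Since only filler entries carry $x_i$ and they have the Cauchy-type form $(x_i)^j/(v+j)$, the contribution of a valid $K$ factors as $\pm(x_i)^{\sum_{j\in K}j}\cdot C_K\cdot\det\bigl(\A_{R_{u_L}\cup R_{u_R},\,J\setminus K}\bigr)$ with a nonzero Cauchy constant $C_K$. A column count for the residual children's assignment (each $R_{u_\bullet}$ row is supported only in $C_{u_\bullet}$) forces $|K\cap C_{u_L}|=|J_F\cap\Sigma_{u_L}|$, $|K\cap C_{u_R}|=|J_F\cap\Sigma_{u_R}|$, and $|K\cap\Sigma_u|=|J_F\cap\Sigma_u|$, and because $T_{u_L}',T_{u_R}'$ were chosen as the \emph{top} columns of $\Sigma_{u_L},\Sigma_{u_R}$, the unique $K$ maximizing $\sum_{j\in K}j$ under these size constraints is $K^*=J_F$. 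Hence the leading-in-$x_i$ coefficient of $\det(\A_{R_u,J})$ equals $\pm C_{J_F}\cdot\det(\A_{R_{u_L},J_{u_L}})\cdot\det(\A_{R_{u_R},J_{u_R}})$ (by block diagonality of the residual), which is a nonzero polynomial in $x_1,\ldots,x_{i-1}$ by the induction hypothesis applied to $u_L,u_R$ with forbidden sets $T_{u_L}',T_{u_R}'$.

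Applying the invariant at the root, where $\Sigma_\text{root}=\emptyset$ forces $T=\emptyset$ and $\delta_\text{root}=0$ makes $J=C_\text{root}$ square, yields $P\not\equiv 0$ of total degree $O(n^2)$; Schwartz--Zippel over $|\F|\ge n^6$ then bounds the vanishing probability by $O(n^2/|\F|)\le n^{-3}$, which easily absorbs the concentration failure. The main obstacle is that the number of level-$i$ filler rows, $(2+\sqrt{2})\Delta_{i-1}$, strictly exceeds $|\Sigma_u|=2\sqrt{2}\Delta_{i-1}$, so they cannot all be matched into $u$'s own supplementary columns; the trick is to let the parent borrow slack from its children, parameterize the induction by the forbidden set $T$, and choose $T_{u_L}',T_{u_R}'$ as top columns so that the leading-in-$x_i$ term uniquely isolates the block-diagonal contribution without cancellation.
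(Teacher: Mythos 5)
Your proposal is correct, and it takes a route that is recognizably the same in spirit — a bottom-up induction over the tree of blocks exploiting the Cauchy-matrix structure — but mechanically different from the paper's argument. The paper phrases the induction hypothesis probabilistically ("with probability $\ge 1-i/n^2$ over $x_1,\dots,x_i$, removing $F_u$ from the block submatrix leaves a full-rank square matrix"), fixes $x_1,\dots,x_{i-1}$, performs Gaussian elimination on the children's submatrices to turn them into identities and clear the corresponding entries in the level-$i$ filler rows, and then applies Schwartz--Zippel at each level separately, union-bounding over blocks. You instead keep the entire determinant as a polynomial in all of $x_1,\dots,x_h$, state the invariant algebraically ("there exists $J\subseteq C_u\setminus T$ making $\det(\A_{R_u,J})$ a nonzero polynomial") parameterized by a forbidden subset $T\subseteq\Sigma_u$ with $|T|\le\delta_u$, isolate the leading coefficient in $x_i$ via Laplace expansion plus the "top-columns maximizer" argument, and apply Schwartz--Zippel only once at the end. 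Your forbidden-set $T$ is in effect the paper's $F_u$ (the top $\delta_u$ supplementary columns), and your "borrow slack from the children" counting via $\delta_u=\delta_{u_L}+\delta_{u_R}+\Delta_i-2\Delta_{i-1}$ mirrors the role of $F_{t_1}\cup F_{t_2}$ in the paper's choice of columns for $M_{Q_0,T}$. What you gain is a cleaner algebraic statement (no per-level probability accumulation, a single Schwartz--Zippel bound $O(n^2/|\F|)$); what the paper's Gaussian-elimination phrasing buys is that the "leading coefficient is a Cauchy determinant" claim is visibly preserved under row operations without needing the combinatorial uniqueness-of-maximizer argument you carry out for $K^*=J_F$. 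Both are valid, and the counting, the use of \cref{lem:cauchy}, and the final Schwartz--Zippel step match the paper's.
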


Before we prove this lemma, we first assume it holds and continue to prove the main conclusion (\cref{lem:better_augmented_retrieval}) of this section.

\begin{proof}[Proof of \cref{lem:better_augmented_retrieval}]
    As we discussed earlier, given the matrix $\A$, the main task remaining is to sparsify $\A$. We apply the same approach as in \cref{section:retrieval}: For a row with $t$ non-zero entries, we assign $t$ augmented queries (rows) to sparsify it. Since there are $O(N\log n+n\log^2n)$ non-zero entries in $\A$, we have a sufficient number of augmented queries to sparsify the matrix.

    \smallskip

    To allow fast queries to augmented elements, given an augmented query, we need to decide the row and the non-zero entry in the row that it is assigned to in constant time. This can be done with the following careful assignment.
    
    Recall that the number of non-zero entries of a row is the number of non-zero columns of its block, which only depends on the level of the block.
    Each row in level 1 has $B + \Delta_1$ non-zero elements, so
    we assign the first $N (B + \Delta_1)$ augmented queries to level 1, each sparsifying one non-zero entry.
    For every other level, we assign a fixed number of $q \defeq \Theta(n \log n)$ augmented queries, in which a prefix of queries are used to sparsify the rows. (We have already calculated in the proof of \cref{clm:nonzero_entry} that each level contains $O(n \log n)$ non-zero entries, so $q$ augmented queries suffice.)
    Then, given the index of an augmented query, we can compute in constant time the level it corresponds to, as well as the exact position of the non-zero entry assigned to this augmented query.

    \smallskip

    Next, we focus on computing the value of the non-zero entries. Field addition and multiplication can be done in constant time given a lookup table of size $O(n^\eps)$. Apart from those operations, we also need to compute the power and the multiplicative inverse of some value, since every non-zero entry in $\A$ is of the form $x_i^j/(v+j)$. 
    
    In order to compute the power $x_i^j$, we first fix a primitive root $g$ of $\F$. In the sampling process, instead of directly sampling a non-zero $x_i$, we sample a number $a_i\in [1,|\F|)$ and let $x_i = g^{a_i}$. Then the task becomes to compute $x_i^j = g^{a_i\cdot j}$. We store the values $g^{i \cdot (n^{\eps})^j}$ in a lookup table for every $0 \le i < n^{\eps}$, $0 \le j < O(\log |\F| / (\eps \log n)) = O(1)$. Given this lookup table, when we need to compute $g^{a_i \cdot j}$, we write down $a_i \cdot j$ in the base-$n^{\eps}$ representation: $a_i \cdot j = \sum_{k} v_k \cdot (n^{\eps})^k$, then compute the product of the corresponding table entries $g^{v_k \cdot (n^{\eps})^k}$. The lookup table consists of $O(n^{\eps})$ words, and computing each power of $g$ takes $O(1/\eps) = O(1)$ time given the lookup table.

    Computing the multiplicative inverse of $v+j$ is more complicated, but we can circumvent this issue by multiplying the other non-zero entries in the row by $v+j$ instead. Specifically, if the non-zero entries of an augmented query (after sparsification) are $x_i^j/(v+j)$, $1$, and $-1$, then we will multiply this row by $v+j$, i.e., the actual coefficients used to answer this query will be $x_i^j, v+j$, and $-(v+j)$. This way, we don't have to compute multiplicative inverses to answer this query. This modification does not change the rank of the matrix.

    \smallskip

    Finally, we analyze the redundancy of the augmented retrieval. The redundancy comes from three sources: the $\poly\log n$-wise independent hash function, the random variables $x_1,\dots,x_h$, and the lookup table used to compute $x_i^j$. The last two can be stored in $\tilde O(n^\eps)$ bits of space, as we explained in the previous paragraphs.
    
    For the $\poly \log n$-wise independent hash function, we use the construction from \cite{thorup2013simple}.
    
    \begin{lemma}[{\cite[Corollary 3]{thorup2013simple}}]\label{lem:hash_function}
    There is a hash function from universe $[U]$ to range $[R]$ that uses $o(U^{\eps})$ bits of space, can be evaluated in $O(1/\eps)$ time, and is $U^{\Omega(\eps^2)}$-wise independent with universal probability $1 - o(1 / U^\eps)$.
    \end{lemma}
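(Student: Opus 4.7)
The plan is to reduce to Thorup's double-tabulation framework with a slightly smaller parameter $\eps' < \eps$ (e.g., $\eps' = \eps/10$), which lets us afford tables of size $|\Sigma| = U^{\eps'}$ while still delivering $U^{\Omega(\eps^2)}$-wise independence. The key conceptual point is that simple tabulation alone is only $3$-wise independent, but composing two independent random tabulations is known to explode the independence, producing a function that behaves like a truly random hash on sets of size $|\Sigma|^{\Omega(1)} = U^{\Omega(\eps^2)}$ with high probability over the tables.

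In more detail, represent $x \in [U]$ as a tuple of $c = \lceil 1/\eps' \rceil$ characters in $\Sigma = [U^{\eps'}]$. Sample random tables $T_1,\dots,T_c \colon \Sigma \to \Sigma^{d}$ for a suitable $d = \Theta(c)$, and define the intermediate map $s(x) = \bigoplus_{i=1}^{c} T_i(x_i) \in \Sigma^{d}$. Then sample a second set of tables $T'_1,\dots,T'_d \colon \Sigma \to [R]$ and output $h(x) = \bigoplus_{j=1}^{d} T'_j\bigl(s(x)_j\bigr)$. Each layer evaluates with $O(c+d) = O(1/\eps)$ table lookups and XORs, so the total query time is $O(1/\eps)$. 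The total storage is $O\bigl((c+d)\cdot |\Sigma| \cdot \log\max(|\Sigma|,R)\bigr) = U^{\eps'} \cdot \poly(1/\eps, \log U)$ bits, which is $o(U^{\eps})$ since $\eps' < \eps$.

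The heart of the proof — and the main obstacle — is showing that the composite $h$ is $k$-wise independent for $k = U^{\Omega(\eps^2)}$ with probability $1 - o(1/U^{\eps})$ over the random tables. The classical \emph{peeling} strategy, going back to Siegel and sharpened by Thorup, proceeds as follows: with high probability over the first-layer tables $T_i$, for every key set $K$ of size $k$ the induced ``derived character'' vectors $\{s(x) : x \in K\}$ can be ordered $x^{(1)}, \ldots, x^{(k)}$ so that each $x^{(\ell)}$ has some coordinate $j_\ell \in [d]$ on which $s(x^{(\ell)})_{j_\ell}$ is distinct from $s(x^{(\ell')})_{j_\ell}$ for all $\ell' > \ell$. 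Given such a peeling order, processing the keys in reverse reveals a fresh, never-before-queried entry of $T'_{j_\ell}$ at each step, which is uniformly random and independent of everything fixed so far; this forces the joint distribution of $(h(x))_{x \in K}$ to be uniform on $[R]^k$, yielding the claimed $k$-wise independence.

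The principal technical work is therefore to prove that random first-layer tabulations are peelable on all $k$-sets simultaneously with probability $1 - o(1/U^{\eps})$. This is an obstruction-counting / bipartite expansion argument: enumerate minimal non-peelable configurations (subsets of $K$ whose derived-character coordinates are each doubly covered), bound the probability that any fixed configuration of size $s \le k$ is realized as a function of the free coordinates, and then union-bound over all configuration \emph{shapes} and all placements in $[U]$. Choosing $d = \Theta(c)$ slightly larger than $c$ makes the per-configuration probability geometrically small in $s$, so the union bound closes for $k = U^{\Omega(\eps^2)}$ with room to spare for the $o(1/U^{\eps})$ failure bound. Everything else — the $O(1/\eps)$ query time, the $o(U^{\eps})$ space, and the fact that the construction can be described and sampled in the word RAM model — follows directly from the tabulation layout.
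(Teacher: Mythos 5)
The paper does not prove this lemma at all: it is stated as a direct citation of Thorup's Corollary~3 (double tabulation) and used as a black box. Your sketch, by contrast, reconstructs the main architecture of Thorup's actual argument — represent keys over an alphabet $\Sigma$ of size $U^{\eps'}$ with $\eps'$ a constant fraction of $\eps$, first-layer tabulation into $\Sigma^{d}$ with $d = \Theta(c)$, second-layer tabulation into the range, and then a peeling/unique-witness argument showing that with high probability over the first layer every $k$-set of derived character vectors can be stripped one key at a time, so the second layer acts as a fresh uniform value at each step. This is indeed the shape of Thorup's proof, and the independence level $|\Sigma|^{\Omega(1)} = U^{\Omega(\eps^{2})}$ and the space bound $|\Sigma|\cdot\poly(1/\eps,\log U) = o(U^{\eps})$ come out exactly as you describe, so your outline is faithful to the cited source rather than a new route.

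That said, your write-up is a summary of the key lemmas, not a proof: it defers precisely the step that carries all the difficulty, namely the union bound over "minimal non-peelable configurations" showing that bad events occur with probability $o(1/U^{\eps})$. To actually close this you would need to (i) define what a minimal obstruction is (a sub-multiset of derived vectors where every used coordinate value appears at least twice), (ii) bound the number of shapes and placements of such an obstruction of size $s$ by roughly $\binom{U}{s}\cdot d^{O(s)}$, (iii) show the probability any fixed one is realized is at most $|\Sigma|^{-\Omega(ds)}$ using the randomness of the first-layer tables, and (iv) verify that for $d$ a large enough multiple of $c$ the sum over $s \le k$ is $o(U^{-\eps})$ while keeping $k = |\Sigma|^{\Omega(1)}$. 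There is also a small mismatch in the final layer — tabulation outputs naturally live in $\{0,1\}^{r}$ under bitwise XOR rather than in an arbitrary $[R]$ — which is standard to repair (hash into a power-of-two superset and reduce) but should be stated. None of this is a conceptual gap; it simply reflects that you are outlining a known theorem rather than proving it, which matches the paper's own treatment of citing it outright.
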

    
    Using the lemma, we can construct a hash function with more than $\poly\log n$-wise independence that uses $O(n^{\epsilon})$ bits of space, with success probability $1-1/n^\eps$. In order to boost the success probability from $1-1/n^\eps$ to $1-1/\poly n$, we can sample $\Theta(1/\eps)$ hash functions using \cref{lem:hash_function}, then use their sum as the output of our hash function. As long as one of the functions is $\poly\log n$-wise independent, so is the sum of all functions.
\end{proof}

Now, we continue to prove \cref{lem:nonsingularity}.

\begin{proofof}{\cref{lem:nonsingularity}}
    We first recall that, for each block $u$, $\delta_u$ is the difference between the number of columns and valid rows in $u$'s subtree; $F_u$ is defined as the last $\delta_u$ columns in $u$, which is (with high probability) a subset of $u$'s supplementary columns.

    We prove the lemma by induction with the following induction hypothesis at level $i$ ($1 \le i \le h$).
    \begin{hypothesis}
        Let $M$ be the submatrix spanned by all the valid rows (queries) and the columns assigned to $u$'s subtree.
        With probability $\ge 1 - i/n^2$ over the choices of $x_1, \dots, x_i$, for each level-$i$ block $u$, if we remove the columns $F_u$ from $M$, the resulting square matrix has full rank.
    \end{hypothesis}

    When $i = h$, the hypothesis implies \cref{lem:nonsingularity}.

    The base case $i=1$ is simple: For each level-1 block $u$, the submatrix we obtain is a Cauchy matrix, with each column multiplied by a power of $x_1$. Since $x_1 \ne 0$, this matrix is row independent.

    Now, we fix some $1 < i \le h$, and assume the hypothesis holds for $i - 1$. We fix a level-$i$ block $u$ and denote its children by $t_1, t_2$. Let the matrix in hypothesis be $M$. For $k \in \BK{1, 2}$, we let $Q_k, C_k$ be the set of valid rows (queries) and columns in the subtree of $t_k$, respectively; we let $Q_0$ be the rows in block $u$ and let $C_0$ be the supplementary columns of $u$. By definition, $\BK{Q_0, Q_1, Q_2}$ form a partition of the rows of $M$, while $\BK{C_1, C_2, C_0}$ form a partition of the columns.
    We will show that, assuming the induction hypothesis for level $i-1$, the hypothesis holds for $u$ with probability $\ge 1 - 1/n^3$. Then, taking a union bound over all the level-$i$ blocks would imply the hypothesis for level $i$.

    \smallskip

    We prove the nonsingularity by Gaussian elimination. First, we randomly sample $x_1,\dots,x_{i-1}$ and view each entry in the matrix as a polynomial in $x_i$. Then the entries on rows in $R_1$ and $R_2$ are constants. By the induction hypothesis, we can perform elementary row operations on each submatrix $M_{Q_k, C_k}$ to transform its first $|Q_k|$ columns to an identity matrix (see \cref{fig:elimination} (b)).

    \begin{figure}[ht]
      \newcommand{\subwidth}{0.22\textwidth}
      \newcommand{\subgap}{\hspace{-0.7em}}
      \newcommand{\subarrow}{\Rightarrow}
      \[
        \begin{matrix}
          \matwrap{\includegraphics[width=\subwidth]{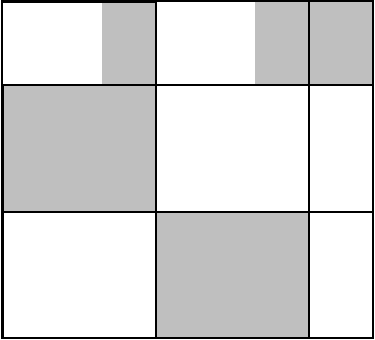}}
          & \subgap\subarrow
          & \subgap\matwrap{\includegraphics[width=\subwidth]{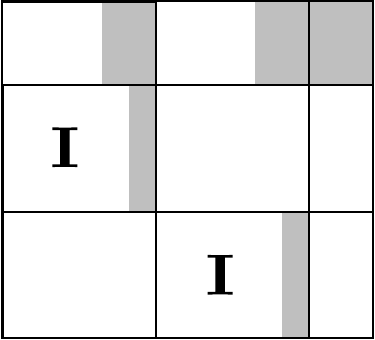}}
          & \subgap\subarrow
          & \subgap\matwrap{\includegraphics[width=\subwidth]{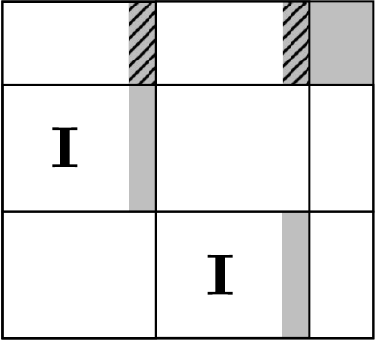}}
          & \subgap\subarrow
          & \subgap\matwrap{\includegraphics[width=\subwidth]{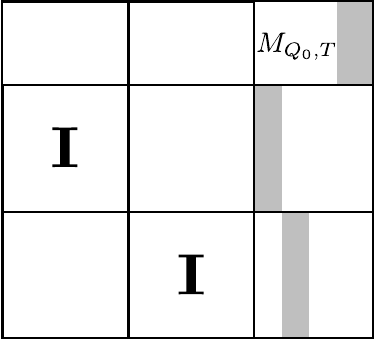}}
          \\
          \textup{(a)} &&\subgap \textup{(b)} &&\subgap \textup{(c)} &&\subgap \textup{(d)}
        \end{matrix}
      \]
      \caption{Gaussian elimination on matrix $M$.
        (a) The matrix is partitioned into $3 \times 3$ blocks, where they correspond to row sets $Q_0, Q_1, Q_2$ from top to bottom, and column sets $C_1, C_2, C_0$ from left to right. The third column set $C_0$ consists of the $2\Delta_i$ supplementary columns of block $u$. Gray area of the matrix represents unknown non-zero entries.
        (b) Perform elimination within each child's submatrix, obtaining two identity submatrices.
        (c) Use two children's submatrices to eliminate entries on $Q_0$ above the identity parts. The non-zero entries in the hatched area changed during this step.
        (d) Permute the columns. We show that $M_{Q_0, T}$ has full rank.
      }
      \label{fig:elimination}
    \end{figure}

    Then, we use the submatrices $M_{Q_k, C_k}$ ($k \in \BK{1, 2}$) to eliminate some entries on rows in $Q_0$ via elementary row operations, ensuring that rows in $Q_0$ have zero values on the columns occupied by the identity matrices (see \cref{fig:elimination} (c)).

    At this point, each row in $Q_0$ is only non-zero on the columns $F_{t_0} \cup F_{t_1}$, and $C_0$---the $2\Delta_u$ supplementary columns of $u$. In addition, for each non-zero entry on rows $v \in Q_0$, suppose it had value $x_i^j/(v+j)$ before the row operations. Then currently its value is still a polynomial of degree $j$, and the coefficient of $x_i^j$ is still $1/(v+j)$. This is because the previous elimination steps only added a polynomial of degree $<j$ to this entry, as can be easily verified.

    Let $T$ be the union of the columns $F_{t_0} \cup F_{t_1}$ and the first $2\Delta_i - |F_u|$ columns in $C_0$. In particular, when $i = h$, $T$ is just the union of $F_{t_0} \cup F_{t_1}$ since $\Delta_h = 0$. By permuting the columns, we see that it remains to show that $M_{Q_0, T}$ has full rank (see \cref{fig:elimination} (d)). Also, by definition of $|F_u|$, $M_{Q_0, T}$ is a square matrix.

    View $\det(M_{Q_0,T})$ as a polynomial in $x_i$. Its degree is $d \defeq\sum_{j\in T}j=O(n^2)$, and the coefficient of its highest term $x_i^d$ is the determinant of a Cauchy matrix, which is non-zero by \cref{lem:cauchy}. Therefore, $\det(M_{Q_0, T})$ is a non-zero polynomial; by the Schwartz-Zippel lemma, with probability $1-O(1/n^4)$ over the choice of $x_i$, $\det(M_{Q_0, T}) \ne 0$ and thus the induction hypothesis holds for block $u$. A union bound over all the level-$i$ blocks concludes the induction hypothesis, which further implies the lemma.
\end{proofof}

\paragraph{Extending to key-value dictionaries.}

So far, we have established a space-efficient \emph{membership} dictionary, which only answers if the queried key is in the key set or not. Next, we show that it is easy to extend this result to \emph{key-value} dictionaries, which also need to return a value associated with the queried key if the queried key is in the key set.

Our algorithm for membership dictionaries consists of two parts. After hashing all keys to buckets of size $B$, the first part (\cref{subsection:intra_bucket}) applies existing techniques from \cite{patrascu2008succincter,yu2020nearly} to encode the key set within each bucket using a spillover representation with only $O(1/n^2)$ bits of redundancy. Then, the second part (\cref{subsection:concat,sec:word_ram}) concatenates all spillover representations together without losing much space.
When we switch to key-value dictionaries, only the first part needs to be changed. To encode the key set within each bucket together with their associated values, we again apply black-box tools from \cite{yu2020nearly}, which achieve the same redundancy as the key-only case, and lead to a variant of \cref{lem:bucket_rep} that also stores associated values.
This, combined with the other parts of our algorithm, implies \cref{thm:intro} as a corollary. (Recall that $\OPT \defeq \log \binom{U}{n} + n \log \sigma$.)

\MainTheoremIntro*

\section{Discussion and Open Questions}

An important observation of this paper is that, although we do not know how to solve the retrieval problem with little redundancy, this problem becomes much simpler when we store the retrieval data structure together with an augmented array. We believe that this phenomenon will be useful for designing other data structures, and is worth further investigation.

Formally, we define the \defn{augmented redundancy} of a data structure to be the redundancy of storing this data structure together with an augmented array of arbitrary polynomial length.
Note that the augmented array is only \emph{informationally encoded} in the data structure, but is not required to be efficiently accessed.
This is a weaker requirement than what we achieve for the augmented retrieval problem in \cref{section:retrieval}.
Based on this notion, we propose the following open questions.

\paragraph{Separating augmented redundancy and (regular) redundancy.}

It is clear that the augmented redundancy of any data structure problem is not greater than its regular redundancy. However, it is an open question if they are always equal, i.e., whether there exist data structure problems for which, under the same query time constraint, the optimal augmented redundancy is (provably) strictly better than the regular redundancy.

For several problems such as dictionaries and retrieval data structures, there is a gap between the \emph{best-known} augmented redundancy and regular redundancy, but this is not sufficient to form a theoretical separation between the two concepts. Similar gaps also exist in lower bounds: For several important problems such as \emph{permutation}\footnote{The premutation problem requires us to store a permutation $\pi$ of the set $[n]$, supporting efficient queries of $\pi(i)$ and $\pi^{-1}(i)$ for each $i \in [n]$.} and \emph{access/select}\footnote{The access/select problem requires us to store a string of length $n$ with alphabet size $|\Sigma| = \poly n$, supporting efficient queries of \emph{access} (returning the $i$-th character in the string) and \emph{select} (reporting the $i$-th occurrence of a specific symbol $c$ in the string).} \cite{golynski2009cell}, the known lower bounds on regular redundancy do not naturally generalize to augmented redundancy.

\paragraph{Additivity of augmented redundancy.}

The second open question is the optimal redundancy of jointly storing \emph{multiple} independent data structures.
This problem is natural and fundamental, but surprisingly challenging, as directly concatenating the data structures may not give the optimal result---earlier discussion suggests that storing some data structure together with an independent (augmented) array may improve its redundancy, which is captured by the augmented redundancy.
However, we conjecture that using augmented arrays is the only way to reduce redundancy when we jointly store multiple data structures, i.e., the augmented redundancies for individual data structures add up to that for the joint data structure (asymptotically), stated as follows.

\begin{conjecture}
Assume $D_1, \ldots, D_k$ are $k$ (independent) data structure problems, where $D_i$ answers queries in $t_i$ time and has an optimal augmented redundancy $R_i$. For a joint data structure $A$ that can answer queries for $D_i$ in $O(t_i)$ time for all $i \in [k]$, we conjecture the augmented redundancy for $A$ is at least $\Omega\bk[\big]{\sum_{i=1}^k R_i}$.
\end{conjecture}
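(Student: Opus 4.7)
The plan is to attempt an encoding/averaging argument of the sort standard in data-structure lower bounds, with the caveat that the step from $R \geq \max_i R_i$ to $R \geq \Omega(\sum_i R_i)$ appears to be the real obstacle—presumably why the authors state this only as a conjecture rather than a theorem. Fix any joint data structure $A$ that solves $(D_1, \ldots, D_k)$ using an augmented array of polynomial length $N$ and total space $s(A)$, so its augmented redundancy is $R = s(A) - \sum_i H_i - Nw$, where $H_i = \log|\mathcal{I}_i|$ is the entropy of $D_i$'s input space $\mathcal{I}_i$.

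I would first establish a projection lemma: for each fixed tuple $(I_j^\star)_{j \neq i}$, restricting $A$ to configurations consistent with those fixings yields a data structure for $D_i$ alone, in which the original augmented array concatenated with $(I_j^\star)_{j \neq i}$ plays the role of the new augmented array. Averaging its space over uniformly random $(I_j^\star)_{j \neq i}$ gives expected space $s(A) - \sum_{j \neq i} H_j$, hence expected augmented redundancy exactly $R$. Optimality of $R_i$ then forces $R \geq R_i$ for every $i$, so $R \geq \max_i R_i$. To upgrade $\max$ to $\sum$, the natural next move is a telescoping chain-rule argument: define $R^{(i)}$ as the augmented redundancy of the residual problem $(D_i, \ldots, D_k)$ obtained by marginalizing the first $i-1$ inputs out of $A$, and try to show the incremental loss $R^{(i)} - R^{(i+1)} \geq \Omega(R_i)$, which telescopes to the desired bound.

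The hard part is exactly this incremental gap. Because an augmented array can hold polynomially many arbitrary bits, when we peel off $D_i$ the previously fixed inputs $(I_j^\star)_{j < i}$ simply re-enter as free augmented data for the residual structure, so a single pool of slack inside $A$ can simultaneously witness the $R_i$ lower bound for every $i$. That is precisely the phenomenon that collapses the sum back to the maximum in the naive argument and explains why no direct reduction seems to produce an additive bound. Breaking this symmetry seems to require a genuinely new information-theoretic tool—perhaps an incompressibility statement forcing distinct subproblems' redundancies to occupy disjoint ``locations'' of $A$'s encoding, or a cell-probe-style lower bound that can charge queries to $D_i$ against bits attributable specifically to $D_i$. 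With current machinery I do not see a route; my best guess is that one should first pin down a non-trivial special case (e.g., $k = 2$ with $D_1 = D_2$ being the retrieval problem) to isolate what additional ingredient is really needed before attempting the general statement.
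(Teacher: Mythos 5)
The statement you are trying to prove is stated in the paper only as a \emph{conjecture}; the paper offers no proof, and you correctly recognize that there is none to be had with current techniques. Your proposal is therefore not a proof but a diagnosis, and on that level it is accurate: the reason the authors flag this as an open question is precisely the obstacle you isolate, namely that a single pool of slack inside $A$ can simultaneously serve as the ``witness'' for each individual $R_i$ lower bound, collapsing the natural reduction to $\max_i R_i$ rather than $\sum_i R_i$.

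Your projection lemma and the resulting bound $R \ge \max_i R_i$ are morally sound, though one detail deserves more care. When you fix $(I_j^\star)_{j \neq i}$ and reinterpret those inputs as part of the augmented array for the residual $D_i$-problem, the inputs live in a structured product space $\prod_{j\neq i}\mathcal{I}_j$ rather than in an arbitrary bit-string space; to make the argument rigorous you need a near-lossless bijection into $\{0,1\}^{\sum_{j\neq i} H_j + O(1)}$ so that the fixed inputs legitimately qualify as ``augmented'' data of the right length, and you need the polynomial length cap on augmented arrays to accommodate $\sum_{j\neq i} H_j$. Neither is a serious issue, but as written the averaging step (``expected space $s(A)-\sum_{j\neq i}H_j$'') conflates space, which is a fixed quantity, with redundancy, which is what actually drops by $\sum_{j\neq i}H_j$ after reclassifying those inputs as augmented data; the conclusion $R \ge R_i$ survives once this is phrased carefully. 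The genuinely missing ingredient is exactly where you stop: a mechanism that charges disjoint portions of $A$'s slack to distinct subproblems. No such tool appears in the paper, and the conjecture remains open, so there is nothing further to compare your proposal against.
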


Suppose a data structure has multiple independent components, then if the conjecture is true, proving an augmented redundancy lower bound for each component is sufficient to derive an augmented redundancy lower bound (hence, a redundancy lower bound) for the whole data structure.

\paragraph{Augmented redundancy for classical problems.}

It is worth studying the optimal augmented redundancies for classical problems, especially whether they are equal to the regular redundancies. For some problems such as \emph{rank/select} \cite{patrascu2010cellprobe,viola2023new} and \emph{range minimum queries (RMQ)} \cite{liu2020lower,liu2022nearly}, the proofs of the nearly-tight lower bounds for regular redundancy naturally generalize to augmented redundancy, which implies that their augmented and regular redundancies are (almost) equal. For other classical problems including \emph{permutation} and \emph{access/select}, the known lower bounds \cite{golynski2009cell} do not generalize, and it remains as open questions to determine their augmented redundancies.
Moreover, it is also interesting to find other classical problems where the augmented array helps reduce the redundancy (like the retrieval problem), and see if these results help construct data structures with lower \emph{regular} redundancy (like in this paper).

\bigskip

The following two open questions are unrelated to augmented redundancy, and arose from our approach of static dictionary.

\paragraph{Redundancy of static retrievals.} 

In \cref{lem:better_augmented_retrieval}, we presented a static (augmented) retrieval data structure with $\poly\log n$ bits of redundancy which stores values of size $O(\log n)$ bits and supports constant-time queries, assuming we have an augmented array. It is of great interest to see whether we can achieve nice time and space bounds without using the augmented array.

Previous works on static retrieval data structures have made significant progress in the regime where the values are small: For instance, \cite{dietzfelbinger2019constanttime} presented a retrieval data structure which, when the values have $O(r)$ bits, has $O(\log n)$ redundancy and answers queries in $O(r)$ time. This algorithm is efficient when the values are of constant size, but is too slow for values of larger size. 
In the case where the values have $O(\log n)$ bits, even achieving $o(n)$ bits of redundancy and constant query time is open.

\paragraph{Construction time of static dictionaries.}
The construction time of our word RAM dictionary is $\poly n$, where the bottleneck is solving the linear equations in the retrieval data structure. In comparison, previous dictionaries \cite{patrascu2008succincter}, \cite{yu2020nearly} only require near-linear construction time. It is interesting to see whether the construction time can be improved.

\bibliographystyle{alpha}
\bibliography{reference.bib}

\appendix
\section{Proof of Lemma \ref{lem:random_matrix_rank}}
\label{app:random_matrix_rank}
In this appendix, we prove \cref{lem:random_matrix_rank}.

\matrixRank*

\begin{proof}
    We define a $n\times n$ matrix $P$ of polynomials in variables $X_{i,j}$ ($1\le i\le U$, $1\le j\le n$): $P_{i,j} = X_{i,j}$ if the entry $(i,j)$ is chosen in the sampling process, and $P_{i,j}=0$ otherwise. We can see that sampling $M$ is the same as first sampling $P$, then replacing each variable $X_{i,j}$ by a uniform random element in $\F$.

    We first show that $P$ is row independent with good probability, then the result for $M$ follows. This is because $\det(M)$ is the evaluation of $\det(P)$ on a random input, which is non-zero with probability $1-(\deg \det(P))/|\F|$ from the Schwartz-Zippel lemma (e.g., \cite[Theorem 7.2]{Motwani_Raghavan_1995}).

    To show the nonsingularity of $P$, we only have to show that the $n \times n$ bipartite graph $G = L \dot\cup R$ with $(\ind[P_{i,j} \ne 0])_{i,j}$ as its incidence matrix has a perfect matching, since a perfect matching contributes a monomial to the determinant that cannot be cancelled out. To show this, we recall Hall's theorem \cite{Hall1935OnRO}, which states that an $n \times n$ bipartite graph $L \dot\cup R$ has a perfect matching if and only if for every subset $S\subset L$, its neighborhood has size no smaller than itself, i.e., $|N(S)|\ge|S|$.

    We now fix two sets $S,T\subset [n]$ such that $|S|=|T|+1$, and study the probability that $N(S)\subset T$. This happens with probability $(|T|/n)^{|S|t}$. If we don't have a perfect matching, then $N(S)\subset T$ for at least one such pair of sets $S,T$, so a union bound over all the set pairs upper bounds the error probability:
    \begin{align*}
        \Pr[\text{no perfect matching}]&\le \sum_{i=1}^{n}\binom{n}{i}\binom{n}{i-1}\bk*{\frac{i-1}{n}}^{it}
    \end{align*}
    We split the summation into two parts: $i< n/2$ and $i\ge n/2$, and bound them separately. For the first part:
    \begin{align*}
        \sum_{i<n/2}\binom{n}{i}\binom{n}{i-1}\bk*{\frac{i-1}{n}}^{it} &\le \sum_{i<n/2}n^{2i-1}\bk*{\frac{1}{2}}^{it} \\
        &\le \sum_{i<n/2}n^{2i-1}\bk*{\frac{1}{n}}^{10i}\le \frac{1}{n}.
    \end{align*}
    For the second part (recall that $\bk*{\frac{n-1}{n}}^n<\frac{1}{e}$ for any positive $n$):
    \begin{align*}
        \sum_{i\ge n/2}\binom{n}{i}\binom{n}{i-1}\bk*{\frac{i-1}{n}}^{it} 
        &\le \sum_{i\ge n/2}n^{2n-2i+1}\bk*{\frac{i-1}{i}}^{it}\bk*{\frac{i}{n}}^{it} \\
        &\le \sum_{i\ge n/2}n^{2n-2i+1}n^{-10 \log e}\bk*{\frac{i}{n}}^{it} \\
        &\le n^{2n-9}\sum_{i\ge n/2}n^{-2i}\bk*{\frac{i}{n}}^{it}.
        \numberthis \label{eq:second_part_app}
    \end{align*}
    Let $f(i)=n^{-2i}(\frac{i}{n})^{it}$, and compare two adjacent terms $f(i),f(i+1)$ for $i\ge n/2$:
    \begin{align*}
        \frac{f(i)}{f(i+1)}&=n^2\bk*{\frac{i}{i+1}}^{(i+1)t}\bk*{\frac{n}{i}}^{t}\le n^2\bk*{\frac{2}{e}}^{t} < 1.
    \end{align*}
    Therefore, $f(i)$ is increasing in $i$. Since $f(n)=n^{-2n}$, we have
    \begin{align*}
        \textup{\cref{eq:second_part_app}}\le n^{2n-9}\cdot n\cdot n^{-2n}\le \frac{1}{n}.
    \end{align*}
    So the overall probability of $P$ being singular is $o(1)$.
\end{proof}

\section{Proof of Lemma \ref{lem:random_permutation}}
\label{app:random_permutation}
In this appendix, we prove \cref{lem:random_permutation}.

\randomPermutation*

Our construction is the same as Lemma 5.1 of \cite{li2024dynamic}, with the parameters being slightly different.

\begin{lemma}[Lemma 5.1 of \cite{li2024dynamic}, modified]
    Let $s,L$ be integers where $1\le s<L$. There is a family of hash functions $\mathcal{H}$ in which every member $h:\{0,1\}^L\to\{0,1\}^L$ is a bijection, satisfying:

  \begin{enumerate}[label=\textup{(\alph*)}]
  \item For any $h \in \mathcal{H}$ and any input $x \in \BK{0, 1}^L$, $h(x)$ and $h^{-1}(x)$ can be evaluated in $O(1)$ time.
  \item It takes $O(2^{\eps L})$ bits to store an $h \in \mathcal{H}$, where $\eps=\Omega\bk[\Big]{\sqrt{\frac{\log L}{L-s}}}$.
  \item\label{enum:no_overflow_in_lemma} For $n \ge 2^{s} \cdot s^4$ different inputs $x_1, \ldots, x_n$, if we divide $h(x_1), \ldots, h(x_n)$ into equivalent classes according to the first $s$ bits of $h(x_i)$, then with probability $\ge 1 - \frac{1}{4n^2}$, the number of elements in any equivalent class is between
    $\Bk{ \frac{n}{2^{s}} - \frac{1}{5} \bk{\frac{n}{2^{s}}}^{2/3}, \, \frac{n}{2^s} + \frac{1}{5} \bk{\frac{n}{2^s}}^{2/3}}$.
  \end{enumerate}
\end{lemma}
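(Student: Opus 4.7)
The plan is to construct $h$ as in [li2024dynamic, Lemma 5.1]---a constant-round Feistel network whose round functions come from the high-independence hash family of \cref{lem:hash_function}---and re-verify the three properties with the slightly different parameter regime of the present statement. Concretely, writing each $x\in\{0,1\}^L$ as $(x_a,x_b)\in\{0,1\}^s\times\{0,1\}^{L-s}$, I would sample round functions $f_1,f_2:\{0,1\}^{L-s}\to\{0,1\}^s$ and $g_1,g_2:\{0,1\}^s\to\{0,1\}^{L-s}$ independently from \cref{lem:hash_function} with parameter $\eps$, so each round function uses $O(2^{\eps L})$ bits of storage, evaluates in $O(1)$ time, and is $2^{\Omega(\eps^2(L-s))}$-wise independent. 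Define $h=R_4\circ R_3\circ R_2\circ R_1$ with the standard alternating Feistel rounds $R_1:(a,b)\mapsto(a\oplus f_1(b),b)$, $R_2:(a,b)\mapsto(a,b\oplus g_1(a))$, $R_3:(a,b)\mapsto(a\oplus f_2(b),b)$, and $R_4:(a,b)\mapsto(a,b\oplus g_2(a))$. Each round is a bijection with an explicit $O(1)$-time inverse, so properties (a) and (b) follow immediately; the bound $\eps=\Omega(\sqrt{\log L/(L-s)})$ is calibrated exactly so the round functions are $L^{\Omega(1)}$-wise independent while fitting inside the $O(2^{\eps L})$-bit budget.

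For property (c), fix a set $\{x_i=(a_i,b_i)\}_{i=1}^n$ with $n\ge 2^s s^4$ and condition on the ``first-half'' randomness $(f_1,g_1)$. Let $b_i^\star\defeq b_i\oplus g_1(a_i\oplus f_1(b_i))$ be the intermediate $b$-half after two rounds, so that the final first-$s$ bits of $h(x_i)$ equal $a_i''\defeq(a_i\oplus f_1(b_i))\oplus f_2(b_i^\star)$. A direct check shows that whenever $b_i^\star=b_j^\star$ for distinct $i,j$, the pre-shift values $a_i\oplus f_1(b_i)$ and $a_j\oplus f_1(b_j)$ must differ (otherwise $x_i=x_j$); hence within any collision class $C_j=\{i:b_i^\star=c_j\}$, the $|C_j|$ inputs occupy $|C_j|$ \emph{distinct} buckets under any outcome of $f_2(c_j)$.

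Fix any target bucket $\beta\in\{0,1\}^s$ and let $Y_j\defeq\ind[f_2(c_j)\in\{\beta\oplus a_i\oplus f_1(b_i):i\in C_j\}]$. Then the number of $x_i$'s landing in bucket $\beta$ equals $\sum_j Y_j$, and since the $c_j$ are pairwise distinct representatives, the $Y_j$'s are $k$-wise independent Bernoullis with $\sum_j\E[Y_j]=\sum_j|C_j|/2^s=n/2^s=\mu$. Applying the Chernoff-Hoeffding bound for $k$-wise independent Bernoulli sums \cite{schmidt1995chernoffhoeffding} with $k=\Theta(\log n)$ bounds $\Pr[|\sum_jY_j-\mu|\ge\mu^{2/3}/5]\le 1/(4\cdot 2^s n^2)$, and a union bound over the $2^s$ buckets yields the claimed $1-1/(4n^2)$ guarantee. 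Matching $k=\Theta(\log n)\le O(L)$ against the $2^{\Omega(\eps^2(L-s))}$-wise independence of \cref{lem:hash_function} applied to the $(L-s)$-bit universe (the storage bottleneck) pins down exactly $\eps=\Omega(\sqrt{\log L/(L-s)})$.

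The hard part will be carefully tracking constants in the $k$-wise Chernoff-Hoeffding tail in the tight regime $\mu\ge s^4$, which is looser than the $\mu\ge\log^4 n$ regime of \cref{lem:random_permutation}, and verifying that conditioning on $(f_1,g_1)$ does not introduce pathological class structures that inflate the tail. I expect this requires a case split on the magnitude of $s$ versus $\log n$: when $s$ is not too small compared to $\log n$, the bound $\mu\ge s^4$ directly dominates $\log^3 n$ and the Chernoff exponent is controlled; when $s\ll\log n$, the ratio $n/2^s$ is close to $n$ itself and the tail is even more benign. The boundary case may additionally need a separate bound on $\max_j|C_j|$ (using pairwise independence of $g_1$) so that the Bernstein-type variance term controls the concentration rather than the maximum summand.
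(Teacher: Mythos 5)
The paper does not actually prove this lemma: it cites Lemma~5.1 of \cite{li2024dynamic} verbatim for the construction and the concentration analysis, and the entire ``proof'' in the appendix is a two-paragraph remark explaining that the only place the original hypothesis $s \le (1-\Omega(1))L$ is used is in checking that the $O(2^{(L-s)\eps^2/4})$-wise independent round function is independent over $k = (10\ln 2)L$ elements, and that replacing this hypothesis by $s<L$ forces $\eps = \Omega(\sqrt{\log L/(L-s)})$. You, by contrast, reconstruct the cited proof from scratch. Your reconstruction is consistent with what the paper implies is inside \cite{li2024dynamic}: a constant-round Feistel network whose round functions come from a bounded-independence family, with the first-$s$-bit concentration handled by conditioning on the early rounds and applying a $k$-wise Chernoff bound to the last non-trivial round. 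In particular, your $\eps$ calculation (matching $k=\Theta(L)$ against the $2^{\Omega(\eps^2(L-s))}$-wise independence of the $(L-s)$-bit-universe round function) lands exactly on the same $\eps = \Omega(\sqrt{\log L/(L-s)})$ tradeoff the paper highlights, so you have correctly identified the one nontrivial modification.

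Two remarks on the details you flag as ``hard parts.'' First, your worry about needing a separate bound on $\max_j |C_j|$ is unnecessary: since the $a_i'$ values within any class $C_j$ are distinct elements of $\{0,1\}^s$, each $|C_j|\le 2^s$ automatically, and more importantly $\Var(\sum_j Y_j)\le \sum_j \E[Y_j]=\mu$ regardless of the class-size distribution, because each $Y_j$ is a Bernoulli with $\Var(Y_j)\le\E[Y_j]$. So the Bernstein-type concern does not arise; the SSS bound with $k\le\delta^2\mu$ already suffices. Second, the real delicate point is the interaction between the deviation parameter and $k$: you set $\delta=\mu^{-1/3}/5$, so the SSS bound requires $k\lesssim\delta^2\mu=\mu^{1/3}/25$, and the per-bucket tail is $e^{-\Omega(k)}$. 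To union-bound over $2^s$ buckets with target $1/(4n^2)$, you need $k\gtrsim\log(2^s n^2)=\Theta(\log n)$, hence $\mu\gtrsim\log^3 n$. The hypothesis $\mu\ge s^4$ delivers this only because $\log n = s + \log\mu$, so $\mu\ge s^4$ gives $\mu^{1/3}\ge s^{4/3}\gg s$ and one also needs $\mu^{1/3}\gg\log\mu$, both of which hold asymptotically; but the constants are tight in exactly the regime $n\approx 2^s s^4$, which matches the caveat in the original lemma. This is a real issue to track, but it is an issue inherited from \cite{li2024dynamic}, not one introduced by your reconstruction.
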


In their original lemma, it is required that $s\le (1-\Omega(1))L$, and $\eps$ is a constant independent of $s$ and $L$. However, their proof can be easily generalized to the case where $L - s$ is smaller for the following reason.
In their proof, the only place that uses the condition $s\le (1-\Omega(1))L$ is when showing that a $O(2^{(L-s)\eps^2/4})$-wise independent hash function is independent over any subset of $k\defeq (10\ln 2)L$ elements, i.e., that $k=O(2^{(L-s)\eps^2/4})$. When $L-s=\Omega(L)$, this easily holds. However, when we only have $s<L$, then in order for this to hold, we need to set $\eps=\Omega(\sqrt{\log L/(L-s)})$.

When we apply this to \cref{lem:random_permutation}, we need to set $L=\log U$ and $s=\log (n/B)$, then $L - s = \log (UB/n) = \Omega(\log B)$. Therefore, setting $\eps=\Omega(\sqrt{\log \log U/\log B})$ suffices.

\end{document}